\documentclass[a4paper,onecolumn,11pt]{quantumarticle}
\pdfoutput=1
\usepackage{amsfonts,amsmath,amsthm,amssymb,amscd}
\usepackage[utf8]{inputenc}
\usepackage[english]{babel}
\usepackage[T1]{fontenc}
\usepackage[colorlinks=true,
            linkcolor=black,
            citecolor=black,
            urlcolor=black]{hyperref}
\usepackage{cite}
\usepackage{graphicx,tikz,floatflt}
\usepackage{blkarray}
\usepackage{comment}
\usepackage{mathtools}

\DeclarePairedDelimiter\floor{\lfloor}{\rfloor}

\newcommand{\ket}[1]{\ensuremath |{#1}\rangle}
\DeclareMathOperator{\Tr}{Tr}
\usepackage{authblk}

\newtheorem{theorem}{Theorem}[section]

\newtheorem{lemma}[theorem]{Lemma}
\newtheorem{proposition}[theorem]{Proposition}
\newtheorem{corollary}[theorem]{Corollary}

\theoremstyle{definition}
\newtheorem{definition}[theorem]{Definition}
\newtheorem{example}[theorem]{Example}

\theoremstyle{remark}
\newtheorem{remark}[theorem]{Remark}

\title{Quantum Lego Power-up: Designing Transversal Gates with Tensor Networks}

\author[1,2]{ChunJun Cao}
\author[3]{Brad Lackey}

\affil[1]{Department of Physics, Virginia Tech, Blacksburg, VA, USA 24061}
\affil[2]{Virginia Tech Center for Quantum Information Science and Engineering, Blacksburg, VA 24061, USA}
\affil[3]{Microsoft Quantum, Redmond, WA, USA}

\date{}
\begin{document}
 
\maketitle

\begin{abstract}
Transversal gates are the simplest form of fault-tolerant gates and are relatively easy to implement in practice. Yet designing codes that support useful transversal operations---especially non-Clifford or addressable gates---remains difficult within the stabilizer formalism or CSS constructions alone. We show that these limitations can be overcome using tensor-network frameworks such as the quantum lego formalism, where transversal gates naturally appear as global or localized symmetries. Within the quantum lego formalism, small codes carrying desirable symmetries can be ``glued'' into larger ones, with operator-flow rules guiding how logical symmetries are preserved. This approach enables the systematic construction of codes with \emph{addressable} transversal single- and multi-qubit gates targeting specific logical qubits regardless of whether the gate is Clifford or not. As a proof of principle, we build new finite-rate code families that support strongly transversal $T$, $CCZ$, $SH$, and Gottesman's $K_3$ gates---structures that are challenging to realize with conventional methods. We further construct holographic and fractal-like codes that admit addressable transversal inter-, meso-, and intra-block $T$, $CS$, and $C^\ell Z$ gates. As a corollary, we demonstrate that the heterogeneous holographic Steane-Reed-Muller black hole code also supports fully addressable transversal inter- and intra-block $CZ$ gates, significantly lowering the overhead for universal fault-tolerant computation.
\end{abstract}

\newpage
\tableofcontents

\newpage
\section{Introduction}

The stabilizer and tensor network (TN) formalisms are two influential and yet complementary frameworks in quantum information. In quantum coding theory, the correspondence between classical codes and quantum stabilizer codes such as the CSS construction have given the stabilizer formalism a clear advantage for code design, error simulation, and decoding. By contrast, tensor networks has been conspicuously absent in quantum code design and quantum coding theory where it is mostly relegated to tackling computational problems, such as decoding, especially for systems subjected to non-Pauli errors.  

Recent progress, however, highlights the broader potential of tensor networks in designing and characterizing quantum error correcting codes (QECCs) \cite{Ferris_2014}. Developments such as holographic codes~\cite{Pastawski_2015,Harris_2018} and tensor network codes~\cite{tnc,Farrelly_2022} show that TNs can provide new graphical intuition for constructing quantum codes. More generally, the quantum lego (QL) framework~\cite{QL,QL2} proposes a quantum-first approach where codes are built directly from smaller quantum code building blocks, rather than by ``quantizing'' classical codes. These methods extend naturally beyond Pauli stabilizer codes~\cite{Shen:2023xmh,Kukliansky:2024rvz,Cao:2022ybv}, enabling new tools such as efficient evaluations of weight enumerator polynomials~\cite{TNenum,QL2} and automated code discovery \cite{Su_2025,Mauron:2023wnl}. This work introduces a new application of tensor network methods: the construction of transversal logical gates, with a special focus on non-Cliffordness and addressability, a task that has remained challenging within stabilizer or CSS-based frameworks.

Transversal gates are depth-one circuits that provide one of the simplest and most practical forms of fault-tolerant logic. They are easy to implement on quantum hardware and incur significantly lower overhead than surgery-based techniques. Although transversal gates alone cannot yield universality~\cite{Eastin_2009}, identifying codes with transversal non-Clifford or addressable gates is central to scalable fault-tolerant architectures, whether combined with magic state distillation, code switching, pieceable fault tolerance, or heterogeneous concatenation. Yet discovering codes with non-trivial transversal gates is difficult. For example, quantum codes supporting transversal $T$ gates impose strong constraints on classical codes ~\cite{Rengaswamy_2020,Rengaswamy_20202,Camps_Moreno_2024}, which poses a serious challenge for classical coding theory. Far less is known for gates such as $SH$, $C^\ell P(\varphi)$, or higher-level Clifford hierarchy operations, but recent work have made progress in identifying codes with transversal $C^\ell Z$ gates~\cite{Nguyen:2024qwg,he2025quantumcodesaddressabletransversal,He:2025zbj,Zhu25,Guemard:2025asu,Golowich:2024ogv}. The difficulty grows further for codes with multiple logical qubits, such as general Low-Density Parity-Check (LDPC) codes, where logical operations must be targeted to selected qubits rather than applied indiscriminately~\cite{Fu:2025lbb,Guyot:2025lyj}.  

Tensor networks offer a natural language for addressing these challenges. When codes are expressed as TNs, transversal gates correspond to the unitary symmetries, and code design reduces to engineering networks with appropriate global or local symmetries. The QL framework realizes this idea by gluing together “atomic” codes that already support desirable gates. Using operator flow, these symmetries are then extended to the TN while preserving both transversality and addressability. In this way, the TN representation provides a graphical blueprint in which check operators, logical gates, and fault-tolerance properties emerge transparently. Recent applications have already produced families of codes with diverse geometries, including those with sparse parity checks reminiscent of quantum LDPC codes~\cite{QL,Cao:2025oep}.  

In this work, we extend QL to systematically construct codes with exotic transversal gates and fully addressable multi-qubit operations. By generating TNs with localized symmetries, we show how to realize Clifford and non-Clifford transversal gates, as well as addressable gates across multiple blocks. To enable this, we generalize the gluing procedure to localize compatible symmetries, and develop gadgets that propagate these symmetries through the network. Using these tools, we construct new code families and identify addressable transversal multi-qubit gates in holographic and iterated fractal codes.

The remainder of the paper is organized as follows. In Sec 2, we review the QL framework and explain how transversal single or multiqudit gates are mapped to symmetries of the system. In Sec. 3, we then introduce how these symmetries can propagate under generalized traces which are necessary for producing codes that produce (addressable) transversal gates that are non-Pauli. We also develop useful gadgets and local moves that will be useful for producing codes with addressable gates. In Sec.4.1, we apply this formalism and general guidelines to produce codes with global transversal gates, which can be relevant for quantum resource theory and state distillation. We construct a family of $[[3L+2,L,3]]$ and $[[L^2+4L, L^2,3]]$ codes with strongly transversal $SH$ or $K_3$ gates. We also show how finite rate codes with transversal $CCZ$ gates, $T$ gates, and their mixtures, can be constructed using Pauli deformed traces. We then generalize these constructions and prove that by gluing  Maximum-Distance-Separable (MDS) codes into a network that resembles Euclidean lattices, one can produce codes with strongly transversal gates and asymptotically unit encoding rate. 
In Sec. 4.2, we show how operator pushing can generate gates with more addressability, especially interblock and intrablock multi-qubit gates. We focus on two classes of codes as proof-of-principle examples, the holographic codes and iterated fractal codes. For the former, we consider finite rate holographic codes which can encode bulk qubits at different radii in the dual hyperbolic geometry, and black hole codes which encodes many logical qubits but they all live on the horizon. For the finite rate holographic codes, we show that interblock and intrablock multiqubit gates are addressable on bulk qubits in the same layer, i.e. roughly the same distance away from the center of the bulk. Gates that act on disjoint bulk qubits are also parallelizable. For the black hole codes, as the bulk qubits already lie in the same layer, all logical qubits can be acted upon by addressable gates. With suitable atomic legos with distance 3 or greater when shortened, these transversal gates are also fault-tolerant. For iterated fractal codes, all logical qubits are fully addressable. By iterating a partial concatenation procedure, one can show that these codes satisfy $[[n, O(n^{\alpha}), O(n^{\beta})]]$ scaling where $\alpha,\beta$ depend on the parameters of the atomic lego blocks and the branching ratio in the fractal tensor network. Similarly, fault tolerance can be achieved by choosing suitable lego blocks. Codes that permit simultaneous fully addressable $T, CS$ and  $CCZ$ gates can also be built with $\alpha=\beta\approx 0.224$. Although we mostly focus on qubit codes in our applications, the formalism applies equally well for qudit codes.

Other technical results and detailed examples are deferred to the Appendices.  

\section{Quantum Lego, symmetries, and transversal gates}
\label{sec:2}
Quantum Lego \cite{QL,QL2,Shen:2023xmh,TNenum,Cao:2025oep} is a tensor-network formalism tailored to the design and analysis of quantum error-correcting codes. The framework has two main components: (i) identifying and characterizing small “lego blocks,” namely codes or states with useful structures such as transversal gates, stabilizers, or favorable distances; and (ii) developing systematic moves for combining these blocks, allowing one to predict the properties of the resulting code—such as distance, rate, sparsity, and available fault-tolerant logical gates.

A quantum lego block is a tensor $\mathbf{V}$ which can be expanded into its coefficients $V_{i_1i_2\dots n_r}$ in a given basis. In its conventional graphical representation, a dangling leg is assigned to each of its free index (Fig.~\ref{fig:qlego}a). As shown in \cite{QL}, the specific role that the tensor plays can vary depending on how it is interpreted, where by contracting $V_{i_1 i_2\dots}$ with different vector or matrix bases, we can obtain an object that ranges from a state, to linear maps that can represent encoding isometries, projections, or co-isometries. Therefore, the tensor in QL should be treated as a fundamental object, not unlike a stem cell, that can differentiate into states, maps, or codes depending on how its indices or legs are treated. For notational purposes, we call a leg logical or physical depending on whether it is associated with an input or output degree of freedom in the map. For example, suppose all the legs have bond dimension 2 then the tensor corresponds to an object over qubits. If all $r$ legs are chosen as physical legs, then it represents a $r$-qubit state $|V\rangle$. If $k<r$ legs are used as logical legs and $r-k$ as physical legs, then it becomes a linear map $V$ whose exact interpretation, i.e., encoding isometry, projection, non-isometry, depend on the specific structure of the tensor. If $V$ is a map, then we often call $|V\rangle$ its Choi state, to be consistent with the notation in channel-state duality. 

\begin{figure}
    \centering
    \includegraphics[width=0.8\linewidth]{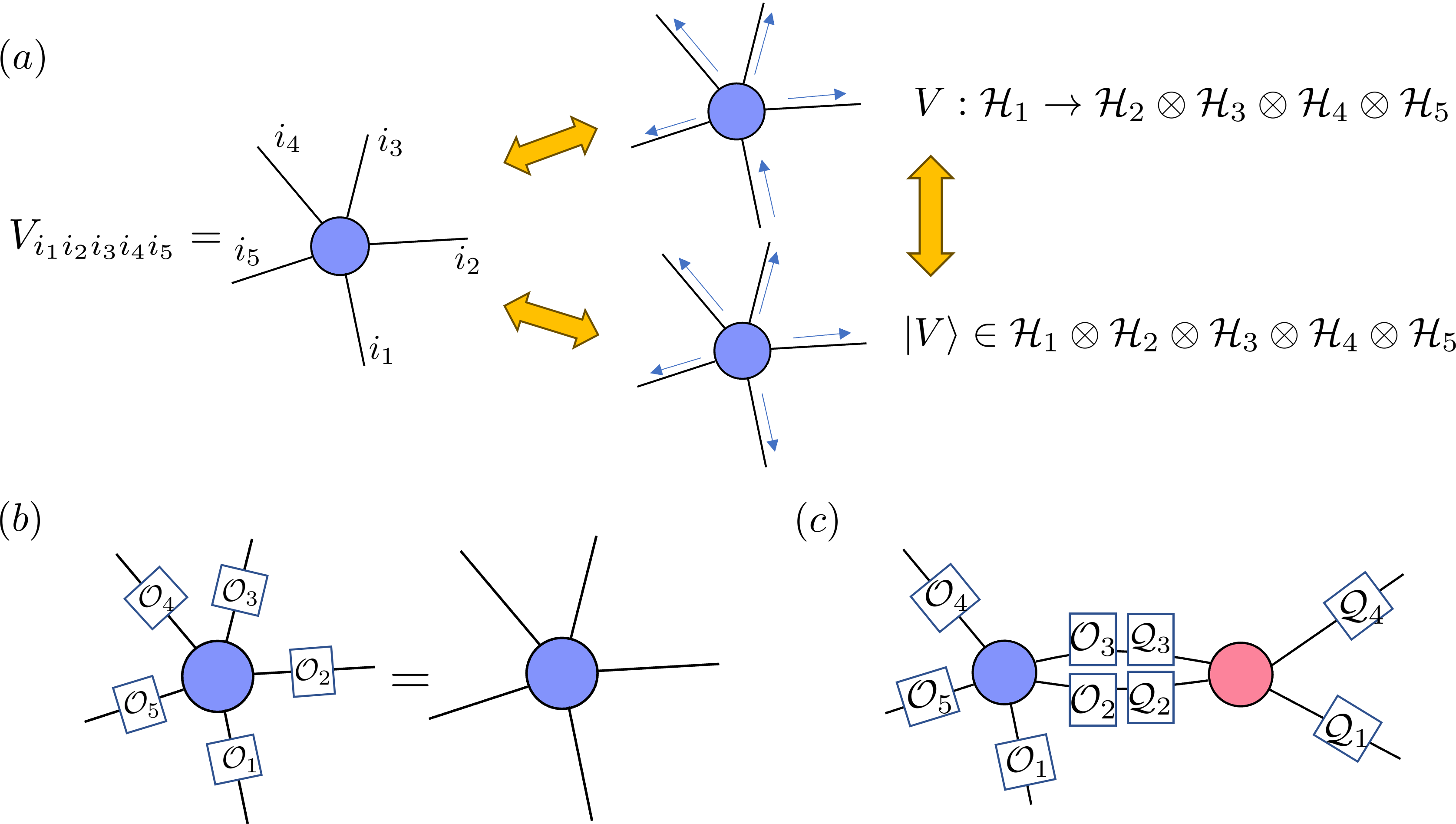}
    \caption{(a) A tensor as a quantum lego block. A tensor can represent a state or a map depending on how the legs are assigned. In-going arrows mark input/logical legs while out-going arrows mark output/physical legs. (b) Unitary symmetry of a tensor. (c) Symmetries of the tensor network is generated by operator matching.}
    \label{fig:qlego}
\end{figure}

Contraction of tensors allows us to generate bigger codes from smaller components. Pictorially, one connects the legs of the tensors corresponding to the indices that are contracted. A familiar example in quantum error correction is code concatenation, where the legs associated to the logical qubits of the inner code are contracted with legs that represent physical qubits of the outer code in a tree-like fashion \cite{Ferris_2014}. Of course, more general tensor contractions are also permitted which go beyond code concatenation \cite{Farrelly_2022,QL}.

In principle, any tensor can be contracted to create bigger codes or states but a lego block is most useful when it has a large number of symmetries. Namely, symmetries allow us to efficiently capture the process and outcome of tensor contractions. For instance, when the tensor is derived from a stabilizer code\footnote{or a generalization of stabilizer codes whose stabilizers need not be an Abelian subgroup of the Pauli group \cite{Shen:2023xmh}} the symmetries of the stabilizer group efficiently and completely characterize each lego block; moreover, the contractions using conjoining operations over check matrices can be efficiently computed---cubically in the number of qubits in the worst case. 
Additionally, and to the point of this paper, the symmetries of the tensors actually map to transversal gates of the code.

Transversal gates and addressable logical operations are best understood as unitary symmetries of the encoding tensor $\mathbf{V}$, or equivalently, that of the state \begin{equation}
    |V\rangle = \sum_{i_j} V_{i_1 i_2,\dots}|i_1,i_2,\dots\rangle.
\end{equation}

\begin{definition}
   Consider a state $|V\rangle$ over $n$ qudits (of possibly different local dimensions) and a unitary $$U= \bigotimes_{I\in \mathcal B} U_I$$ such that $U |V\rangle = |V\rangle$ where $\mathcal B$ is a partition of $\{1,2,\dots, n\}$, then we say that $U$ is a \emph{unitary symmetry} of the tensor $\mathbf{V}$ (relative to $\mathcal{B}$). If $\mathcal B$ is the finest partition, where $I$ runs over single qudits, then we call $U$ is a \emph{unitary product symmetry} \cite{QL}.
\end{definition}

Graphically, a unitary (product) symmetry is shown in Fig.~\ref{fig:qlego}b where $\mathcal{O}_i$ are unitary operators. If $|V\rangle$ is a stabilizer state, then clearly all its stabilizers are unitary symmetries. However, it can also admit non-Pauli symmetries. For instance, the Bell state satisfies $U\otimes U^*|\Phi^+\rangle = |\Phi^+\rangle$ for any unitary $U$ (here $(.)^*$ represents the entry-wise conjugate). Its unitary product symmetry group is $\{U\otimes U^* \::\: U \in \mathcal{U}(\mathbb{C}^2)\}$. Additionally, $\mathrm{SWAP}|\Phi^+\rangle = |\Phi^+\rangle$ and hence gives a unitary symmetry that is not a unitary product symmetry. Such symmetries can also give rise to new unitary product symmetries under trace, e.g. Fig.~\ref{fig:422swapHS}.

As pointed out in \cite{QL}, symmetries can be converted into logical operations by dualizing $|V\rangle$ into an encoding map. In principle, we can define an encoding map by choosing any subset of tensor legs to be logical legs and the remaining to be physical. Yet in practice we want our encoding maps to be isometric, which can be determined by its entanglement properties. 
\begin{lemma}
\label{lemma:isometry}
    Suppose the state representation of a tensor $|V\rangle \in \mathcal{H}_J \otimes \mathcal{H}_{J^c}$ has $\Tr_{J^c}[|V\rangle\langle V|]\propto I$. Then the dual map $V:\mathcal{H}_J \rightarrow \mathcal{H}_{J^c}$ is an isometry.
\end{lemma}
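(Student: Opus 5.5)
I will make the dual map explicit in a fixed product basis and then compute $V^\dagger V$ directly, reading the answer off the reduced density matrix on $J$. Write
\[
|V\rangle=\sum_{a,b} V_{ab}\,|a\rangle_J|b\rangle_{J^c},
\]
where $a$ ranges over a basis of $\mathcal H_J$ (of dimension $d_J$) and $b$ over a basis of $\mathcal H_{J^c}$. The dual map is $V=\sum_{a,b}V_{ab}\,|b\rangle\langle a|$, i.e. $V|a\rangle=\sum_b V_{ab}|b\rangle$. This is the channel-state dualization, up to the convention of whether one conjugates the input index; the argument is insensitive to that choice.

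First I compute $V^\dagger V=\sum_{a,a'}\big(\sum_b \overline{V_{a'b}}V_{ab}\big)|a'\rangle\langle a|$. Next I compute the reduced state $\rho_J=\Tr_{J^c}|V\rangle\langle V|=\sum_{a,a'}\big(\sum_b V_{ab}\overline{V_{a'b}}\big)|a\rangle\langle a'|$. Comparing the two gives $V^\dagger V=\rho_J^{T}$, the transpose in the chosen basis. The only real point of care is this index bookkeeping: transpose versus conjugate, and which leg is conjugated.

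By hypothesis $\rho_J=cI$ for some constant $c$. Taking the trace gives $c=\langle V|V\rangle/d_J>0$, assuming $|V\rangle\neq 0$. Since $I^T=I$, we get $V^\dagger V=cI$. Hence $V/\sqrt c$ is an isometry, i.e. $V$ is an isometry up to the normalization implicit in ``$\propto$''. Moreover $c=1$ exactly when $|V\rangle$ is normalized as $\langle V|V\rangle=d_J$, which is the usual unnormalized-Choi convention. I will state this normalization remark explicitly, since the lemma as written only holds up to this scalar.

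I expect no step to be a real obstacle. The one thing to get right is the transpose bookkeeping in the comparison of $V^\dagger V$ with $\rho_J$, which is harmless here because the identity is invariant under transpose.
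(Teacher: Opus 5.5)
Your proof is correct and is the same argument as the paper's, made explicit. The paper observes that $\rho_J\propto I$ means the coefficient matrix of $|V\rangle$ has full rank and a flat singular value spectrum, which is exactly your identity $V^\dagger V=\rho_J^{T}=cI$. Your remark that $V$ is an isometry only after rescaling by $1/\sqrt{c}$ (equivalently, under the unnormalized Choi convention $\langle V|V\rangle=d_J$) correctly spells out what the paper leaves implicit in ``$\propto$''.
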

The proof is clear as the state has a full rank and flat singular value spectrum, and hence the corresponding linear map is an isometry. 

\begin{proposition}\label{prop:duallogical}
     Suppose $V:\mathcal{H}_J\rightarrow \mathcal{H}_{J^c}$ is an (encoding) isometry and $U=Q_J\otimes O_{J^c}$ is a unitary symmetry of its Choi state $|V\rangle$, then the code admits $Q^*$ as a logical operator with implementation $\bar{Q}^*=O$.
 \end{proposition}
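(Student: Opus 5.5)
We work in the computational basis and write the Choi state as
\begin{equation}
  |V\rangle=\sum_{a}|a\rangle_J\otimes V|a\rangle_{J^c},
\end{equation}
so that $V=\sum_a (V|a\rangle)\langle a|$ and $V|\psi\rangle=(\langle\psi^*|_J\otimes I)|V\rangle$. This is the usual channel--state correspondence: the transpose trick $(A\otimes I)|V\rangle=(I\otimes VA^{T})|V\rangle$ holds for any operator $A$ on $\mathcal H_J$. It can be read off directly from the expansion above.

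The plan is to move the symmetry operator from the logical legs onto the physical ones.

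\textbf{Rewrite the symmetry.} Start from $(Q_J\otimes O_{J^c})|V\rangle=|V\rangle$. Apply the transpose trick to the $Q$ factor to obtain
\begin{equation}
  (I\otimes O V Q^{T})|V\rangle=(I\otimes V)|\Phi\rangle,
\end{equation}
where $|\Phi\rangle=\sum_a|a\rangle|a\rangle$ is the unnormalized maximally entangled state.

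\textbf{Pass to an operator identity.} Since $(I\otimes X)|\Phi\rangle=0$ forces $X=0$, the previous display gives
\begin{equation}
  OVQ^{T}=V .
\end{equation}
Because $Q$ is unitary, $(Q^{T})^{-1}=\overline{Q}=Q^*$, where $(\cdot)^*$ is the entrywise conjugate as in the paper's Bell-state example. Multiplying on the right by $Q^*$ therefore yields the key identity
\begin{equation}
  OV=VQ^* .
\end{equation}

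\textbf{Interpret as a logical operator.} For any logical state $|\psi\rangle$ we have $O\,V|\psi\rangle=V\,Q^*|\psi\rangle$. Thus $O$ preserves the code space $\mathrm{Im}\,V$ and acts there as the encoded $Q^*$; this is exactly the statement $\bar Q^*=O$.

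\textbf{Role of the isometry hypothesis.} Lemma~\ref{lemma:isometry} makes $V$ an isometry. This ensures that the code space is an honest encoding and that $V^\dagger O V=Q^*$, so the logical action is well defined and unitary. The identity $OV=VQ^*$ itself does not need the hypothesis.

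The main subtlety is bookkeeping rather than substance. The conjugation must land on $Q$, giving $Q^*$ and not $Q^{T}$ or $Q^\dagger$, and this depends on the convention for dualizing $|V\rangle$ into $V$. I would fix that convention explicitly as above and verify it once on the Bell-state example, where $U\otimes U^*$ should yield the logical identity channel with $\bar{U^*}=U^*$. If the paper's convention differs, the same argument goes through with the corresponding transpose.
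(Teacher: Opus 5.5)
Your argument is correct and is essentially the standard Choi-state (transpose-trick) computation the paper defers to \cite{QL}: writing $|V\rangle=(I\otimes V)|\Phi\rangle$, the symmetry gives $OVQ^{T}=V$, hence $OV=VQ^*$. One notational slip needs fixing: in your transpose trick and in your first display, the vector on the side carrying $VA^{T}$ (resp.\ $OVQ^{T}$) should be $|\Phi\rangle$, not $|V\rangle$, i.e.\ $(A\otimes I)|V\rangle=(I\otimes VA^{T})|\Phi\rangle$, because $VA^{T}$ takes inputs in $\mathcal{H}_J$ and cannot act on the $\mathcal{H}_{J^c}$ factor of $|V\rangle$.
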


See \cite{QL} for a proof. As all unitary logical gates are symmetries of the encoding tensor, i.e., the Choi state of an encoding map, it is sufficient to characterize the unitary symmetries of a contracted tensor network or its associated Choi state and then convert it back to an encoding map when convenient. We can treat transversal and addressable gates simply as special cases of above where the unitary symmetry have additional structure. For simplicity, we will focus on cases where bond dimensions are uniform with $\dim \mathcal{H}_J = q^k$ and $\dim \mathcal{H}_{J^c} = q^n$. 

\begin{definition}
    A logical gate $Q^*$ derived from the unitary symmetry $U=Q_{J}\otimes O_{J^c}$ is \emph{transversal} if $O_{J^c} = \bigotimes_{i=1}^{n} O_{i}$. 
If $O_{i}= O_0$ are all identical, then we say $\bar{Q}^*$ is \emph{strongly} (or \emph{bitwise}) transversal, otherwise we refer to it as \emph{weakly} transversal. If $\bar{Q}^*$ is strongly transversal and $Q^*=\bigotimes_{i=1}^k Q_i$ where  $Q_i=O_0$ for all $i$, i.e. if $\bar{O}_0^{\otimes k} = O_0^{\otimes n}$, then we say the gate is \emph{completely} transversal. If a weakly transversal gate has non-trivial support $B\subsetneq \{1,\dots,n\}$, then we say the gate is \emph{localized} transversal. If the logical operation $Q_J$ has non-trivial support $A\subsetneq \{1,\dots,k\}$, then we say the gate is $A$-addressable. 
\end{definition}
For example, in the usual $[[4,2,2]]$ code one has $\overline{CZ}=SS^{\dagger}SS^{\dagger}$ is weakly transversal. A Pauli gate ${\bar{Z}\otimes \bar{I}}=ZZII$ in the same code is also weakly transversal but, more specifically, localized transversal. It is also single-qubit addressable as it only acts on the first encoded qubit. On the other hand, in the $[[8,3,2]]$ cubic code $\overline{CCZ}=T^{\otimes 8}$ is strongly transversal, while the logical Hadamard in the $[[7,1,3]]$ Steane code $\bar{H}=H^{\otimes 7}$ is completely transversal. In addition to these conventional gates, other symmetries like arbitrary phase rotation are also of interest.

\subsection{Multi-qubit and addressable transversal gates}

The general formalism above of rephrasing symmetries as transversal gates has two straightforward but important applications, which to the best of our knowledge have not yet been discussed in literature: the construction of multi-qubit transversal gates and addressable transversal gates. In the examples thus far, we had the implicit assumption that each tensor leg, or each qudit, represents a single logical or physical degree of freedom. However, we can also think of each site or leg as an $\ell$-tuple of physical or logical qudits. The above definition then also applies to multiqubit gates. Two special groupings of such qudits are of particular interest to us, which will give rise to logical gates that are interblock or intrablock transversal (Fig.~\ref{fig:multicopy_tensor}ab).

In the first type of grouping (Fig.~\ref{fig:multicopy_tensor}a), each tensor $\mathbf{V} = \bigotimes_{i=1}^\ell \mathbf{V}_i$, where each $\mathbf{V}_i=\mathbf{V}_0$.  Each leg in the left tensor $\mathbf{V}$ consists of $\ell$ physical/logical qudits. If each $\mathbf{V}_0$ represents a single state or code block, the symmetries of $\mathbf{V}$ now corresponds to multi-qudit transversal gates, such as CNOT for $\ell=2$ or CCZ for $\ell=3$. Such a symmetry maps to the conventional \emph{interblock transversal multiqubit gate} as they act on $\ell$ code blocks if we use $V_0$ to represent the encoding map of a single code block\footnote{More generally, $\textbf{V}_i$ need not all be identical and still admitting the symmetries that corresponds to interblock transversal gates.}.

Explicit examples of such tensors include any $[[n,k]]$ CSS code: if $V$ is the encoding isometry of one block of the code, CNOT is completely transversal operating on two distinct blocks as $CNOT^{\otimes(n+k)}(|V\rangle\otimes|V\rangle) = |V\rangle\otimes |V\rangle$.
A more concrete example is the encoding tensor of the $[[12,1,2]]$ code of \cite{Vasmer_2019} with completely transversal CCZ gate: the Choi states of their encoding isometries have global symmetry $CCZ^{\otimes 13}|V_{[[12,1,3]]^{\otimes 3}}\rangle=|V_{[[12,1,3]]^{\otimes 3}}\rangle$.

The second type of grouping is shown in Fig.~\ref{fig:multicopy_tensor}b, where each leg of $\mathbf{V}$ corresponds to multiple qubits/qudits in the same code block. If we were to promote a leg to logical, then such a symmetry is nothing but a multi-qubit ``transversal gate'' that we shall refer to as an \emph{intrablock transversal gate}. Examples of such gates can be found in \cite{Quintavalle_2023}. Later, we will build up some more non-trivial examples of this using quantum lego. A simple example is given by the encoding tensor of the Steane code, which has eight legs each one representing a qubit. However, it is easy to pair up the legs into four pairs where $CZ^{\otimes 4}$ becomes a unitary symmetry of the tensor $\mathbf{V}$ with bond dimension 4. Explicitly, taking just the purple leg of the tensor in Fig.~\ref{fig:multicopy_tensor}d as logical input recovers the usual $[[7,1,3]]$ code. Taking both purple and green legs as inputs, then we arrive at a $[[6,2,2]]$ code where its checks are weight-4 $X$- or $Z$- operators acting on the blue or red colored plaquettes. Then $\overline{CZ}=CZ^{\otimes 3}$ is an intrablock transversal gate thanks to its unitary symmetry.

For $\ell>2$, there are logical gates that fit in between interblock and intrablock, where a gate can act $k_i$ logical qubits of the $i$th code block (hence $\sum_{i=1}^m k_i=\ell$ where $m<\ell$ is the number of distinct code blocks). Here we call such gates \emph{mesoblock gates}. For example, for codes with $k>1$, a mesoblock logical CCZ gate can act on two logical qubits in the first block and one from the second block. Codes with such transversal gates have also been identified in \cite{he2025quantumcodesaddressabletransversal}. Transversal mesoblock gates come from more complicated symmetries formed by a combination of both splits in Fig.~\ref{fig:multicopy_tensor}a and b. We will discuss them in the examples in Sec.~\ref{subsec:genholo}. 

In addition to symmetries that are global---acting on every qudit leg---we also have \textit{localized symmetries} where the unitary product symmetry only have support over a subregion. When such a symmetry only has support over a subset of logical legs, it corresponds to an \textit{addressable transversal gates} associate to that support. 

We have seen such examples from Pauli symmetries of the encoding tensor of the $[[4,2,2]]$ code \cite{QL} where it is easy to identify operators that only act on three of the six legs of the encoding tensor. When it only acts on two physical legs and one of the two logical legs, then it becomes an addressable Pauli gate. Such localized Pauli symmetries are commonplace for stabilizer codes. 

Localized symmetries can also be identified for multi-qubit gates, even though they are far less understood among our current tabulation of codes and logical gates. Consider the encoding tensor(s) $\mathbf{V}_{[[15,1,3]]}$ of the $[[15,1,3]]$ (punctured) Reed-Muller code with localized transversal logical $CZ$ gate. It is not hard to check that there is a configuration of $CZ^{\otimes 7}$ that act on two code blocks which produces a logical $CZ$ which is locally transversal. For example, $CZ$s acting on two of the faces of the tetrahedron implements such a gate (Fig.~\ref{fig:multicopy_tensor}c).This means the tensor $\mathbf{V}=\mathbf{V}_{[[15,1,3]]}\otimes \mathbf{V}_{[[15,1,3]]}$ with $16$ legs and bond dimension $4$ admits a unitary symmetry $CZ^{\otimes 8}\otimes I$. In fact, one can rewrite the 16-leg encoding tensor as a Clifford-stabilized state where each leg is mapped to a vertex on a hypercube, then $CZ^{\otimes 8}$ acting on the vertices each 3 dimensional subcube is a unitary symmetry of $\mathbf{V}$ (see Appendix~\ref{app:cz}).

\begin{figure}
    \centering
    \includegraphics[width=0.9\linewidth]{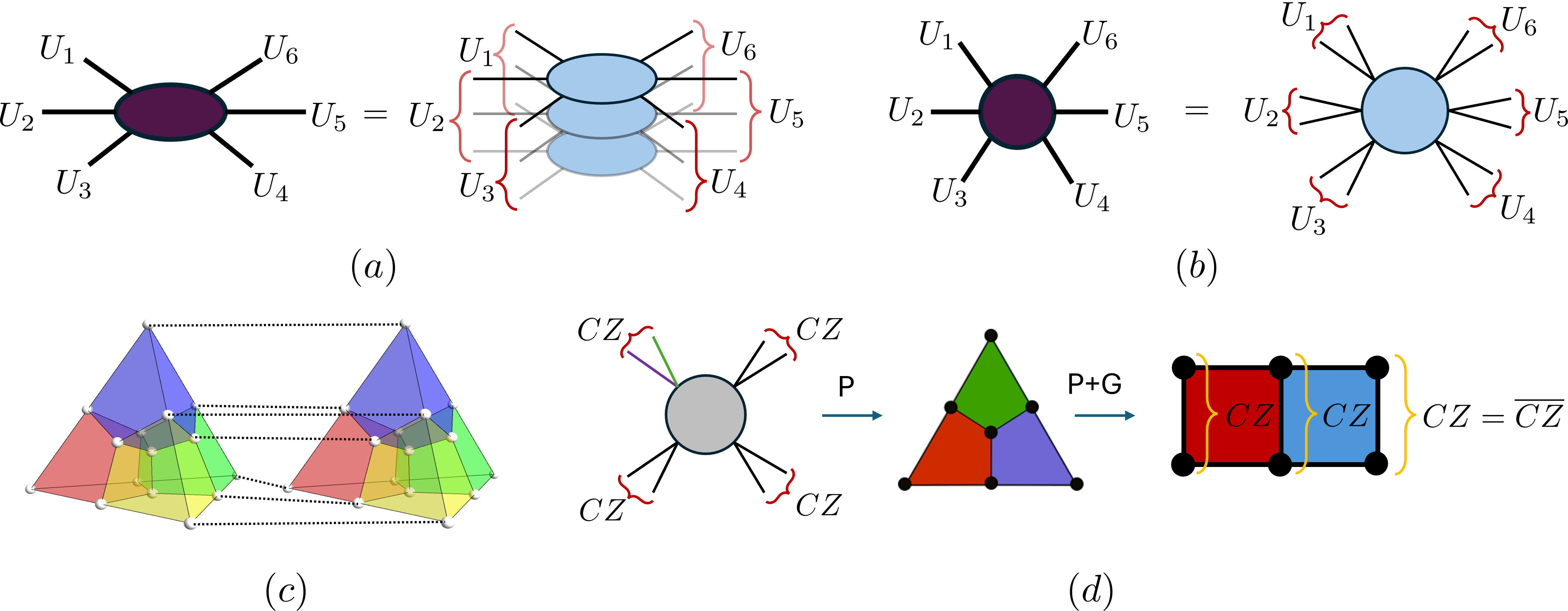}
    \caption{(a,b) Unitary symmetry of a tensor $\mathbf{V}$ (purple) can manifest as different types of multi-qubit FT gates based on the different groupings of the tensor legs. (c) $CZ$s acting on pairs of qubits connected by the dotted lines performs a locally transversal logical $CZ$ gate, but the $\mathbf{V}$ encoding tensor also admits unitary symmetry when $CZ$s act on all vertices of any colored cubes. (d) As an explicit example of (b), a unitary symmetry of a single Steane encoding tensor is $CZ^{\otimes 4}$. Taking just the purple (P) leg as logical produces the Steane code whereas taking both purple and green (P+G) produces a $[[6,2,2]]$ code with intrablock transversal $CZ$. The green leg can be chosen to be the physical qubit on any of the 3 vertices of the triangle.}
    \label{fig:multicopy_tensor}
\end{figure}

Instead take two legs from each $\mathbf{V}_{[[15,1,3]]}$ as logical legs, for which we get a $[[14,2,2]]$ code. This can be written as a bilayer code where each layer consists of 7 physical qubits\footnote{One choice for the second logical leg would be the physical qubit in the middle of the tetrahedron, though choosing any of the vertices will work also.} (Fig.~\ref{fig:1422_CZ}a). It is similar to two layers of the Steane code except its stabilizers consists of weight-8 $X$- and $Z$-operators acting on the vertices of two plaquettes with the same color. In addition, we have weight-4 $Z$-stabilizers acting on each face that is perpendicular to the layers or on the 4 vertices of any plaquette on a single layer. Note that the $Z$-stabilizers we give here are merely to provide geometric intuition; they are overcomplete when used as a generating set. This code has logcal $\bar{X}_i, \bar{Z}_i$ operators ($i=1,2$). We have $\bar{X}_i$ consists of a weight-7 $X$-operator supported entirely on the $i$th layer, and similarly for $\bar{Z}_i$. The localized $CZ$ symmetry now represents addressable interblock gates. Indeed, taking again two blocks of this code whose tensor is given by $\mathbf{V}$, an addressable logical $\overline{CZ}_{ij}$ acts the $i$th and $j$th logical qubit of first and second code block respectively. Then $\overline{CZ}_{ij}$ is implemented by having 7 parallel $CZ$s acting on the $i$th layer of the first code block and $j$th layer of the second code block. The localized symmetry now corresponds to addressable interblock transversal $CZ$ gates (Fig.~\ref{fig:1422_CZ}b).
\begin{figure}
    \centering
    \includegraphics[width=0.5\linewidth]{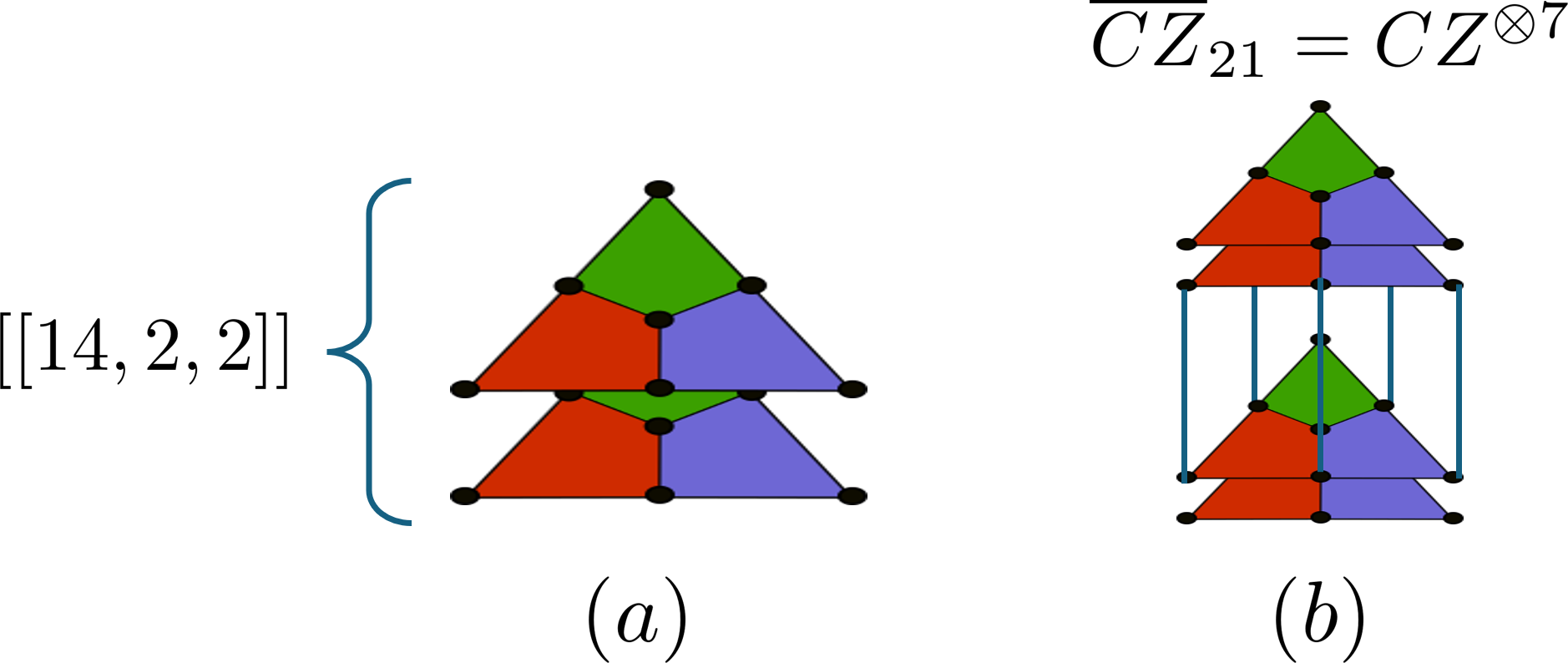}
    \caption{$[[14,2,2]]$ codes obtained from the encoding tensor of the punctured $[[15,1,3]]$ QRM encoding tensor admit addressable $CZ$ gates. Weight 8 X and Z checks act on the vertices of the plaquettes with the same color. Additional Z checks connect the two layers, which are not shown in the figure.}
    \label{fig:1422_CZ}
\end{figure}

Although we have been mostly focusing on smaller codes or tensors in the examples, the above observations of transversal gates still apply if $\mathbf{V}$ is a high rank tensor network.

\subsection{Addressable gates from global transversal ones}

Before turning to tensor network methods, we explore some simple ramifications of the existence of addressable Pauli symmetries (which every stabilizer code exhibits in its logical Pauli operators). Cleaning is a key concept, which we state here for general unitary operators as we will need this later.

\begin{definition}[Cleanability]
    Suppose $U$ acts on a single leg of a tensor $\ket{V}$ and $U^N=I$ for some integer $N$. We say a set of legs $A$ of $\ket{V}$ is \emph{cleanable with respect to $U$} if for any operator of the form $O(\mathbf{r})=\bigotimes_{i\in A} U^{r_i}$ (with $r_i\in \mathbb{Z}_N$) there exists a unitary symmetry $S$ of $\ket{V}$ such that  $SO(\mathbf{r})$ has only trivial support on $A$. 
    A code is $r$-cleanable with respect to $U$ if its encoding tensor is cleanable for all sets of legs $A$ that contains of all the logical legs and $r$ additional physical legs.
\end{definition}
In other words, a code is $r$-cleanable if any combinations of $U$ and $I$ on any subregion $B$ can be pushed to $B^c$ using a unitary stabilizer. The unitary stabilizer here need not be a Pauli operator. If $U$ is a multiqubit gate, this implies the interblock gates can be cleaned off of the thickened legs as in Fig.~\ref{fig:multicopy_tensor}a. 

For example, consider an $n$-qubit GHZ state $|\psi\rangle = \frac{1}{\sqrt{2}}(|00\dots\rangle +|11\dots\rangle)$. Let $C^\ell P(\varphi)$ be a multi-controlled phase gate with angle $\varphi$ where $\ell\geq 0$; and $\ell=0$ corresponds to a single qubit phase gate 
\begin{equation}
    P(\varphi) =\begin{pmatrix}
    1 & 0\\
    0 & e^{i\varphi}
\end{pmatrix}.
\end{equation}
We can check that $C^\ell P(\varphi)$  is 1-cleanable on $\ell$ GHZ states (or $k=0$ codes). Denote the gate that act on the $i$th $\ell$ tuple of qubits on $|\psi\rangle^{\otimes \ell}$ as $C^\ell_i P(\varphi)$. Because $S=C^{\ell}_iP(\varphi)C^{\ell}_jP(-\varphi)$ is a unitary symmetry  for any $i, j$, we have
\begin{equation}
    C^{\ell}_iP(\varphi)|\psi\rangle^{\otimes \ell} = C^{\ell}_jP(\varphi)|\psi\rangle^{\otimes \ell}. 
\end{equation} 
This is also shown diagrammatically in Fig.~\ref{fig:zxhcalc}.

The logical legs of a stabilizer code are cleanable with respect to any Pauli operator. This is a consequence of the existence of logical Pauli operators on stabilizer codes. Namely, suppose we are given a Pauli operator on a logical leg; associated to that is a logical Pauli operator whose representation matches the given Pauli operator on that logical leg, is trivial on other logical legs, and has some additional Pauli operators on the physical legs. This operator is a symmetry of the code's Choi state, which performs the desired cleaning.

In our language, the above recognition states that every stabilizer code admits transversal logical Pauli gates, and every logical Pauli operator is addressable. Here we catalog some simple results showing that given a (global) transversal $C^\ell P$ gate and addressable Pauli $X$-gates, we can construct addressable (controlled-)phase gates, albeit courser than the global transversal gate.

\begin{lemma}
\label{lemma:phasegate_lemma}
    We have that $XP(\varphi) = P(-\varphi)X$, and for $\ell \geq 1$ that 
    $$(I^{\otimes \ell} \otimes X)C^\ell P(\varphi) = C^{\ell}P(-\varphi)(C^{\ell-1}P(\varphi) \otimes X).$$
\end{lemma}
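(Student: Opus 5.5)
Both identities hold between diagonal-times-permutation matrices, so the plan is to check them on the computational basis. The first identity is immediate. $X P(\varphi)$ sends $\ket{0}\mapsto\ket{1}$ and $\ket{1}\mapsto e^{i\varphi}\ket{0}$. Likewise $P(-\varphi)X$ sends $\ket{0}\mapsto P(-\varphi)\ket{1}$. Here we must be careful: that gives $e^{-i\varphi}\ket{1}$, not $\ket{1}$. So the identity holds only up to the global phase $e^{i\varphi}$, i.e. $XP(\varphi)=e^{i\varphi}P(-\varphi)X$. The first thing I would do is therefore pin down the intended convention. Either the statement is meant up to a global phase, which is harmless for gate implementation, or the exact form should read $XP(\varphi)X=e^{i\varphi}P(-\varphi)$.

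I will carry this phase bookkeeping into the multi-qubit case. Write $C^\ell P(\varphi)\ket{b_1\cdots b_\ell b_{\ell+1}} = e^{i\varphi\, b_1\cdots b_{\ell+1}}\ket{b}$, taking the last qubit as the one hit by $X$. Applying the left-hand side $(I^{\otimes\ell}\otimes X)C^\ell P(\varphi)$ to $\ket{b}$ gives
\[ e^{i\varphi\, b_1\cdots b_\ell\, b_{\ell+1}}\ket{b_1\cdots b_\ell\,\bar b_{\ell+1}}. \]
On the right-hand side, $C^{\ell-1}P(\varphi)\otimes X$ first contributes the phase $e^{i\varphi\, b_1\cdots b_\ell}$ and flips the last bit. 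Then $C^\ell P(-\varphi)$ contributes $e^{-i\varphi\, b_1\cdots b_\ell\,\bar b_{\ell+1}}$. Setting $m=b_1\cdots b_\ell\in\{0,1\}$, the comparison reduces to
\[ m\, b_{\ell+1} = m - m(1-b_{\ell+1}), \]
which holds. So the multi-qubit identity is exact, and the prefactor $C^{\ell-1}P(\varphi)$ on the first $\ell$ qubits is exactly what absorbs the phase. This also shows that the $\ell=0$ case is the same identity with $C^{-1}P(\varphi)$ read as the scalar $e^{i\varphi}$, confirming the global-phase reading of the first claim.

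Structurally, the argument is the observation that conjugating by $X$ on the target sends $b_{\ell+1}\mapsto 1-b_{\ell+1}$ in the phase polynomial $\varphi\, b_1\cdots b_{\ell+1}$. This splits the monomial into $-\varphi\, b_1\cdots b_{\ell+1}$ plus the lower-degree term $\varphi\, b_1\cdots b_\ell$, and that lower-degree term is $C^{\ell-1}P(\varphi)$. I would also note that, because all diagonal gates commute, $C^{\ell-1}P(\varphi)$ can be written on either side of $C^\ell P(-\varphi)$.

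The only real obstacle is the phase convention in the $\ell=0$ statement. I would resolve it by stating that equality there is up to a global phase, consistent with the general formula, and that this is irrelevant for logical gate implementation.
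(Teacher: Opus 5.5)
Your computational-basis check is correct, and it is exactly the routine verification the paper leaves to the reader; your comparison $m\,b_{\ell+1} = m - m(1-b_{\ell+1})$ shows the multi-qubit identity holds exactly. You are also right that with $P(\varphi)=\mathrm{diag}(1,e^{i\varphi})$ the single-qubit identity holds only up to a global phase, $XP(\varphi)=e^{i\varphi}P(-\varphi)X$ (the $\ell=0$ case of the general formula with $C^{-1}P(\varphi)$ read as the scalar $e^{i\varphi}$), and this is harmless because the paper only needs these relations for symmetries up to proportionality.
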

The proof of this lemma is straightforward and so left to the reader. 
Note that the multiqubit control phase $C^\ell P$ is symmetric in all its qubits, and so while we stated this result for ease of presentation one has analogous equalities where the $X$ operator can appear on any leg.

\begin{proposition}
    Let $\ket{V} \in \mathcal{H}_J \otimes \mathcal{H}_{J^c}$ be a tensor with a (global) phase symmetry
    $$\mathbf{P}\ket{V} = [P(-\varphi_1)\otimes\cdots\otimes P(-\varphi_k)\otimes P(\varphi_{k+1}) \otimes \cdots \otimes  P(\varphi_{k+n})] \ket{V} \propto \ket{V}.$$
    If $\ket{V}$ admits an addressable logical $X$-gate on leg $i\in J$, then it admits an $\{i\}$-addressable logical $P(2\varphi_i)$-gate whose support is contained in that of the logical $X$-gate. Moreover, if this logical $X$ gate is realized by a Pauli symmetry then this logical $P(2\varphi)$ gate is realized by a phase symmetry.
\end{proposition}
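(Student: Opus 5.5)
The plan is to conjugate the global phase symmetry $\mathbf{P}$ by the symmetry $\mathbf{X}$ that realizes the addressable logical $X$ on leg $i$, and then compose with $\mathbf{P}^{-1}$. Concretely, the addressable logical $X$-gate on leg $i$ gives a unitary symmetry $\mathbf{X} = X_i \otimes O_{J^c}$ of $\ket{V}$. Here $X_i$ acts as $X$ on leg $i$ and trivially on the other logical legs. By Proposition~\ref{prop:duallogical}, any unitary symmetry of the Choi state whose logical part is supported on $\{i\}$ yields an $\{i\}$-addressable logical gate. So it suffices to produce a symmetry of the form $P(\pm 2\varphi_i)_i \otimes W_{J^c}$ with $\mathrm{supp}(W)\subseteq \mathrm{supp}(O)$.

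The key steps are as follows. First, I form $S = \mathbf{X}\,\mathbf{P}\,\mathbf{X}^{-1}\mathbf{P}^{-1}$. Since $\mathbf{X}\ket{V}=\ket{V}$ and $\mathbf{P}\ket{V}=e^{i\theta}\ket{V}$, we get $S\ket{V}=\ket{V}$, so $S$ is a symmetry (the phases cancel). Second, I compute $S$ tensor-factor by tensor-factor. On every leg outside $\{i\}\cup \mathrm{supp}(O)$, the factor of $\mathbf{X}$ is the identity, so the factor of $S$ is $P(\varphi_j)P(\varphi_j)^{-1}=I$; this gives the support claim. On leg $i$, Lemma~\ref{lemma:phasegate_lemma} ($XP(\varphi)X = P(-\varphi)$) gives $X P(-\varphi_i) X\, P(\varphi_i) = P(\varphi_i)P(\varphi_i)=P(2\varphi_i)$. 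This is a diagonal phase on leg $i$, and I read it through Proposition~\ref{prop:duallogical}, keeping track of the complex conjugation there and the sign convention on logical legs. The result is a logical $P(\pm 2\varphi_i)$; taking the inverse symmetry if needed gives $P(2\varphi_i)$. Third, on the remaining physical legs $j\in\mathrm{supp}(O)$, the factor is $W_j = O_j P(\varphi_j) O_j^{-1} P(-\varphi_j)$. This is a unitary acting on a single leg, so $S$ is again a product operator, and the logical gate is transversal with support contained in that of the logical $X$.

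For the ``moreover'' part, I assume $O$ is a Pauli operator, so each $O_j \in \{I,X,Y,Z\}$ up to phase. Then $W_j$ is $I$ when $O_j\in\{I,Z\}$, and $P(-2\varphi_j)$ (up to a global phase) when $O_j\in\{X,Y\}$, by the first identity of Lemma~\ref{lemma:phasegate_lemma} together with $Y\propto XZ$ and the fact that $Z$ commutes with $P$. Hence $S$ is a tensor product of single-qubit phase gates, i.e. a phase symmetry.

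The main obstacle is bookkeeping rather than anything conceptual. Two things need care: the global phase from $\mathbf{P}\ket{V}\propto\ket{V}$ must cancel in the group commutator, and the signs and conjugations on logical versus physical legs must be tracked consistently so that the resulting logical gate is exactly $P(2\varphi_i)$ and not its inverse. I would handle both by working entirely with the commutator $S$, which is insensitive to the proportionality constant, and by fixing the orientation convention of Proposition~\ref{prop:duallogical} once at the start.
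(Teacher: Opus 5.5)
Your proposal is correct and follows the paper's proof: you form the group commutator of the logical-$X$ symmetry with the global phase symmetry $\mathbf{P}$, read off a doubled phase $P(\pm 2\varphi_i)$ on leg $i$ and $O_jP(\varphi_j)O_j^{-1}P(-\varphi_j)$ on the physical legs (trivial off the support of the logical $X$), and in the Pauli case get $I$ or a doubled phase gate on each leg. Your use of $\mathbf{X}^{-1}$, and your explicit tracking of the conjugation in Proposition~\ref{prop:duallogical}, make the write-up slightly more careful; your signs differ from the paper's only by an overall inversion of the symmetry, which is harmless.
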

\begin{proof}
    Let our supposed logical $X$-gate be given by the symmetry
    \begin{equation}\label{eqn:logical_X}
        \bar{X}_i = I\otimes \cdots \otimes X \otimes \cdots \otimes I \otimes O_{k+1} \otimes \cdots \otimes O_{k+n}.
    \end{equation}
    Then $\ket{V}$ also has the symmetry
    \begin{align*}
        \bar{X}_i \mathbf{P} \bar{X}_i \mathbf{P}^\dagger &=  I\otimes \cdots \otimes P(-2\varphi_i) \otimes \cdots \otimes I\\
        &\qquad \otimes O_{k+1}P(\varphi_{k+1})O_{k+1}^\dagger P(-\varphi_{k+1}) \otimes \cdots \otimes O_{k+n}P(\varphi_{k+n})O_{k+n}^\dagger P(-\varphi_{k+n})
    \end{align*}
    which defines our addressable logical $P(\varphi_i)$-gate. Note that if leg $k+j \not \in \mathrm{supp}(\bar{X}_i)$ then $O_{k+j} = I$, and hence $k+j \not \in \mathrm{supp}(\bar{X}_i \mathbf{P} \bar{X}_i \mathbf{P}^\dagger)$ either. Finally, if $O_{k+j}$ is Pauli, then on this leg the resulting symmetry has $I$ (when $O_{k+j} = I,Z$) or $P(2\varphi_{k+j})$ (when $O_{k+j} = X,Y$).
\end{proof}

\begin{corollary}
    If a stabilizer code has a bitwise transversal $T$-gate, then it has transversal $S$-gates addressable on each logical qubit.
\end{corollary}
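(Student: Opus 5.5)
The plan is to apply the preceding proposition with $\varphi_j=\pi/4$ on every leg, using the Pauli logical $X$ operators that every stabilizer code carries.

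\emph{Step 1: turn the transversal $T$ into a phase symmetry.} Let $V$ be the encoding isometry of a $[[n,k]]$ stabilizer code. Suppose $T^{\otimes n}$ is a bitwise transversal logical gate implementing some logical unitary $Q$. The statement intends a diagonal logical action, so I take $Q=\bigotimes_{i=1}^k P(\theta_i)$; for the corollary as written, $\theta_i=\pi/4$, meaning $\bar T^{\otimes k}=T^{\otimes n}$. By Proposition~\ref{prop:duallogical}, read in reverse, the Choi state then has the symmetry $Q^*_J\otimes T^{\otimes n}$ up to a global phase. Since $P(\theta)^*=P(-\theta)$, this is exactly a global phase symmetry $\mathbf{P}$ of the form required by the preceding proposition, with $\varphi_i=\theta_i$ on the logical legs and $\varphi_{k+j}=\pi/4$ on the physical legs.

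\emph{Step 2: supply the addressable logical $X$.} For each logical leg $i$, the remarks above give a Pauli symmetry of $\ket{V}$ of the form \eqref{eqn:logical_X}. It is $X$ on leg $i$, identity on the other logical legs, and a Pauli string $O_{k+1}\otimes\cdots\otimes O_{k+n}$ on the physical legs. This is just the addressable logical $\bar X_i$ of the stabilizer code.

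\emph{Step 3: conjugate.} Apply the proposition. The commutator symmetry $\bar X_i\mathbf{P}\bar X_i\mathbf{P}^\dagger$ equals $P(-2\theta_i)$ on logical leg $i$ and identity on the other logical legs. On each physical leg it is $I$ where $O_{k+j}\in\{I,Z\}$, and $P(\pi/2)=S$ (up to phase) where $O_{k+j}\in\{X,Y\}$; I would verify this last fact directly from Lemma~\ref{lemma:phasegate_lemma}. Dualizing back with Proposition~\ref{prop:duallogical} gives the logical gate $P(2\theta_i)=S$ acting on qubit $i$ alone. It is implemented by the transversal product of single-qubit $S$ gates on the $X$-support of $\bar X_i$, so it is addressable on qubit $i$. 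Running over all $i$ completes the argument.

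\emph{Main obstacle.} The delicate point is Step 1: fixing precisely what logical action "bitwise transversal $T$" is assumed to have, and tracking the complex conjugates and global phases so the signs match the hypothesis of the proposition. If the logical action is only assumed diagonal with arbitrary $\theta_i$, the same argument yields addressable $P(2\theta_i)$ instead. In that case I would state that the conclusion "$S$" uses $\theta_i=\pi/4$, for example through complete transversality or the standard identification $\bar T=T^{\otimes n}$ up to logical Paulis.
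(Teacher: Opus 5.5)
Your proposal is correct and is the paper's own argument: the corollary is the preceding proposition with $\varphi_{k+j}=\pi/4$ on the physical legs, applied to the Pauli logical $\bar X_i$ that every stabilizer code has. Your explicit assumption that $T^{\otimes n}$ acts logically as a product of single-qubit phase gates is exactly the hypothesis the paper leaves implicit, and it is genuinely needed: in the $[[8,3,2]]$ code bitwise $T$ acts as a $\overline{CCZ}$-type gate, and conjugating by $\bar X_i$ then gives only a logical $CZ$ (up to Paulis) on the other two qubits. When you do the check you mention, note that literally $XP(-\theta)XP(\theta)\propto P(2\theta)$ and $XP(\pi/4)XP(-\pi/4)\propto S^\dagger$, so both signs are opposite to what you (and the paper) wrote; dualizing then gives logical $S^\dagger$ implemented by physical $S^\dagger$ gates on the $X$-support of $\bar X_i$, which is the same conclusion after inversion.
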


\begin{proposition}
    Let $\ket{V} \in \mathcal{H}_J \otimes \mathcal{H}_{J^c}$ be a tensor with (interblock) phase symmetry
    $$\mathbf{P} = CP(-\varphi_1)\otimes\cdots\otimes CP(-\varphi_k)\otimes CP(\varphi_{k+1}) \otimes \cdots \otimes  CP(\varphi_{k+n}).$$
    If $\ket{V}$ admits an addressable logical $X$-gate on leg $i\in J$ realized by a Pauli symmetry, then it admits (i) an $\{i\}$-addressable logical $P(\varphi_i)$-gate, and (ii) an $\{i\}$-addressable logical $CP(2\varphi_i)$-gate. In both cases, these are implemented with phase gates and the support of each is contained in that of the logical $X$-gate.
\end{proposition}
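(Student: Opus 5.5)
The plan is to imitate the proof of the preceding proposition, using the two-qubit identity of Lemma~\ref{lemma:phasegate_lemma} in place of $XP(\varphi)=P(-\varphi)X$. Each leg of $\ket{V}$ is a thickened leg carrying one qubit from each of the two blocks, as in Fig.~\ref{fig:multicopy_tensor}a. Let the addressable logical $X$ be the Pauli symmetry $\bar X_i$ of \eqref{eqn:logical_X}. It acts as $X$ on one component of leg $i$, say the second (block-2) qubit, trivially on the other logical legs, and by Paulis $O_{k+j}$ on the physical legs. Since both $\mathbf{P}$ and $\bar X_i$ are symmetries, so is the group commutator $\bar X_i \mathbf{P}\bar X_i\mathbf{P}^\dagger$, and I will compute it leg by leg.

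First, on legs outside $\mathrm{supp}(\bar X_i)$ the factors cancel and we get the identity. This gives both the addressability and the containment of supports. On leg $i$, Lemma~\ref{lemma:phasegate_lemma} with $\ell=1$ gives $(I\otimes X)CP(\theta)(I\otimes X)=CP(-\theta)(P(\theta)\otimes I)$. Therefore
\[
X\,CP(-\varphi_i)\,X\,CP(\varphi_i)=CP(2\varphi_i)\,(P(-\varphi_i)\otimes I).
\]
On a physical leg the Pauli $O_{k+j}$ acts on the pair of qubits. The cases are as follows. Components equal to $I$ or $Z$ commute with the diagonal $CP$. An $X$ or $Y$ on one component produces the same kind of factor $CP(\mp 2\varphi_{k+j})P(\pm\varphi_{k+j})$ on the complementary component, possibly times a global phase from $Y=iXZ$. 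When both components carry $X$ or $Y$, we get a product of single-qubit phases and a $CP(\pm 2\varphi_{k+j})$. So the commutator is a symmetry built only from $P$ and $CP$ gates, supported in $\mathrm{supp}(\bar X_i)$. By Proposition~\ref{prop:duallogical}, it acts logically on leg $i$ as the conjugate of $CP(2\varphi_i)(P(-\varphi_i)\otimes I)$.

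Next I must separate the two pieces to obtain (i) and (ii) individually. All the operators produced are diagonal and commute, so I can take products. A second commutator with $\bar X_i$, namely $\bar X_i W\bar X_i W^\dagger$ where $W$ is the symmetry just built, again yields a diagonal symmetry on the same support, with logical action $CP(-4\varphi_i)(P(2\varphi_i)\otimes I)$. I will also use the analogous symmetry $W'$ obtained from the logical $X$ on the block-1 component. This is available when the thickened tensor is two copies $\mathbf V_0\otimes\mathbf V_0$: the copy of $\bar X_i$ in the other block is again a Pauli symmetry. The ratio $W W'^\dagger$ cancels the $CP(2\varphi_i)$ and leaves only single-qubit phases on leg $i$. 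A further conjugation by one of the $X$'s then isolates a single-qubit phase on one block. Dividing $W$ by that phase symmetry leaves a pure $CP(2\varphi_i)$ logical, which is (ii).

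The step I expect to be the main obstacle is getting exactly $P(\varphi_i)$ in (i), rather than $P(2\varphi_i)$ or $P(\varphi_i)\otimes P(-\varphi_i)$. The naive commutator manipulations keep doubling the angle. I would first look for a combination of $\mathbf{P}$ itself, not only commutators, with $\bar X_i\mathbf{P}\bar X_i$ that remains localized. The clean single-angle $P(-\varphi_i)\otimes I$ factor already appears directly in $W$ before any squaring, which suggests this is possible. I would also keep track of the complex-conjugation convention of Proposition~\ref{prop:duallogical} and of the global phases from $Y$'s, since these only affect the result up to a scalar.
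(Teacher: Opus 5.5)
Up to forming the ratio $WW'^\dagger$ you are following the paper's proof. The gap is in what you do with that ratio. It cancels every $CP$ factor, both on leg $i$ and on each physical leg where $O_{k+j}$ is $X$ or $Y$, because the same Pauli appears in both copies. You regard the resulting $P(-\varphi_i)\otimes P(\varphi_i)$ on the thickened leg as insufficient and propose a further conjugation by an $X$. That conjugation doubles the angle. On leg $i$ it turns $P(-\varphi_i)\otimes P(\varphi_i)$ into a multiple of $P(\varphi_i)\otimes P(\varphi_i)$, and the quotient is $P(\pm 2\varphi_i)$ on a single block. This only reproduces the preceding proposition.

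It also breaks your route to (ii). The single-qubit factor in $W$ on leg $i$ is $P(-\varphi_i)\otimes I$, with the single angle. Dividing $W$ by a $P(2\varphi_i)$-type symmetry therefore does not leave a pure $CP(2\varphi_i)$. So as written neither (i) nor (ii) is established. Searching for other combinations of $\mathbf{P}$ and its commutators will not help, since each commutator with a Pauli $X$ doubles the angle again.

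The missing observation is that $WW'^\dagger$ is already a product operator $A\otimes B$ across the two copies. On every thickened leg it is either $I\otimes I$ or $P(\mp\varphi)\otimes P(\pm\varphi)$. The state is the product $\ket{V}\otimes\ket{V}$, so $(A\otimes B)(\ket{V}\otimes\ket{V})=\ket{V}\otimes\ket{V}$. Contracting the second factor with $\langle V|$ gives $A\ket{V}=\lambda\ket{V}$ and $B\ket{V}=\lambda^{-1}\ket{V}$, with $|\lambda|=1$ by unitarity. Hence $A$ by itself is a symmetry of a single copy. It acts as $P(-\varphi_i)$ on leg $i$, trivially on the other logical legs, and as single-qubit phases $P(\varphi_{k+j})$ on the physical legs in $\mathrm{supp}(\bar X_i)$. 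Via Proposition~\ref{prop:duallogical} this is the $\{i\}$-addressable logical $P(\varphi_i)$ gate, with no angle doubling. This separability step is exactly what the paper uses.

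For (ii), multiply $W$ by $A^{-1}$ acting on the block that $\bar X_i$ did not touch in forming $W$. This cancels the $P(-\varphi_i)\otimes I$ on leg $i$ and the matching $P(\varphi_{k+j})\otimes I$ factors on the physical legs. What remains is a pure $CP(2\varphi_i)$ logically, implemented by pure $CP(\mp 2\varphi_{k+j})$ gates inside $\mathrm{supp}(\bar X_i)$.
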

\begin{proof}
    As we did earlier, we view $\ket{V}\otimes\ket{V}$ as a $k+n$ qudit system with legs of bond dimension $4$ and $\mathbf{P}$ acting transversally as the tensor product of single-site operators. Let $\bar{X}_i$ be the supposed logical $X$-gate (\ref{eqn:logical_X}). Acting with $\bar{X}_i$ on the first copy of $\ket{V}$, we work analogously to the previous lemma and compute
    \begin{equation}\label{eqn:commutator1}
    (\bar{X}_i\otimes I) \mathbf{P} (\bar{X}_i\otimes I) \mathbf{P}^\dagger = I^{\otimes 2}\otimes \cdots \otimes (I\otimes P(\varphi_i))CP(-2\varphi_i) \otimes \cdots \otimes I^{\otimes 2} \otimes \mathbf{O}_{J^c},
    \end{equation}
    where the $k+j$ tensor factor of $\mathbf{O}_{J^c}$ is either $I\otimes I$ (when $O_{k+j} = I,Z$) or $(I\otimes P(\varphi_{k+j}))CP(-2\varphi_{k+j})$ (when $O_{k+j} = X,Y$).

    Now an identical computation that led to (\ref{eqn:commutator1}) gives
    $$(I\otimes \bar{X}_i) \mathbf{P} (I\otimes \bar{X}_i) \mathbf{P}^\dagger = I^{\otimes 2}\otimes \cdots \otimes (P(\varphi_i)\otimes I)CP(-2\varphi_i) \otimes \cdots \otimes I^{\otimes 2} \otimes \mathbf{O}'_{J^c},$$
    where the $k+j$ tensor factor of $\mathbf{O}'_{J^c}$ is either $I\otimes I$ (when $O_{k+j} = I,Z$) or $(P(\varphi_{k+j})\otimes I)CP(-2\varphi_{k+j})$ (when $O_{k+j} = X,Y$). Inverting this latter symmetry and composing with (\ref{eqn:commutator1}) produces the symmetry
    $$I^{\otimes 2}\otimes \cdots \otimes (P(-\varphi_i)\otimes P(\varphi_i))\otimes \cdots \otimes I^{\otimes 2} \otimes \mathbf{O}_{J^c}\mathbf{O}'_{J^c}.$$
    But the $k+j$ tensor factor of $\mathbf{O}_{J^c}$ is either $I\otimes I$ or $P(-\varphi_{k+j})\otimes P(\varphi_{k+j})$, and hence this operator is separable and defines a symmetry on each copy of $\ket{V}$. This precisely defines our $\{i\}$-addressable logical $P(\varphi_i)$-gate.

    By construction, this ${i}$-addressable $P(\varphi_i)$-gate consists precisely the single-site phase terms in (\ref{eqn:commutator1}), hence composing this equation with the inverse of the $P(\varphi_i)$-gate produces our $\{i\}$-addressable $CP(2\varphi_i)$-gate as desired.
\end{proof}

Note that this the lemma, the $\{i\}$-addressability of the $CP(2\varphi_i)$-gate refers to the thickened $i$th (logical) leg, and hence is $(i,i)$-addressable when viewed as an interblock gate. As we have only supposed a bitwise symmetry of our tensor, we should net expect to find $(i,j)$-addressable gates in general when $i\not= j$.

\begin{corollary}
    If a stabilizer code has a bitwise transversal $CS$-gate, then it has an addressable transversal $S$-gates on each logical qubit, and an addressable transversal $CZ$-gates between matching logical qubit on two blocks.
\end{corollary}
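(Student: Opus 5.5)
The corollary should follow by specializing the preceding proposition (on interblock phase symmetries) to $\varphi = \pi/2$, so that $CP(\varphi) = CS$, $P(\varphi) = S$ and $CP(2\varphi) = CZ$. The plan has three steps: put the code into the form the proposition expects, apply it once per logical qubit, and translate back into logical gates through Proposition~\ref{prop:duallogical}.

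\textbf{Step 1: recast the bitwise $CS$-gate as a phase symmetry.} Let the stabilizer code have encoding isometry $V$ and Choi state $\ket{V}$. A bitwise transversal $CS$ means $CS^{\otimes n}$ on the physical qubits of two blocks implements some logical operator $Q^*$. Proposition~\ref{prop:duallogical} then gives the interblock symmetry $Q\otimes CS^{\otimes n}$ of $\ket{V}\otimes\ket{V}$.

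The proposition needs the logical factor to be the tensor product $CP(-\varphi_1)\otimes\cdots\otimes CP(-\varphi_k)$. I will read ``bitwise transversal $CS$'' as meaning the implemented logical gate is itself bitwise $CS$ (or $CS^\dagger$) across matching logical qubits. I would state this interpretation explicitly, since it is where the only real care is needed.

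Under this reading, $Q^*=\bigotimes_i CS$ gives $Q=\bigotimes_i CS^*=\bigotimes_i CS^\dagger=\bigotimes_i CP(-\pi/2)$, because $CS$ is diagonal. This is exactly the required sign convention, with $\varphi_i=\pi/2$ on the logical legs and $\varphi_{k+j}=\pi/2$ on the physical legs.

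If the logical gate instead uses angles $\pm\pi/2$ on individual qubits, the same argument still works. The resulting gates are then $S^{\pm1}$ and $CZ$, with $CZ$ unaffected by the sign since $CP(\pm\pi)=CZ$. I expect this bookkeeping of signs and conjugation to be the main, though mild, obstacle.

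\textbf{Step 2: apply the proposition to each logical qubit.} Because the code is a stabilizer code, each logical qubit $i$ has an addressable logical $\bar X_i$ realized by a Pauli symmetry, as noted in the discussion of cleanability. The proposition then yields an $\{i\}$-addressable logical $P(\varphi_i)=S$, and an $\{i\}$-addressable $CP(2\varphi_i)=CP(\pi)=CZ$ on the thickened $i$th leg. Both are implemented by phase gates supported within $\mathrm{supp}(\bar X_i)$, so they are transversal.

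\textbf{Step 3: read off the logical gates.} Dualizing back via Proposition~\ref{prop:duallogical}, the $S$-gate acts on logical qubit $i$ of a single block. The $CZ$, viewed as an interblock gate, acts between logical qubit $i$ of the first block and logical qubit $i$ of the second, which is the matching pair claimed in the corollary. Any complex conjugations introduced by the duality only turn $S$ into $S^\dagger$, which is still an addressable transversal $S$-type gate, and leave $CZ$ fixed because $CZ$ is real.
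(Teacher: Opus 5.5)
Your proposal is correct and matches the paper's intended argument: the paper gives no separate proof, and the corollary is exactly the preceding interblock phase-symmetry proposition specialized to $\varphi=\pi/2$. In that specialization the stabilizer code's Pauli logical $\bar X_i$ supplies the required addressable $X$ symmetry, and the outputs $P(\varphi_i)=S$ and $CP(2\varphi_i)=CZ$ are the gates claimed. Your explicit remarks on reading ``bitwise'' as a logical product of $CS$ gates and on the $S\leftrightarrow S^\dagger$ sign bookkeeping make precise what the paper leaves implicit.
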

One can generalize this corollary to more controls and finer angles. 

While in the above results we have focused on logical gates, the exact same computations produce non-Pauli symmetries from stabilizers (with $X$-content) on a stabilizer code. We have seen above (Fig. \ref{fig:multicopy_tensor}c) a 3D color code that has a localized transversal $CZ$-gate as well as localized transversal $CZ$ symmetries. These are not special to the code per se but merely consequences of the fact it supports a bitwise transversal $CS$ gate.

\section{Generalized trace and gate propagation}
\label{sec:3}

\subsection{Transversal gates under Bell fusion}
As indicated by \cite{QL}, it is possible to learn about the symmetries of a tensor network created by gluing together smaller components through operator matching when the tensor legs are fused using Bell states. For completeness, we briefly review operator matching, tensor gluing, and some warm-up examples with concatenation. Familiar readers may skip ahead.

\begin{definition}[Tensor trace]
    A tensor trace of legs $j$ and $m$ refers to the operation where one sums over two of the tensor indices $a_j, a_m$ that correspond to those legs. Namely the trace is $V_{a_1,a_2,\dots}' = \sum_{a_j,a_m} V_{a_1,a_2,\dots,a_j,\dots, a_m,\dots}\delta_{a_ja_m}$. If $|V\rangle$ is the state representation of the tensor then this operation is equivalent to the Bell fusion $\langle \Phi^+|V\rangle$ where $\langle\Phi^+|=\langle00|+\langle 11|$ acts on qubits $j$ and $m$. In the case of stabilizer states (or Choi states of stabilizer encoding maps), these can be represented via their stabilizer tableau (or check matrix) $H_V$; the counterpart of this trace operation is called conjoining and is denoted as $\wedge_{j,m}H_V$ \cite{QL}.
\end{definition}

We will abuse terminology and use tensor trace, contraction, fusion, and conjoining interchangeably throughout the text as they all refer to the same operation.

Ref. \cite{QL} showed that symmetries on the dangling edges of the traced tensor network are preserved under Bell fusion if the operators $O_i, Q_j$ on the traced legs between traced tensors are compatible, i.e., $O_i=Q^*_j$. This implies that unitary transversal gates of any kind on the respective lego blocks will be preserved by the bigger quantum lego code if these compatibility conditions are satisfied. All operations with Bell fusions can be written as a self-trace operation where we contract two free indices in the same tensor network. When we glue two distinct blocks together, it can be understood as a self-trace on a network that has disconnected components. Consider a Bell fusion between any two sites $j,m$ of the same local dimension $q$, then the following holds.

\begin{lemma}[Matching Lemma\cite{QL,QL2}]
\label{lemma:operatormatching}
Suppose $U=U_{j}\otimes U_{m}\otimes U_{\{1,\dots,n\}\setminus\{j,m\}}$ is a unitary symmetry of a tensor (network) $V_{a_1a_2\dots a_j\dots a_m\dots}$. If $U_j=U_m^*$ where as before $(\cdot)^*$ denotes entrywise complex conjugation, then $U_{\{1,\dots,n\}\setminus\{j,m\}}$ is a unitary symmetry of the contracted tensor (network) with legs $j$ and $m$ contracted/fused.    
\end{lemma}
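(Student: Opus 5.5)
The plan is to write the contraction as a Bell fusion and move the symmetry through the Bell effect using the transpose trick. By the definition of tensor trace, the contracted tensor is $|V'\rangle = \langle \Phi^+|_{jm}|V\rangle$, where $\langle\Phi^+| = \sum_{a}\langle a a|$ is the unnormalized Bell effect on legs $j,m$, each of local dimension $q$. Write $W := U_{\{1,\dots,n\}\setminus\{j,m\}}$ for the part of $U$ away from the fused legs. We need to show $W|V'\rangle = |V'\rangle$.

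\textbf{Step 1: the Bell identity.} For any $q\times q$ matrix $A$ one has $(A\otimes I)|\Phi^+\rangle = (I\otimes A^T)|\Phi^+\rangle$. Taking adjoints gives $\langle\Phi^+|(A\otimes I) = \langle\Phi^+|(I\otimes A^T)$. Applying this with $A=U_j$ yields
\[
\langle\Phi^+|_{jm}(U_j\otimes U_m) = \langle\Phi^+|_{jm}(I\otimes U_j^T U_m).
\]
The hypothesis $U_j = U_m^*$ gives $U_j^T = U_m^\dagger$, and unitarity of $U_m$ then gives $U_j^TU_m = I$. Hence $\langle\Phi^+|_{jm}(U_j\otimes U_m) = \langle\Phi^+|_{jm}$.

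\textbf{Step 2: commute $W$ past the fusion.} The operator $W$ acts only on legs other than $j,m$, so it commutes with $\langle\Phi^+|_{jm}$. Using Step~1 and then $U|V\rangle=|V\rangle$,
\[
W|V'\rangle = \langle\Phi^+|_{jm}W|V\rangle = \langle\Phi^+|_{jm}(U_j\otimes U_m\otimes W)|V\rangle = \langle\Phi^+|_{jm}U|V\rangle = \langle\Phi^+|_{jm}|V\rangle = |V'\rangle.
\]
This is exactly the claimed symmetry. If the contraction produces the zero tensor, the statement holds trivially.

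\textbf{Remaining checks.} The only delicate point is bookkeeping, and the argument covers two variants. First, when $U$ is a symmetry only up to a phase, $U|V\rangle\propto|V\rangle$, the same computation gives $W|V'\rangle\propto|V'\rangle$. Second, when $W$ is itself not a product but acts on a partition $\mathcal B$ that refines $\{j\},\{m\}$ and the rest, the Bell identity is still applied only on legs $j,m$, so nothing changes. The identity is basis dependent, because $\Phi^+$ is defined in the computational basis. That is why the condition is $U_j=U_m^*$ rather than $U_j=U_m^\dagger$, and I would note this explicitly.
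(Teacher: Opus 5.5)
Your proof is correct and follows the standard argument behind this lemma. The paper does not reprove it but cites it, having defined the trace as the Bell fusion $\langle\Phi^+|V\rangle$ and noted earlier that $U\otimes U^*|\Phi^+\rangle=|\Phi^+\rangle$; that invariance is exactly your Step~1. One cosmetic point: the bra identity $\langle\Phi^+|(A\otimes I)=\langle\Phi^+|(I\otimes A^T)$ comes from the ket identity by taking adjoints \emph{and} substituting $A\to A^\dagger$ (using $\overline{A^\dagger}=A^T$), or more simply by checking matrix elements directly, so the identity you use is correct.
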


 This means that if we have a 1-qubit traversal gate on any lego, then if we trace two legs in a compatible way, then the gate is also transversal on the post-traced code.
Graphically, an example of such operator matching is shown in Figure~\ref{fig:qlego}c where $\bigotimes_i\mathcal O_i, \bigotimes_j\mathcal Q_j$ are local unitary symmetries of the blue and red tensors respectively and two legs are contracted. If $\mathcal O_3=\mathcal Q_3^*$ and $\mathcal O_2=\mathcal Q_2^*$, then $\mathcal O_4\otimes \mathcal O_5\otimes \mathcal O_1\otimes \mathcal Q_1\otimes \mathcal Q_4$ is a unitary symmetry of the tensor network. Operationally, Lemma \ref{lemma:operatormatching} gives us a precise way to propagate the unitary symmetry, and hence transversal gates, of the codes formed from legos. A graphically intuitive way to visualize this propagation is by operator pushing or operator flow \cite{Pastawski_2015,QL} in the tensor network. 

Note that we have placed no requirements on the operators making up a symmetries beyond being unitary! Therefore, the propagation of symmetries is much more general than the transversal Pauli gates in tensor network codes introduced in \cite{tnc}, the Pauli web in ZX-based analysis \cite{Bombin_2024,delaFuente:2024fyf}, and the Clifford gates one constructs from the Pauli stabilizer formalism. Importantly, none of the above statements relies on the atomic legos being Pauli stabilizer codes. As long as their transversal gates are known, these symmetries will carry over under lego gluing\cite{Shen:2023xmh}.

\begin{figure}
    \centering
    \includegraphics[width=0.8\linewidth]{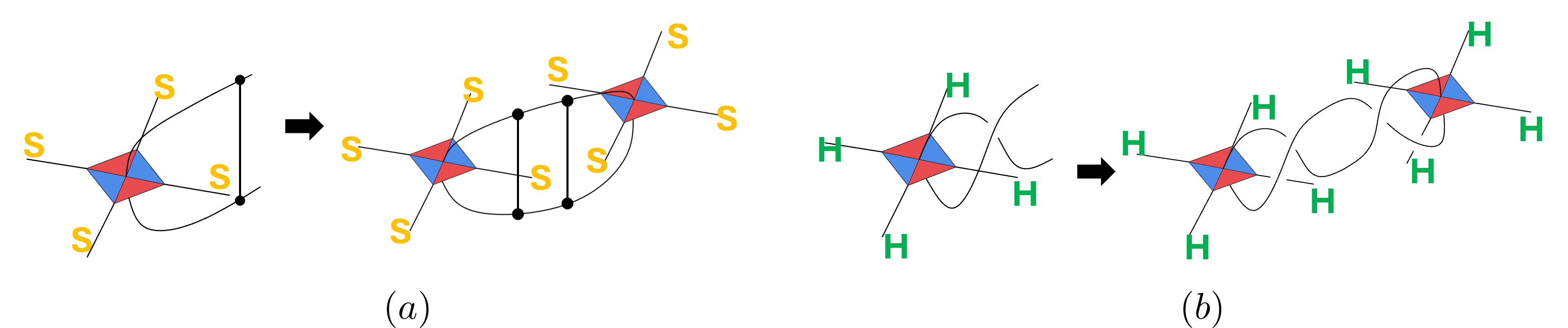}
    \caption{Transversal $S$ and $H$ gates emerge from unitary symmetries that are not unitary product such as $CZ$ and $SWAP$.}
    \label{fig:422swapHS}
\end{figure}

An obvious application of this is in the context of (generalized) code concatenation such as \cite{tnc} and holographic codes\cite{Pastawski_2015} where transversality follows trivially by gluing together suitable lego blocks with compatible symmetries. It can be especially helpful for understanding operator pushing of multi-qubit gates. Te be precise we define (partial) concatenation of codes of uniform local dimension as follows, but this could be generalized to nonuniform local dimension.

\begin{definition}[Partial Concatenation]
\label{def:partialconcat}
    Consider an $[[n,k]]_q$ code $C_1$ and an $[[n',k']]_q$ code $C_2$. A \emph{partial concatenation} is when a list of distinct physical qudits of $C_1$ indexed by $B\subset \{1, \dots, n\}$ is conjoined entry-by-entry to a list of distinct logical qudits of $C_2$ indexed by $B' \subset \{1,\dots, k'\}$  (where of course $|B| = |B'|$).
\end{definition}

It is clear that the special case $n = k'$ and $B = \{1,\dots,n\}$ reduces to the conventional code concatenation, and in particular the traditional case is when $C_2 = C^{\otimes k'}$ is $k'$ copies of a $[[n'',1]]$ code $C$. A key difference in the more general case is that logical degrees of freedom can be added during a partial concatenation, e.g. holographic codes. 

\begin{proposition}
\label{prop:isometrictrace_sym}
    Let $V_1, V_2$ be two tensors where their Choi states over $k+n$ and $k'+n'$ legs have symmetries 
    $$(L\otimes O_B\otimes Q_{\{1,\dots,n\}\setminus B})|V_1\rangle =|V_1\rangle \text{ and } (O^*_{B'}\otimes L'_{\{1,\dots k'\}\setminus B'}\otimes R)|V_2\rangle=|V_2\rangle,$$
    where as above $|B| = |B'|$. Provided that the subsystem on qubits $\{1,\dots,k\} \uplus (\{1,\dots,k'\}\setminus B')$ (the disjoint union of the untraced logical degrees of freedom) is maximally mixed, then $|V\rangle=\wedge_{B,B'}(|V_1\rangle\otimes |V_2\rangle)$ is the encoding tensor of a $[[n+n'-|B|,k+k'-|B|]]$ code where $\bar{L}^*\otimes\bar{L}'^*=Q\otimes R$. 
\end{proposition}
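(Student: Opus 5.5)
The plan has three parts: build the product state, apply the Matching Lemma to get the symmetry of the contracted tensor, and then use Lemma~\ref{lemma:isometry} and Proposition~\ref{prop:duallogical} to read that symmetry as an encoding isometry with a logical gate.

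\textbf{Step 1: symmetry of the product state.} Consider $|V_1\rangle\otimes|V_2\rangle$, a state on $k+n+k'+n'$ legs. Because each factor is invariant under its own symmetry, the product is invariant under
$$U=L\otimes O_B\otimes Q\otimes O^*_{B'}\otimes L'\otimes R.$$
This is a unitary symmetry of the disconnected network, relative to the partition refined by the individual legs.

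\textbf{Step 2: contract the legs and apply the Matching Lemma.} The operation $\wedge_{B,B'}$ is a sequence of $|B|$ self-traces, each pairing a leg $b\in B$ with its partner $b'\in B'$. On every such pair the operators are $O_b$ and $O_b^*$, so the compatibility condition $U_j=U_m^*$ holds. I apply Lemma~\ref{lemma:operatormatching} once per pair, by induction on $|B|$. After each trace the remaining operator is again a unitary symmetry of the partially contracted network, and the pairs still to be traced keep their matched operators. After all $|B|$ traces, $L\otimes L'\otimes Q\otimes R$ is a unitary symmetry of $|V\rangle$. Here $L$ and $L'$ act on the surviving logical legs $J=\{1,\dots,k\}\uplus(\{1,\dots,k'\}\setminus B')$, and $Q,R$ act on the $n+n'-|B|$ surviving physical legs.

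\textbf{Step 3: interpret $|V\rangle$ as a code.} Assign $J$ as the logical legs. By hypothesis the reduced state of $|V\rangle$ on $J$ is maximally mixed, i.e.\ $\Tr_{J^c}|V\rangle\langle V|\propto I$. Lemma~\ref{lemma:isometry} then says that the dual map $V:\mathcal H_J\to\mathcal H_{J^c}$ is proportional to an isometry. This gives an $[[n+n'-|B|,k+k'-|B|]]$ code, after normalizing by the overall scalar. Proposition~\ref{prop:duallogical}, applied to the symmetry $(L\otimes L')_J\otimes(Q\otimes R)_{J^c}$, gives the logical operator $(L\otimes L')^*=L^*\otimes L'^*$ with physical implementation $Q\otimes R$. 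This is the claim $\bar L^*\otimes\bar L'^*=Q\otimes R$.

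\textbf{Main obstacle.} The delicate point is the normalization and nonvanishing of the contracted tensor. Bell fusion could in principle annihilate the state or leave it unnormalized, and the Matching Lemma only gives invariance up to this. The maximal-mixedness hypothesis is what rules out $|V\rangle=0$, since a zero state cannot have a full-rank reduced density matrix on $J$. With $|V\rangle\neq 0$, the symmetry relation $U|V\rangle=|V\rangle$ is scale invariant, so rescaling $|V\rangle$ to normalize it does not affect Step 2. The remaining care is bookkeeping of leg orderings in the self-traces, which is routine.
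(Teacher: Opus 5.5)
Your proposal is correct and follows the route the paper intends: the paper leaves the proof as an exercise following from Lemma~\ref{lemma:operatormatching} and Proposition~\ref{prop:duallogical}, with Lemma~\ref{lemma:isometry} turning the maximal-mixedness hypothesis into the isometry, exactly as you do. One small caveat is that your leg-by-leg induction assumes $O_B=\bigotimes_{b\in B}O_b$, which the statement does not require; for a general unitary $O_B$, regard $B$ and $B'$ as single composite sites of dimension $q^{|B|}$, so the fusion is one Bell fusion, and apply Lemma~\ref{lemma:operatormatching} once.
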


In this statement we have abused notation slightly: if $B = (b_1,\dots, b_r)$ and $B' = (b'_1,\dots, b'_r)$ then the tracing operator $\wedge_{B,B'} = \wedge_{b_1,b'_1} \cdots \wedge_{b_r,b'_r}$. The proof is left as an exercise for the reader as it follows simply from Lemma~\ref{lemma:operatormatching} and Prop.~\ref{prop:duallogical}. Note that if $|V_1\rangle, |V_2\rangle$ are both isometries when mapping from $k$ to $n$ qudits and $k'$ to $n'$ qudits respectively, then this is nothing but a partial concatenation. When tracing together tensors where the $m$ legs being traced on a tensor can be dualized to an isometry that uses said legs as input, then the trace is also said to be \emph{$m$-isometric}.

While it is clear how logical gates propagate under concatenation, here we emphasize that tensor contractions are more flexible as no isometric condition needs to apply. Such a generalization is needed for sparsification in \cite{Cao:2025oep}.

\begin{corollary}\label{coro:completetransversal}
    The arbitrary Bell fusion of seed codes with completely transversal gates $Q$ where $Q=Q^*$ is an $[[n,k]]_q$ stabilizer code with strongly transversal $Q^{\otimes n}$. If the $k$ qubit subsystem is also maximally mixed, then the same gate is completely transversal and implements $Q^{\otimes k}$ logically.
\end{corollary}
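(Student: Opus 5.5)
We argue on the Choi states and repeatedly invoke the Matching Lemma~\ref{lemma:operatormatching}. Each seed code $C_a$ is an $[[n_a,k_a]]_q$ code with a completely transversal gate $Q$, meaning $\bar{Q}^{\otimes k_a}=Q^{\otimes n_a}$. By Prop.~\ref{prop:duallogical}, a logical $Q^{\otimes k_a}$ implemented by $Q^{\otimes n_a}$ corresponds to the Choi-state symmetry $(Q^{*})^{\otimes k_a}\otimes Q^{\otimes n_a}$. Since $Q=Q^*$, this is $Q^{\otimes(k_a+n_a)}$. So the first step is to record that every seed tensor $|V_a\rangle$ is invariant under $Q$ applied to \emph{every} leg, logical and physical alike. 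Taking the tensor product of all seeds then gives a disconnected network $\bigotimes_a|V_a\rangle$ with the global symmetry $Q^{\otimes N}$, where $N=\sum_a(k_a+n_a)$.

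Next we perform the Bell fusions one at a time, viewing each as a self-trace on the current network, as the paper allows. At every stage the network carries the symmetry $Q$ on all remaining legs. The two legs $j,m$ to be fused carry $U_j=U_m=Q$, and the compatibility condition $U_j=U_m^*$ holds precisely because $Q=Q^*$. The Matching Lemma then says $Q$ on all remaining legs is a symmetry of the fused network. Inducting on the number of fusions shows that the final tensor $|V\rangle$ is invariant under $Q^{\otimes(n+k)}$, where the $n+k$ legs are those left dangling. Note that no isometry condition was needed for this step.

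Finally we dualize. We designate $k$ of the dangling legs as logical and the remaining $n$ as physical. For $|V\rangle$ to define an $[[n,k]]_q$ code, Lemma~\ref{lemma:isometry} requires the $k$-leg subsystem to be maximally mixed. Under that hypothesis, Prop.~\ref{prop:duallogical} converts the symmetry $Q^{\otimes k}\otimes Q^{\otimes n}$ into the logical gate $(Q^*)^{\otimes k}=Q^{\otimes k}$, implemented by $Q^{\otimes n}$. This implementation is strongly transversal, and since the logical action is $\bar Q^{\otimes k}$ with the same $Q$, it is also completely transversal. This is exactly the second sentence of the corollary.

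For the first sentence, the real content is that $Q^{\otimes n}$ acting on the physical legs is a symmetry of the network. Restricted to the code (stabilizer) subspace, it acts as $Q^{\otimes k}$ on the logical legs, so the code admits the strongly transversal gate $Q^{\otimes n}$. The delicate point is that without the maximal-mixedness assumption the dual map need not be an isometry. To handle this I would simply read the first claim at the level of the code subspace, i.e.\ the image of the dual map. Should the fusion leave no logical legs, the claim reduces to $Q^{\otimes n}$ stabilizing the state. The statement also speaks of a ``stabilizer code''. If the seeds are stabilizer codes, then Bell fusion (conjoining) preserves that property by \cite{QL}, so this is inherited and needs no new argument. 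I expect the only genuine subtlety to be this bookkeeping of the isometry condition, since the symmetry propagation itself is immediate once $Q=Q^*$ is used to satisfy the matching condition.
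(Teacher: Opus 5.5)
Your proposal is correct and follows the paper's (implicit) argument: $Q=Q^*$ makes every Bell fusion satisfy the compatibility condition of Lemma~\ref{lemma:operatormatching}, so the symmetry $Q$ on all remaining legs survives any sequence of fusions, and Prop.~\ref{prop:duallogical} turns this into the completely transversal logical $Q^{\otimes k}$ once the logical subsystem is maximally mixed. One caution: in the non-isometric case, read your phrase ``acts as $Q^{\otimes k}$ on the logical legs'' only as the intertwining relation $Q^{\otimes n}V=VQ^{\otimes k}$, which holds for any linear $V$ and is enough to show $Q^{\otimes n}$ preserves $\mathrm{Im}\,V$; as the paper remarks, after the kernel is quotiented out the induced logical action need not be $\bar{Q}^{\otimes k}$.
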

For example, the arbitrary contraction of encoding tensors of any CSS codes will produce a Clifford-stabilized state that supports completely transversal CNOT gates. This is of course consistent with the conclusion from \cite{QL} where it is known that the contraction of CSS tensors remain a CSS code. However, additional care needs to be taken if the subsystem chosen to be the logical legs is not maximally mixed. Such encoding maps are not isometries but can be made into an isometry by quotienting out the kernel of the map. While $Q^{\otimes n}$ remains a valid logical operator, it need not implement $\bar{Q}^{\otimes k}$. This is why the code need not be completely transversal in general. 

\subsection{Generalized trace and multi-qubit transversal gates}
Although all tensor contractions can be thought of as a Bell fusion between two of the existing legs by projecting these qubits in the Choi state onto the $|\Phi^+\rangle=\frac{1}{\sqrt{2}}(|00\rangle+|11\rangle)$ state, it is often convenient to also consider generalized traces in our context where states other than $|\Phi^+\rangle$ is used.

\subsubsection{Clifford-deformed edge states}
A slight generalization is to insert a single-site Pauli or Clifford operator into the traced edge. In this work, we restrict our attention to stabilizer codes and will not consider non-Clifford deformations of such edge states.
\begin{definition}[Local Clifford deformed traces]
    A \emph{Clifford-deformed fusion} is the projection of the fusion sites to $|\Phi_C\rangle\equiv I\otimes C|\Phi^+\rangle$, where $C$ is a single-site Clifford operator. 
\end{definition}

A special subclass of these traces are ones with single qubit sites and $C$ being a Pauli operator. When tracing together stabilizer codes, deforming the edges with Pauli operators have very little change to the overall structure of the stabilizer group, but they alter the signs of the generators. We can easily see this through operator pushing: the result is carried through each trace identically except for Paulis that anti-commute with the inserted operator, which then switch signs. Although such changes are inconsequential when pushing Pauli operators, they are often important for pushing other unitary gates, such as in identifying codes with strongly transversal gates \cite{Rall} for magic state distillation (MSD)\footnote{While strong transversality is not strictly necessary in MSD as one can simply make local unitary deformations, a uniform description does significantly simplify the analysis.}.

For example, to preserve strong transversality of some operator $O$, one has to glue together codes where one supports symmetry that involves $O$ and the other involves $O^*$ to satisfy the operator matching condition if we want to preserve some kind of transversality involving only physical $O$ gates while growing the codes. Generally, $O, O^*$ represent distinct operators that differ by more than just an overall phase. To overcome this issue, it is also convenient to consider local unitary deformation $U$ on one of the lego blocks to be traced such that $UOU^{\dagger}=\omega O^*$ for some phase factor $\omega$. For many commonly used gates, $U$ can often be identified as Pauli matrices. For example, to create a bigger code by tracing two $[[5,1,3]]$ codes (which support transversal $SH$ gates), one applies a local $U=Y$ deformation on one of the code blocks so that the implementation of the transversal $SH$ on this code becomes $\overline{SH}=(SH)^*\otimes (SH)^{\otimes 4}$. The tracing its first qubit with any leg of the other $[[5,1,3]]$ code produces an $[[8,2,3]]$ code that supports bitwise transversal $SH$ gate. This is of course equivalent to tracing together two undeformed $[[5,1,3]]$ codes with a $Y-$deformed edge $|\Phi_Y\rangle$ where $(SH)\otimes(SH)|\Phi_Y\rangle =\omega_Y|\Phi_Y\rangle$ (Fig.~\ref{fig:PauliDef}b). For a different example, $T\otimes T|\Phi_X\rangle =\omega_X|\Phi_X\rangle$ and hence $T$ gates are directly matched to $T$ gates over $X$-deformed edges, which are used in \cite{Shen:2023xmh} to produce codes with completely transversal $T$ gates. In fact, this trick works for all phase gates as seen in the following simple result.

\begin{lemma}\label{prop:3.2}
    Let $$P(\varphi)=\begin{pmatrix}
        1 & 0 \\
        0 & e^{i\varphi}
    \end{pmatrix}$$
    be any phase gate on a single qubit, then $P(\varphi)\otimes P(\varphi)|\Phi_X\rangle = e^{i\varphi}|\Phi_X\rangle$.
\end{lemma}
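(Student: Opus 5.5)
The plan is to expand $|\Phi_X\rangle$ in the computational basis and apply $P(\varphi)\otimes P(\varphi)$ term by term. By the definition of a Clifford-deformed fusion, $|\Phi_X\rangle = (I\otimes X)|\Phi^+\rangle$. With $|\Phi^+\rangle=\frac{1}{\sqrt2}(|00\rangle+|11\rangle)$ this gives
\[
|\Phi_X\rangle = \tfrac{1}{\sqrt2}\left(|01\rangle+|10\rangle\right).
\]
The overall normalization is irrelevant to an eigenvalue statement. So the same conclusion holds if one uses the unnormalized $\langle 00|+\langle 11|$ convention from the tensor trace definition.

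Next, I use that $P(\varphi)$ is diagonal, with $P(\varphi)|b\rangle = e^{ib\varphi}|b\rangle$ for $b\in\{0,1\}$. Hence $P(\varphi)\otimes P(\varphi)|b_1b_2\rangle = e^{i(b_1+b_2)\varphi}|b_1b_2\rangle$. Both basis states in the support of $|\Phi_X\rangle$ have Hamming weight $b_1+b_2=1$, so each picks up the same phase $e^{i\varphi}$. This gives $P(\varphi)\otimes P(\varphi)|\Phi_X\rangle = e^{i\varphi}|\Phi_X\rangle$.

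As a cross-check I would rederive the result through the Matching Lemma picture. Write $P(\varphi)\otimes P(\varphi)(I\otimes X)|\Phi^+\rangle = (I\otimes X)\,\bigl(P(\varphi)\otimes XP(\varphi)X\bigr)|\Phi^+\rangle$. By Lemma~\ref{lemma:phasegate_lemma}, $XP(\varphi)X = e^{i\varphi}P(-\varphi)$. Then $P(\varphi)\otimes P(-\varphi)=P(\varphi)\otimes P(\varphi)^*$ fixes $|\Phi^+\rangle$, because $U\otimes U^*$ fixes $|\Phi^+\rangle$. This recovers the phase $e^{i\varphi}$.

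No step is a real obstacle. The only point needing care is to fix the convention for which tensor factor carries the $X$ deformation. By symmetry of $|\Phi^+\rangle$ and of $P\otimes P$, the answer does not depend on that choice.
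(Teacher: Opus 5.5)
Your proposal is correct and matches the paper, which states this as a ``simple result'' without writing out a proof; the intended check is your direct computation that $|\Phi_X\rangle=\frac{1}{\sqrt2}(|01\rangle+|10\rangle)$ is supported only on weight-one strings, each of which acquires the phase $e^{i\varphi}$. One caution about your cross-check: Lemma~\ref{lemma:phasegate_lemma} as printed reads $XP(\varphi)=P(-\varphi)X$, which drops exactly the global phase $e^{i\varphi}$ you need, so justify $XP(\varphi)X=e^{i\varphi}P(-\varphi)$ by direct computation rather than citing that lemma.
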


A useful application of the matching lemma and the deformed trace is to produce codes with transversal $P(\varphi)$ gates, like $T$.

\begin{corollary}
\label{coro:isotrace_Tcode}
Consider codes $[[n_1,k_1]]$, $[[n_2, k_2]]$  where $\bar{G}_1=P(\varphi)^{\otimes n_1}$ and $\bar{G}_2=P(\varphi)^{\otimes n_2}$. Then an $m$-isometric trace using deformed edge $|\Phi_C\rangle$ produces a $[[n_1+n_2-2\ell, k_1+k_2]]$ code that admits logical action $\bar{G}_1\otimes \bar{G}_2=P(\varphi)^{\otimes n_1+n_2-2m}$.
\end{corollary}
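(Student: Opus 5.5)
The plan is to reduce the corollary to a deformed version of the Matching Lemma (Lemma~\ref{lemma:operatormatching}), with Lemma~\ref{prop:3.2} supplying the matching condition on each deformed edge. I read the deformation as $C=X$, since that is the case Lemma~\ref{prop:3.2} covers. I also read $\ell=m$: an $m$-isometric trace fuses $m$ pairs of physical legs, one leg of each pair from each code, so $2m$ physical legs are removed and no logical leg is touched. This gives the parameters $[[n_1+n_2-2m,k_1+k_2]]$.

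First I write the logical gates as symmetries of the Choi states. By Prop.~\ref{prop:duallogical}, $\bar G_a=P(\varphi)^{\otimes n_a}$ means that $|V_a\rangle$ is fixed by
$$L_a\otimes P(\varphi)^{\otimes n_a}, \qquad a=1,2,$$
where $L_a$ is the logical action $G_a$ conjugated on the logical legs. The encoding statement only gives this up to a global phase, which I carry along and never use.

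Next I prove the deformed matching step. Suppose $U_j\otimes U_m\otimes U_{\rm rest}$ fixes $|V\rangle$ and $(U_j\otimes U_m)^\dagger|\Phi_X\rangle=\omega|\Phi_X\rangle$. Then
$$\langle\Phi_X|_{jm}\,U_{\rm rest}|V\rangle=\langle\Phi_X|_{jm}(U_j\otimes U_m)^\dagger|V\rangle=\bar\omega^{\,*}\langle\Phi_X|_{jm}|V\rangle .$$
So $U_{\rm rest}$ is a symmetry of the fused tensor up to a phase. Because $P(\varphi)^\dagger=P(-\varphi)$, Lemma~\ref{prop:3.2} supplies exactly this condition with $\omega=e^{-i\varphi}$ for the pair $P(\varphi)\otimes P(\varphi)$. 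Applying the step to the $m$ edges one at a time, on the product symmetry of $|V_1\rangle\otimes|V_2\rangle$, leaves
$$(L_1\otimes L_2)\otimes P(\varphi)^{\otimes(n_1+n_2-2m)}$$
as a symmetry of $|V\rangle$, up to the phase $e^{\pm im\varphi}$.

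Finally I use the isometry hypothesis, which I expect to be the main obstacle. To convert this symmetry back into the logical statement $\bar G_1\otimes\bar G_2=P(\varphi)^{\otimes(n_1+n_2-2m)}$ through Prop.~\ref{prop:duallogical}, I need the new encoding map to be an isometry. Equivalently, by Lemma~\ref{lemma:isometry}, the untraced logical legs must be maximally mixed, as in Prop.~\ref{prop:isometrictrace_sym}. For this I would use $m$-isometricity: the $m$ traced legs of one block, say $V_2$, together with its logical legs, can be dualized to an isometry into its remaining physical legs. The fused network is then a composition of the isometry $V_1$ (tensored with identity) with this isometry. The extra local $X$ on each edge is unitary and does not spoil isometricity. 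So the composite is an isometry on $k_1+k_2$ inputs and the logical subsystem is maximally mixed. The main thing to check is that the $X$-deformation and the choice of which legs serve as inputs are compatible with this isometric factorization. Once that holds, Prop.~\ref{prop:duallogical} yields the claimed logical action up to the global phase.
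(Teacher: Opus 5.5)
Your proposal is correct and follows the paper's route. Lemma~\ref{prop:3.2} gives the matching condition on each $X$-deformed edge, and operator matching then gives the Choi-state symmetry $G_1^*\otimes G_2^*\otimes P(\varphi)^{\otimes(n_1+n_2-2m)}$; the isometric-trace hypothesis (which the paper handles by citing Prop.~\ref{prop:isometrictrace_sym}) converts this into the stated logical action, and you simply make explicit the deformed matching step and isometric factorization that the paper leaves implicit.
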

\begin{proof}
    Under deformed traces with $|\Phi_X\rangle$, $P(\varphi)\otimes P(\varphi)$ will stabilizer $|\Phi_X\rangle$ up to global phase. Hence the matching condition will be satisfied. Combined with Proposition~\ref{prop:isometrictrace_sym}, we know that $G_1^*\otimes G_2^*\otimes P(\varphi)^{n_1+n_2-2\ell}$ must be a global symmetry of the Choi state. 
\end{proof}

Note that the above lemma is a special case where all legs being traced have $P(\varphi)$s acting on them. However, for the more general codes that may have any power of $P(\varphi)$s acting on them, one can choose the undeformed Bell trace and deformed Bell traces based on whether the gate has support on the traced leg. To preserve the overall logical action $G_1, G_2$, we should perform an undeformed trace for places where no phase gates has support while reserving $|\Phi_X\rangle$ on legs where they do. An example of this to produce new classes of CSS-T codes is discussed in Sec.~\ref{sec:4}, but the Corollary also applies to non-CSS codes, and indeed, non-stabilizer codes.

\begin{figure}
    \centering
    \includegraphics[width=0.25\linewidth]{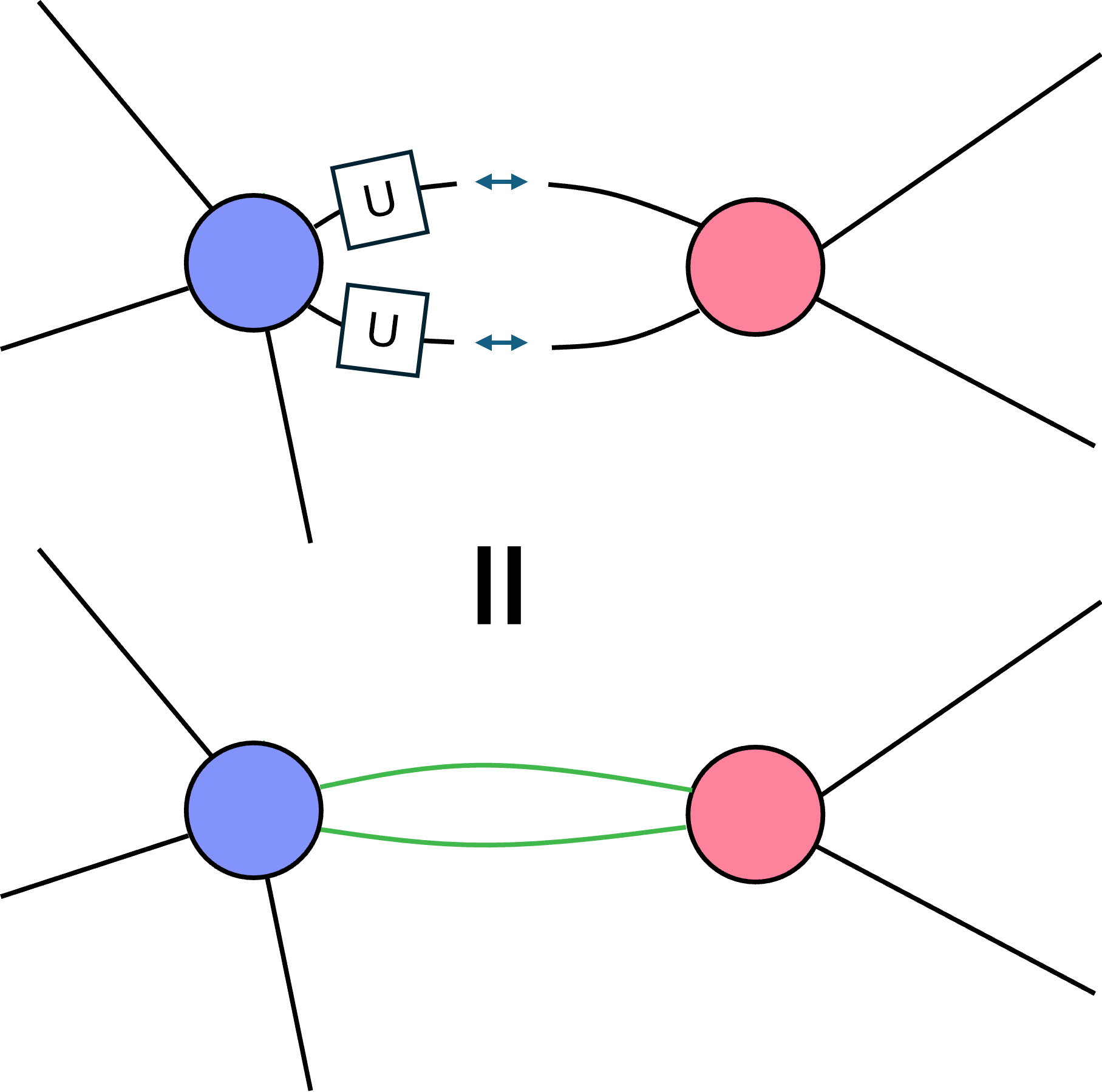}
    \caption{Local deformation followed by gluing is equivalent to projecting the relevant sites to unitary-deformed Bell states $U\otimes I|\Phi^+\rangle$ represented by the green edges.}
    \label{fig:LUdef}
\end{figure}

We can also utilize the deformed trace for multi-qubit gates. Again, consider the example of tracing together two $[[5,1,3]]$ codes. Each perfect code also supports an interblock transversal 3-qubit Clifford gate \cite{Gottesman_1998}
\begin{equation}
   K_3=\frac 1 2\begin{pmatrix}
        1& 0& i &0 &i &0& 1& 0\\
    0 &-1 &0 &i &0 &i &0 &-1\\
    0 &i &0 &1 &0 &-1 &0 &-i\\
    i &0 &-1 &0 &1 &0 &-i &0\\
    0 &i &0 &-1 &0 &1 &0 &-i\\
    i &0 &1 &0 &-1& 0 &-i &0\\
    -1 &0 &i &0 &i &0 &-1 &0\\
    0 &1 &0 &i &0 &i &0 &1\\
    \end{pmatrix}.
\end{equation}
However, $K_3^*$ is not a power of $K_3$. Therefore, straightforward Bell fusion does not preserve this gate and local Clifford deformations are needed to ensure the matching condition. This can be solved with multi-qubit $Y$-deformed Bell fusion
~(Fig.~\ref{fig:PauliDef}a). 
Incidentally, the $Y$-deformation also preserves the bitwise transversality of $SH$ gates from above. Hence codes produced with such deformed edges support both completely transversal gates.  
\begin{figure}
    \centering
    \includegraphics[width=0.4\linewidth]{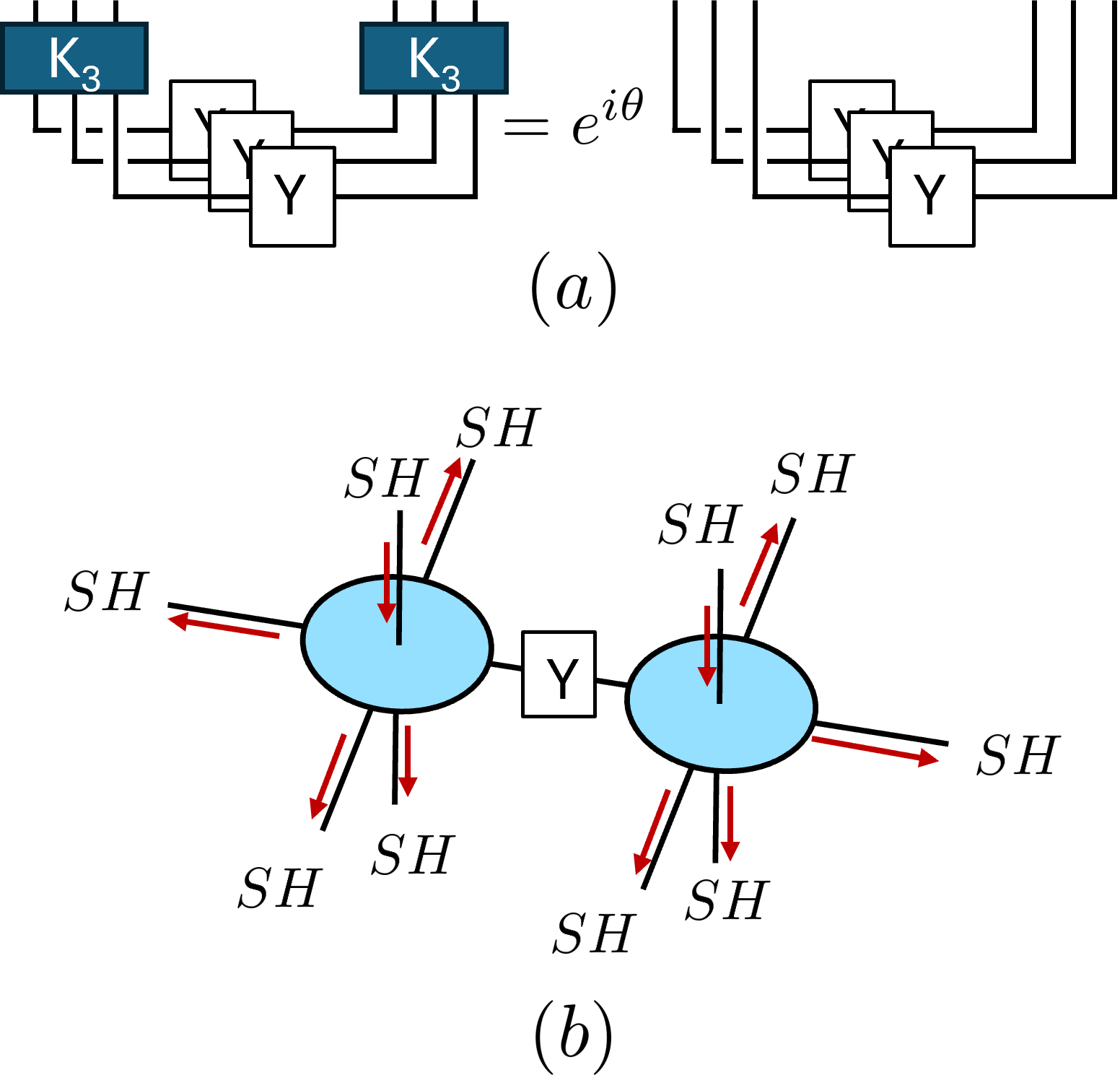}
    \caption{(a)  $K_3\otimes K_3|\Phi_Y\Phi_Y\Phi_Y\rangle\propto|\Phi_Y\Phi_Y\Phi_Y\rangle$. (b) Complete transversality of $SH$ is preserved.}
    \label{fig:PauliDef}
\end{figure}

 Other types of Clifford deformed edges have also been used in literature, such as in building the perfect code \cite{Su_2025}, deformed Steane code \cite{Shen:2023xmh}, graph states \cite{QL2},  twisted surface code~\cite{Yoder_2017,QL}, and hyperinvariant holographic codes \cite{Steinberg:2024ack}.

\subsubsection{Generalized trace with hyperedge states}

To motivate our case for more generalized edge (or hyperedge) states that can fuse more than two legs at once, we examine how intrablock multiqubit logical gates can remain transversal when tracing lego blocks together. For the sake of concreteness, we trace two code blocks each of which supports completely transversal $CZ$ gates, e.g. two Steane codes (Fig.~\ref{fig:CZ_steane}). From Lemma~\ref{lemma:operatormatching} and Corollary~\ref{coro:completetransversal}, we know that performing a single trace with the conventional Bell fusion produces a $[[12,2,3]]$ code with completely transversal $H$ and weakly transversal $S$ gates\footnote{Weakly transversal because $\overline{SS^\dagger}= S^{\otimes 6}\otimes (S^{\dagger})^{\otimes 6}$}. While it is clear that the code still supports bitwise transversal interblock CNOT and $CZ$ gates from the same Corollary, we can use operator pushing to see whether it also supports an intrablock $CZ$ gate on the two logical qubits. 

In Fig.~\ref{fig:CZ_steane}a, we use the symmetry of each Steane tensors to see the logical $CZ$ written as physical $CZ$, where one of these hits the traced edge $|\Phi^+\rangle$. As $CZ|\Phi^+\rangle= |00\rangle-|11\rangle = |\Phi_Z\rangle$, performing transversal $CZ$ as shown in Fig.~\ref{fig:CZ_steane}b deforms the edge. However, the induced $Z$-operator on this edge pushes out to the dangling edges by cleaning it with a stabilizer of one of the Steane tensors. This means that the intrablock logical $CZ$ remains depth-1 and is given by $((IZ)(CZ))^{\otimes 3}(CZ)^{\otimes 3}$ (Fig.~\ref{fig:CZ_steane}c). However, since the code does not support addressable logical $H$, we do not obtain an intrablock transversal CNOT gate for this $[[12,2,3]]$ code. 

Through a Clifford deformed trace using $|\Phi_H\rangle$, or equivalently a local Clifford deformations on one of the Steane codes, we can still construct a deformed version that does support intrablock transversal CNOTs (but not $CZ$s). Similar to Fig.~\ref{fig:CZ_steane}, we push a logical CNOT through the two logical legs to physical legs. This implements CNOTs on all pairs of dangling legs and a CNOT that acts on the connected edge that is $H$-deformed (Fig.~\ref{fig:CNOT_steane} top). Since the CNOT on $|\Phi_H\rangle$ can be commuted to a $CZ$, and from there reduced to a $ZH$ deformed edge, we can clean the $Z$ from of the connected edge again using a stabilizer of a Steane code on the control side, as illustrated by the red dots in Fig.~\ref{fig:CNOT_steane}. The control and target determine what additional Paulis should be pushed. In this case, we have chosen to clean the $Z$ operator, and it will always be pushed to the control side of the code block that executes the logical CNOT. Equivalently, one could push $X$ through the Hadamard and clean the $X$ to the target side.

\begin{figure}
    \centering
    \includegraphics[width=\linewidth]{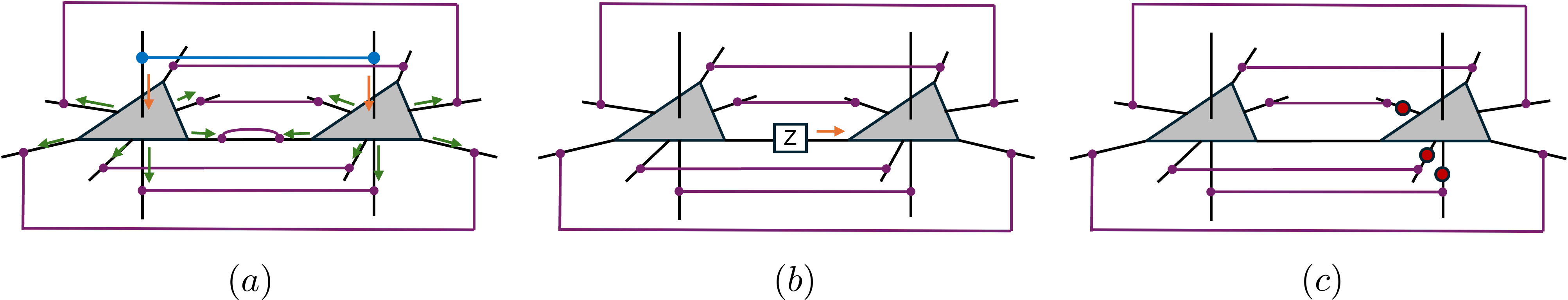}
    \caption{(a) Pushing a logical $CZ$ (blue) down pushes out 7 physical $CZ$s. (b) $CZ$ on the connected edge reduces to a $Z$-deformed edge. (c) $Z$-deformation can be cleaned using a $Z$ check on one Steane code, resulting in 3 additional $Z$ gates (red).}
    \label{fig:CZ_steane}
\end{figure}

\begin{figure}
    \centering
    \includegraphics[width=0.5\linewidth]{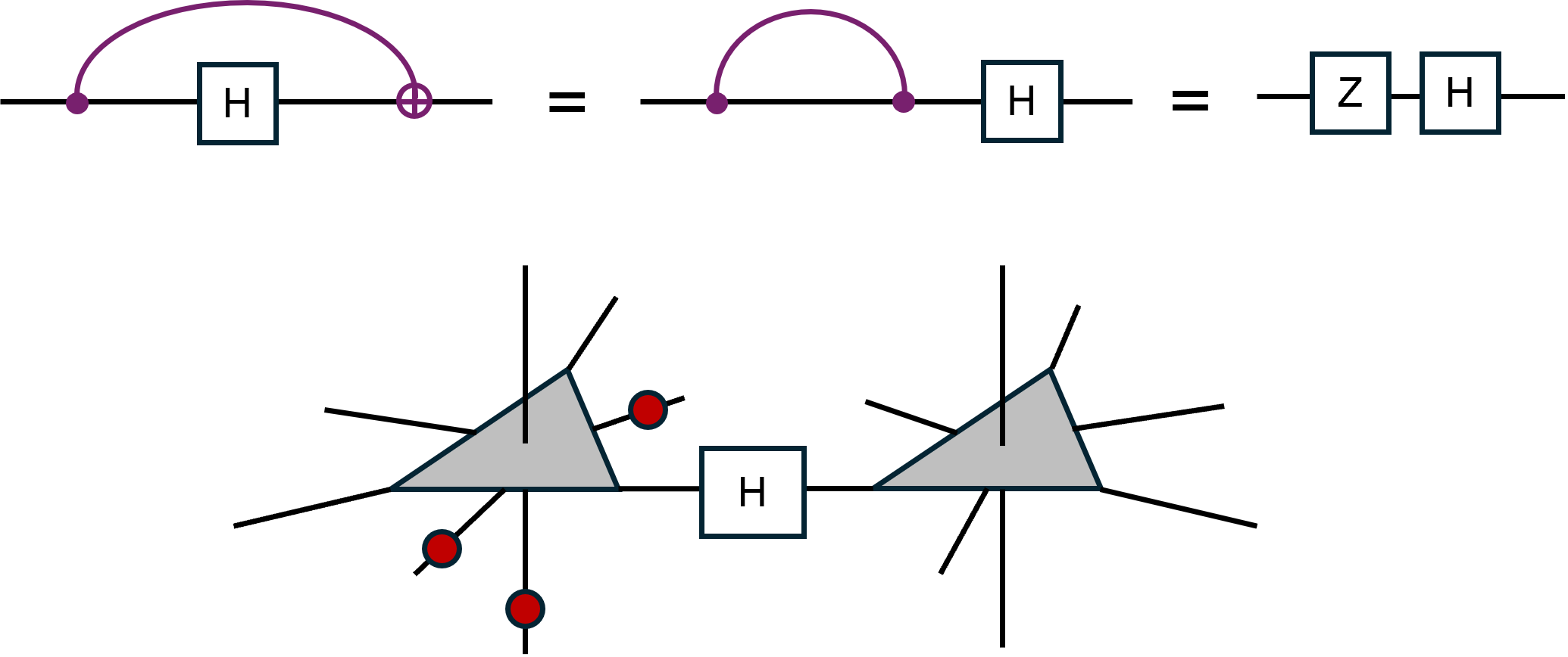}
    \caption{Transversal CNOT on the two logical qubits can be pushed to the dangling edges when using a $H$-deformed edge for fusion. Red dots indicate where Pauli $Z$ operators are needed to clean the induced $Z$ operator along the traced legs.) }
    \label{fig:CNOT_steane}
\end{figure}

As converting between $CZ$ and other locally Clifford-deformed gates simply amounts to using locally Clifford deformed edges or locally Clifford deformed codes, we will focus on $CZ$ and similar multicontrolled gates in the ensuing discussion. Nonetheless, it should be noted that codes supporting other intrablock transversal gates related by local Clifford conjugations can be easily constructed using the same technique. In particular, if the component codes support a transversal controlled-phase gates $CP$, where $P = P(\varphi)$ for some $\varphi$, then the gate acting on the connected edge deforms it as $CP\ket{\Phi^+} = (I\otimes P)\ket{\Phi^+} = |\Phi_P\rangle$. If $P$ can be cleaned by the adjacent tensors with local operators, then the resulting code also supports transversal $CP$ gate. Therefore, the exercise of constructing codes with intrablock $CP$ gates really reduces to subtasks that can identify suitable lego blocks on which $P$ is cleanable.

Unfortunately, the above technique is no longer enough to maintain intrablock $CZ$ targeted on any pair of logical qubits when $k>2$. Namely, if we trace together three or more such blocks together using deformed Bell edges there is no way to restrict the flow of $CZ$s only to two of the logical qubits targeted. Instead, we need a new type of edge state other than Bell states that can stem the flow of $CZ$s to other lego blocks while ensuring that $CZ$s hitting the edge state will leave the state invariant (up to single-site unitaries). In other words, the $CZ$s have to be cleanable on such edge states.

The obvious solution is an $r$-qubit GHZ state that we mentioned earlier. This state is stabilized, for any angle $\varphi$, by all $C^\ell P(\varphi)$ gates up to a local phase rotation $P(\varphi)$.\footnote{We take the convention that $C^0P = P$.} If the single-site phase gates $P = P(\varphi)$ can be cleaned using one of the lego blocks it joined, then the code will have addressable transversal intrablock $C^\ell P$s.  In this sense, the GHZ (hyper)edge state and higher degree generalizations that simultaneously connect three or more code blocks is exactly what we need to create codes with better addressability. For interblock transversal gates, we again note that $C^\ell P\otimes C^{\ell}P^{-1}|GHZ_r\rangle^{\otimes m} = |GHZ_r\rangle^{\otimes m}$ when each $C^\ell P$ acts on $\ell$ of the $m\geq \ell$ copies of the $r$-qubit GHZ states. This means that $C^\ell P$ operator flows can go through $\ell$ copies of the GHZ state without obstruction.

\begin{figure}
    \centering
    \includegraphics[width=0.9\linewidth]{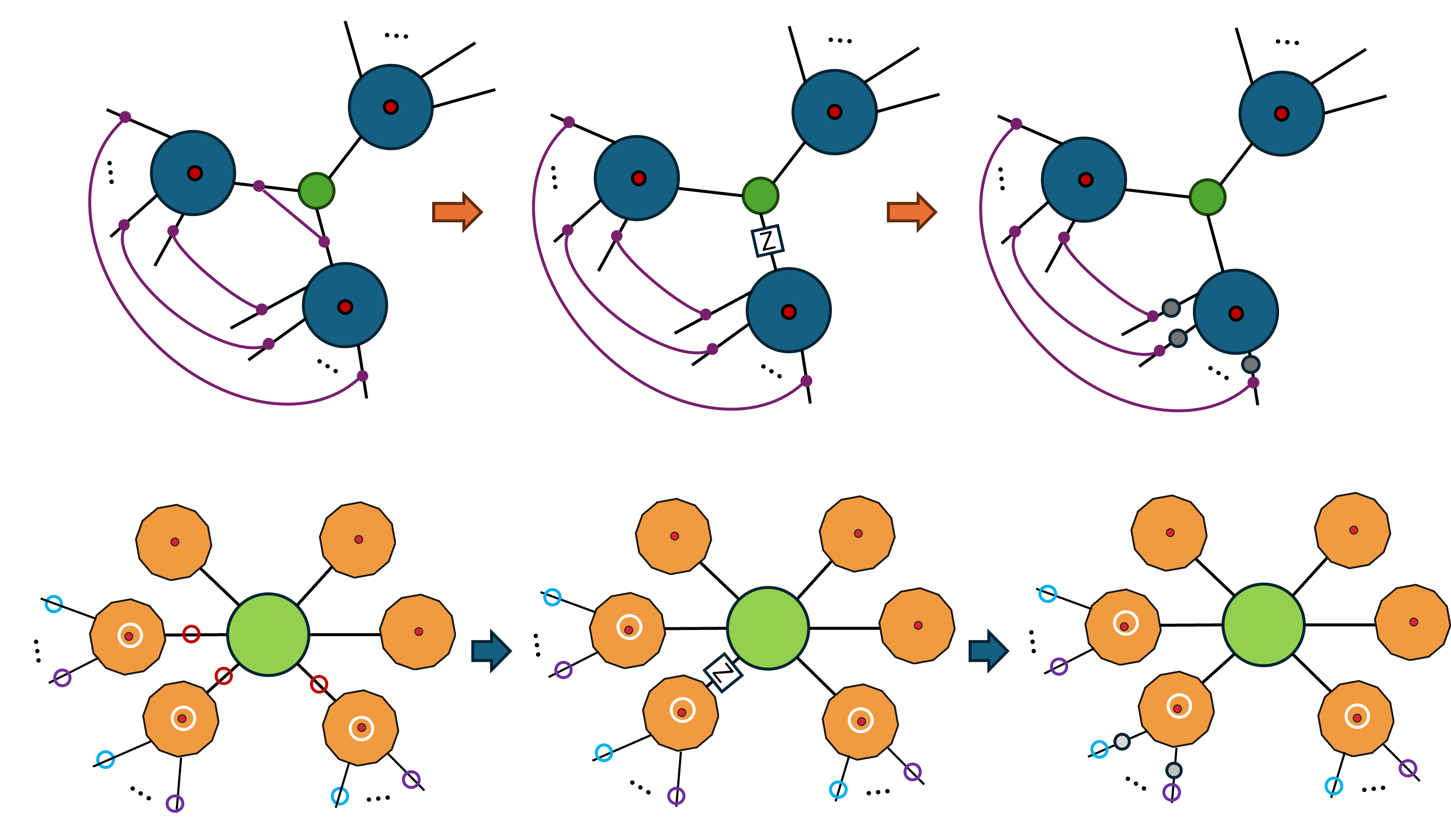}
    \caption{Top: Connecting 3 blocks (blue) each encoding a logical qubit (red) that support transversal $CZ$ gates with a GHZ hyperedge (green) supports a transversal intrablock $CZ$ on any two logical qubits. Bottom: When connecting codes (orange) that support transversal CCZ gates, then the resulting code supports intrablock transversal CCZ on any 3 logical qubits. The logical CCZ acts on the 3 logical qubits circled in white, and is implemented by first running single qubit gates (grey) followed by physical CCZs acting on legs with circles of the same color (blue, purple).}
    \label{fig:ghzedge}
\end{figure}

An example of this is shown in Fig.~\ref{fig:ghzedge} top for $CZ$ gates and $r=3$. One can think of each blue tensor as a Steane code connected through a GHZ hyperedge that connects 3 legs at once. Then for this $[[18,3,3]]$ code (or more generally the $[[6r,r,3]]$ code family), intrablock logical $CZ$s on any two qubits is transversal. Via operator pushing, $S$ gates are partially addressable, namely $\overline{S\otimes S^\dagger\otimes I}=S^{\otimes 6}\otimes (S^{\dagger})^{\otimes 6}\otimes I^{\otimes 6}$ acts on any of the two logical qubits. Logical Hadamards are no longer transversal for this code because $H\otimes U_1\otimes\dots\otimes U_{r-1}$ does not stabilize $|GHZ_r\rangle$ (for $r>2$) for any choice of single-qubit unitaries $\{U_i\}$. Instead, a single $H$ pushes to $H$ followed by a sequence of CNOTs on the remaining legs. 
Finally, the same code also admits transversal interblock $\overline{CZ}\otimes\overline{CZ}\otimes \bar{I}$ gates acting on two pairs of the logical qubits at once, which are partially addressable. 

There is no reason to restrict ourselves to examples of Clifford gates as the GHZ hyperedge works just as well with other controlled phase gates. We join together by a GHZ edge the $[[12,1,2]]$ code or its family in \cite{Vasmer_2019} that support transversal CCZ gates.  Intrablock CCZ on any triple of the logical qubits is transversal (Fig.~\ref{fig:ghzedge} bottom) where a family of $[[11r,r,2]]$ codes can be built by gluing together $[[12,1,2]]$ codes (orange) with a single $r$-qubit GHZ hyperedge (green). Then CCZ on any 3 logical qubits is transversal on a single code block. Interblock CCZs are also transversal but only have partial addressability where they need to act on all logical qubits on two different branches.

We summarize the above observations as the following 
formal statement.
\begin{proposition}\label{prop:GHZcomptran}
    Let $|V\rangle=\bigotimes_{i=1}^r|V_i\rangle$ where each $|V_i\rangle$ represents the Choi state of a non-trivial code (i.e. $d\geq 2$) that supports a completely 
    transversal $U = C^{\ell}P(\varphi)$ gate for some angle $\varphi$ and $\ell \geq 0$. Provied that $P(\varphi)$ can be cleaned off of any traced leg using logical identity operators, the generalized GHZ trace with $|GHZ_r\rangle$ where $r > \ell$ preserves intrablock transversality of $U$ on any subset of the logical qubits. 
\end{proposition}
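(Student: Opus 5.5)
The plan is to argue by operator pushing on the Choi state of the glued network, using the Matching Lemma (Lemma~\ref{lemma:operatormatching}) together with the fact that the GHZ hyperedge is invariant under controlled phases up to single-site phases. Set up notation first. Each $|V_i\rangle$ has logical legs $J_i$, physical legs $J_i^c$, and one distinguished physical leg $t_i$ that is fused into the hyperedge. The final encoding tensor is $|V'\rangle = \langle GHZ_r|_{t_1\dots t_r}\,|V\rangle$, with the untraced logical legs $\bigsqcup_i J_i$ as inputs. Fix a subset $A$ of $\ell+1$ logical qubits, since $C^\ell P$ acts on $\ell+1$ qubits; more generally, fix the $\ell+1$ logical legs on which we want $\overline{U}$ to act. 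These lie in some collection of blocks $i_1,\dots,i_s$.

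The first step is to write the logical gate inside each block as a physical transversal gate. Complete transversality of $U$ on $V_i$ gives a symmetry of the form $U^{\otimes}$ on the logical legs $\otimes\, U^{\otimes}$ on the physical legs, where the "thickened" legs group together the relevant qubits from different blocks as in Fig.~\ref{fig:multicopy_tensor}b. Because the gate is intrablock across branches, I would interpret $\overline{U}$ on $A$ by treating $\bigotimes_i |V_i\rangle$ as a single tensor whose legs are $(\ell+1)$-tuples drawn from the blocks containing $A$. Its symmetry then places one copy of $C^\ell P(\varphi)$ (or its conjugate, following Prop.~\ref{prop:duallogical}) on each tuple of corresponding legs. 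In particular, one $C^\ell P$ lands on the tuple of traced legs $(t_{i_1},\dots)$.

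The second step handles the hyperedge. Since $r>\ell$, the $C^\ell P(\varphi)$ acting on at most $\ell+1\le r$ of the GHZ legs satisfies $C^\ell P(\varphi)|GHZ_r\rangle = P(\varphi)_{t_j}|GHZ_r\rangle$ for any single chosen leg $t_j$, up to the global phase convention. This holds because on $|0\dots0\rangle$ the gate acts trivially, and on $|1\dots1\rangle$ it contributes $e^{i\varphi}$, which is exactly what one $P(\varphi)$ contributes. If fewer than $\ell+1$ distinct blocks are involved, repeated legs reduce $C^\ell P$ to a lower-order controlled phase on GHZ, and the same identity holds. The third step uses the hypothesis that $P(\varphi)$ on $t_j$ is cleanable by a symmetry of $|V_j\rangle$ that acts as the logical identity: multiply by that symmetry to move $P(\varphi)^{\dagger}$ off $t_j$ and onto dangling physical legs of block $j$ only. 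The operator on the traced legs then becomes the identity, which matches trivially, and Lemma~\ref{lemma:operatormatching} yields a symmetry of $|V'\rangle$. That symmetry is $\overline{U}_A$ on the logical legs and a depth-one product on the physical legs, namely $C^\ell P$'s on tuples of dangling legs plus single-site phase corrections. Logical legs outside $A$ carry the identity, which gives addressability. Isometry of the resulting map follows from Lemma~\ref{lemma:isometry}, because the logical legs remain maximally mixed after projecting onto GHZ: each $V_i$ is an isometry with $d\ge2$, so the traced leg is uncorrelated with its block's logical legs.

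I expect the main obstacle to be the bookkeeping in the second step. The difficulty is ensuring that the complex conjugates and the choice of which leg absorbs $P(\varphi)$ are consistent, and, when $A$ spans fewer than $\ell+1$ blocks, that the hyperedge is hit by a genuine $C^{\ell'}P$ with $\ell'<r$ rather than by something that fails to stabilize GHZ. I would handle this by checking the computational-basis action on $|0^r\rangle$ and $|1^r\rangle$ case by case.
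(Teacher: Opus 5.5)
Your proposal is correct and follows the paper's own argument. Complete transversality pushes $\overline{U}$ onto tuples of physical legs, so that one $C^\ell P(\varphi)$ lands on the traced legs. Since $r>\ell$, the identity $C^\ell P(\varphi)|GHZ_r\rangle = P(\varphi)_{t_j}|GHZ_r\rangle$ reduces it to a single-leg phase, which is then cleaned with a logical-identity symmetry of one block.

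Your isometry check via decoupling is a harmless addition the paper omits. The case you worry about, $A$ spanning fewer than $\ell+1$ blocks, does not arise in the intended setting of one logical qubit per $V_i$. If it did arise, the repeated-legs reduction would not rescue it, because complete (interblock) transversality provides no action on two logical qubits inside a single seed.
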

\begin{proof}
    The proof is straightforward by pushing any such $U$. For $U=P(\varphi)$, the transversality and addressability of logical $P(\varphi)$ are trivially guaranteed by the condition that $P(\varphi)$ can be cleaned off of the traced leg. For $U=C^{\ell}P(\varphi)$, pushing $U$ to the GHZ edge produces a $P(\varphi)$ on a single edge, which can then be cleaned.
\end{proof}

More generally, we need to consider hyperedge states like the GHZ states that can glue together multiple legos all at once. 
\begin{definition}
Let $|V\rangle$ be the Choi state of a tensor network over $n>r$ qudits. The generalized trace with (hyper)edge state $|\Phi\rangle_{A_1A_2\dots A_r}$ over $r$ qudits is given by the tensor (network) of the state $\langle \Phi|V\rangle$. 
\end{definition}

Some forms of this fusion is present in tensor networks like PEPS or random tensor networks \cite{Hayden_2016} where $|\Phi\rangle$ is variationally tuned to minimize certain costs like energy or chosen from a Haar random ensemble. For applications to quantum codes, we need to look for edge states with symmetries that can support desirable transversal gates. For example, states with intrablock symmetries like those in Fig.~\ref{fig:multicopy_tensor}b can be used to glue together codes and retain intrablock transversal logical gates---a Steane tensor as a generalized edge that glues together 8 other Steane codes will permit intrablock $CZ$s on 4 pairs of logical qubits simultaneously.

For the rest of this work, we will mostly focus on the application of the GHZ tensor as generalized hyperedges that produce addressable $P$ and $C^{\ell}P$ gates through quantum lego and leave the complete classification of generalized edge states as well as their impact on transversality and addressablility to future work.

\subsection{Localized symmetry, cleanability, and addressability}

In the previous section we encountered codes that support transversal gates of different degrees of addressability ranging from completely transversal gates like $H$ in fused Steane codes, to partially addressable gates like phase gates or interblock $CZ$ gates in $[[6r,r,3]]$ codes, and finally to fully addressable intrablock transversal $CCZ$ gates in $[[11r,r,2]]$ codes. A key but simple observation is that addressability descends directly from the localization of symmetries in a tensor network. That is, the unitary symmetry of an encoding tensor only has support on a subset/subregion of the legs, where better addressability comes from more localized symmetries. In addition, we need a wide variety of such localized symmetries that do not have support on the logical legs of interest. For a code with $k>\ell$ logical qubits, intrablock addressable gates would imply that the tensors of interest have at least ${k\choose \ell}$ such localized symmetries for $C^\ell P$ gates. This imposes a large number of constraints which will limit the search space, but a first-principle identification of such tensors remains non-trivial.

So, at least for stabilizer codes, we can find codes with localized symmetries by instead searching for codes with global symmetries. However such symmetries always follow a preprogrammed support (matching the logical-$X$ operators), and in particular can not create ``off-diagonal'' interblock gates when $k > 1$. Instead we go a different direction: an easy heuristic for building tensors that have localized symmetries is by gluing together atomic legos that have such properties, e.g. GHZ states or repetition codes. 
For GHZ states or $Z$-spiders, we recall that these tensors support localized symmetries related to $C^\ell P$ gates, as illustrated in Fig.~\ref{fig:zxhcalc}. Since $C^\ell P$ can be broken down into the contraction of $Z$-spiders and phase tensors, we can push such gates from one leg to another where $PP^{\dagger}$ (or $C^\ell P\otimes C^\ell P^{\dagger}$) stabilizes ($\ell$ copies of) the GHZ tensor. Combining these units with other tensors with similar global symmetries will then synthesize other tensor gadgets that have localized symmetries. For example in Fig.~\ref{fig:tritensor}, a single tri-branch tensor is constructed by concatenating the repetition code with codes that have completely transversal $C^\ell P$ gates. Thanks to the localized symmetry of the bit-flip repetition code, the tri-branch tensor can now have $C^\ell P$ symmetries supported on no more than 2 branches. We can formalize such codes with localized symmetries as follows.

\begin{figure}
    \centering
    \includegraphics[width=0.9\linewidth]{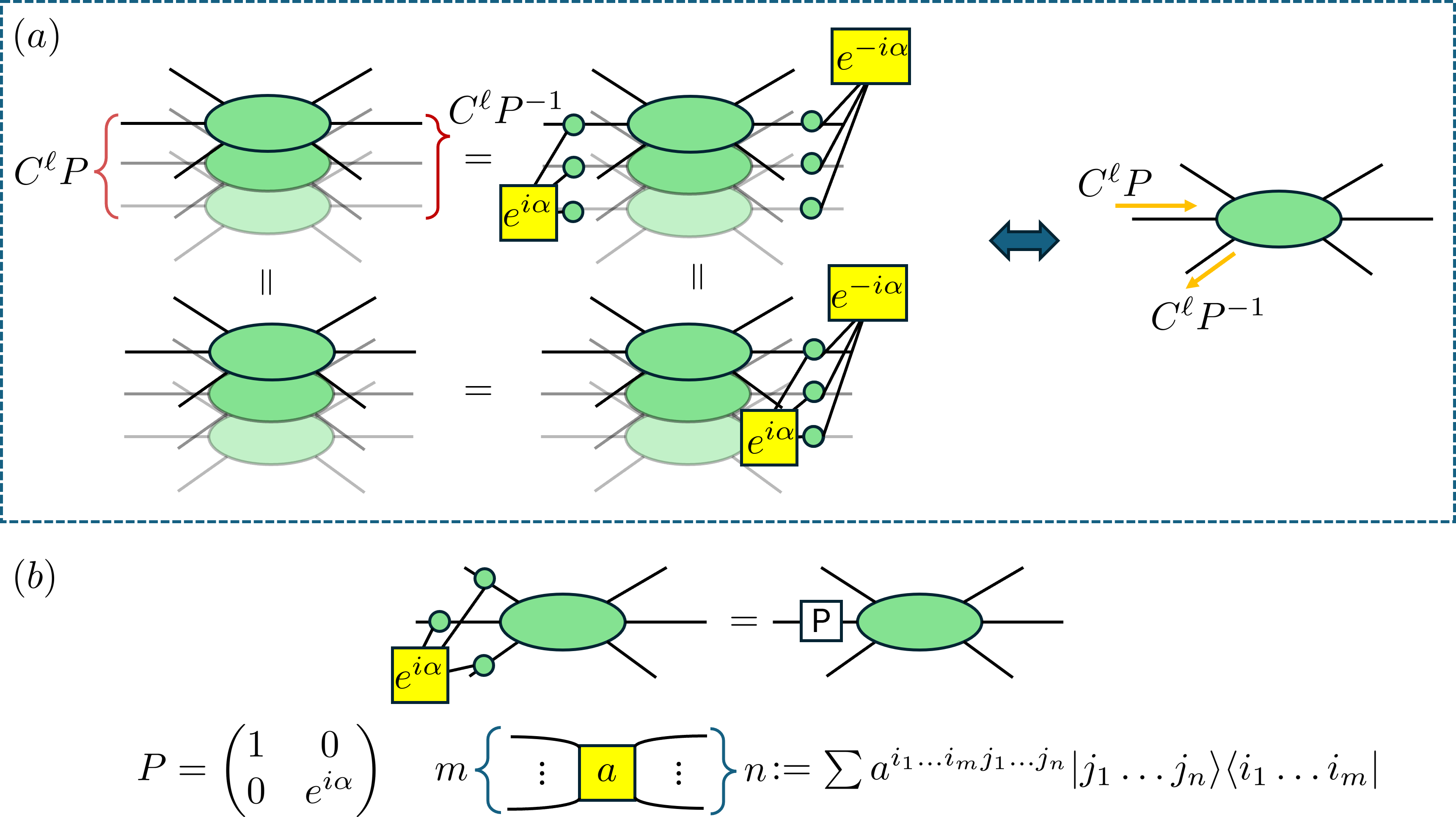}
    \caption{(a) Interblock transversal gates are given as localized symmetries over the GHZ tensor (green). For simplicity, even though operator pushing of the multi-qubit interblock gates is over $m$ copies of the code, we only show the operator pushing on one copy (right). (b) Intrablock gates will descend from the localized symmetry. Here $P = P(\alpha)$ is a phase gate and the yellow ``H-box'' is defined as the tensor of the map at the bottom.}
    \label{fig:zxhcalc}
\end{figure}

\begin{figure}
    \centering
    \includegraphics[width=0.4\linewidth]{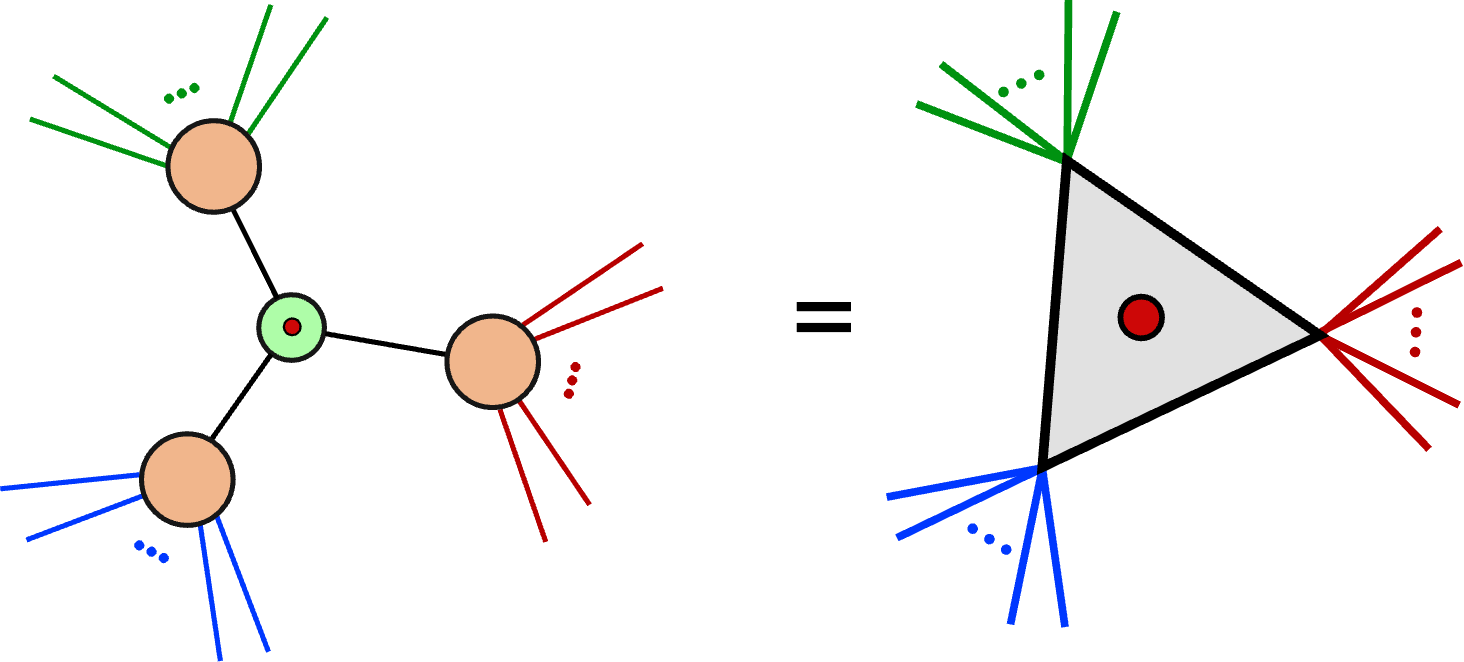}
    \caption{A single $Z$-spider of valence 4 is fused with three other components (orange) which support bitwise transversal $C^{\ell}Z$ gates. The central red dot represents the logical leg. Hence a single logical $C^\ell Z$ can be pushed to any of the three branches. In the same way, bitwise $C^\ell Z$ acting on any branch can be pushed to bitwise $C^\ell Z$ acting on another branch of a different color.}
    \label{fig:tritensor}
\end{figure}

\begin{definition}\label{def:branchcode}
    A \emph{($b$-)branch code} is constructed by concatenating a $[[b,1,d_x=b/d_z=1]]$ bit-flip repetition code with $b$ number of $[[n_0,1,d]]$ ``seed'' codes. In the resulting $[[bn_0,1,d]]$ branch code, $b$ distinct colors are assigned to each branch of the concatenated $[[bn_0,1]]$ code each with $n_0$ leaves. Suppose each $[[n_0,1,d]]$ seed code supports strongly transversal (controlled-)phase gates $\mathcal{U}=\{U\}$ where each $U\in\mathcal U$ has $U^N=I$ for some $N$, then we say it is a $(b,\mathcal U)$-branch code.
\end{definition}

Note that if a branch code supports $U\in\mathcal{U}$, then it also supports $U^r$ for any integer $r$ and hence it suffices to describe $\mathcal{U}$ by the finest phase angle supported at each arity of control. Yet, as all stabilizer codes admit transversal Pauli $Z$, we omit explicitly referencing it when relevant.
For example, gluing Steane tensors with a $b$-repetition code, we obtain a $(b,\{CZ,S\})$-branch code. In the same way, we can build $(b, \{CCZ\})$-branch code from contracting the $[[12,1,2]]$ blocks (where $Z\in \mathcal{U}$ implicitly), and $(b, \{T,CS,CCZ\})$-branch code from contracting 15-qubit quantum Reed-M\"uller (QRM) codes. In the last example, the $CZ$ is more localized compared to $T$ or $CS$ gates. 

\begin{lemma}
    Consider a $(b,\mathcal{U})$-branch code. Then every $C^\ell P\in \mathcal{U}$ can be localized to a single branch. That is, we may take the logical $\overline{C^\ell P}$ to be supported  on a single branch.
\end{lemma}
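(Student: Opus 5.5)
We prove the lemma by operator pushing through the branch-code network. The network is a central $Z$-spider, namely the $(b+1)$-leg GHZ tensor that is the Choi state of the bit-flip repetition code. Its logical leg is the red dot, and each of its $b$ remaining legs is Bell-fused to the logical leg of a seed code $|V_j\rangle$, $j=1,\dots,b$. We take $b\geq 2$, which the repetition-code structure requires.

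First fix a single code block and $\ell=0$, so $U=P(\varphi)\in\mathcal U$. Each seed code supports strongly transversal $P(\varphi)$. By Prop.~\ref{prop:duallogical}, $|V_j\rangle$ then has a symmetry $Q_j\otimes P(\varphi)^{\otimes n_0}$ with a definite logical action $Q_j$ on the seed logical leg. Since $Q_j$ is a diagonal logical operator that commutes with $\bar Z$, we take it to be $P(\varphi_j)$ for some angle $\varphi_j$ (up to phase).

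On the GHZ tensor, the operator $P(\theta)$ placed on the red leg together with $P(-\theta)$ (conjugated as needed) on any one chosen leg $j$, and identity elsewhere, is a localized symmetry. This is the $PP^\dagger$ symmetry of the $Z$-spider from Fig.~\ref{fig:zxhcalc}. Choosing $\theta$ so that the operator on leg $j$ matches $Q_j^*$, the Matching Lemma~\ref{lemma:operatormatching} applies across the single fused edge $j$. All other branches carry identity on both sides of their edges, so they match trivially.

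The result is a symmetry of the branch code supported only on the red leg and the $n_0$ leaves of branch $j$. Converting it back with Prop.~\ref{prop:duallogical} gives a logical $\overline{P(\cdot)}$ implemented by $P(\varphi)^{\otimes n_0}$ on branch $j$ alone. If needed, we take powers to obtain exactly the logical $P(\varphi)$ implemented by the global transversal gate.

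For $\ell\geq1$, we work on $\ell+1$ copies of the branch code, thickening the legs as in Fig.~\ref{fig:multicopy_tensor}a. Each seed block in the copies supports interblock transversal $C^\ell P$, which yields a symmetry of the form (logical $C^\ell P$-type operator)$\otimes (C^\ell P)^{\otimes n_0}$ on the thickened seed tensor. The $(\ell+1)$-fold GHZ tensor has the localized symmetry $C^\ell P\otimes (C^\ell P)^{-1}$ on any two thickened legs, as in Fig.~\ref{fig:zxhcalc}a. This holds because $C^\ell P$ is diagonal and the GHZ copies force each basis string to agree on all legs.

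We then place $C^\ell P$ on the red leg and its inverse-conjugate on leg $j$, and apply the Matching Lemma on edge $j$ as before. This gives the logical $\overline{C^\ell P}$ supported on the red leg and branch $j$ only, realized by $(C^\ell P)^{\otimes n_0}$ there.

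The main obstacle is the matching step. We must verify that the logical action induced on the seed's logical leg is exactly the conjugate of the spider-side operator on that leg, including cases where the seed's logical action is a different power or angle. The diagonal $Z$-spider symmetry lets us insert any diagonal $D$ and $D^{*\,-1}$ on two legs, which absorbs any diagonal logical action. The remaining care is to confirm that the global phases and the mapping from the seed's logical action back to the branch code's logical gate agree with the global transversal gate's logical action. We handle this by comparing with the global symmetry: composing localized symmetries over all branches must reproduce the strongly transversal one, which fixes the angles consistently.
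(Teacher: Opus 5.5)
Your proposal is correct and follows the paper's route: push the logical gate from the repetition code's logical leg through the $Z$-spider onto a single physical leg, then through that branch's seed using its transversal symmetry. Your Matching-Lemma bookkeeping and the $(\ell+1)$-copy thickening spell out what the paper's two-line proof leaves implicit, but there is one small slip in your final paragraph: the spider symmetry pairs a diagonal $D$ with $D^{-1}=D^*$, whereas $(D^*)^{-1}=D$ (your earlier use of $P(\theta)\otimes P(-\theta)$ and $C^\ell P\otimes (C^\ell P)^{-1}$ is the correct form).
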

\begin{proof}
    Any $C^\ell P$ gates can be pushed from the logical leg of the repetition code to any one of its physical legs as a power of the $C^\ell P$ gate. Since each seed code is transversal, these can then be pushed into a single branch. 
\end{proof}

\begin{lemma}
    Consider a $b$-branch code with each branch seed having distance $d\geq 2$, and $A$ a set of legs containing at most one leg from each branch. Then any logical Pauli operator can be cleaned so as to have no support on $A$.
\end{lemma}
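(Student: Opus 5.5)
Since a $b$-branch code is the concatenation of the bit-flip repetition code with $b$ seed codes, the natural route is to clean branch by branch: remove the support of a logical Pauli from the single leg of $A$ in each branch using that branch's seed code, and check that the logical action is unchanged. Here ``cleaned'' means multiplied by a stabilizer of the branch code, i.e.\ a Pauli symmetry of the Choi state that acts trivially on the logical leg. I read $A$ as a set of physical legs, since the logical leg sits in no branch.

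I would proceed as follows.

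\textbf{Step 1: seed-level cleaning.} For a single $[[n_0,1,d]]$ seed code with $d\geq 2$, any single physical qubit $j$ is correctable, because $d\ge 2$ means every single-qubit error is detectable. By the cleaning lemma for stabilizer codes, every logical Pauli of the seed code has a representative with no support on $j$. In the language of this paper, a Pauli $U$ placed on the seed's logical leg can be pushed to the physical legs avoiding leg $j$. Equivalently, the seed's stabilizer group acts on qubit $j$ with full Pauli content. That equivalence lets us also clean seed stabilizers, not only seed logicals, off qubit $j$.

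\textbf{Step 2: push through the repetition code.} Take a logical Pauli $\bar P$ of the branch code. Using the stabilizers of the concatenated code, I write it as a product of seed-level operators.
\begin{itemize}
\item A logical $\bar X$ pushes through the repetition code to $X^{\otimes b}$ on its physical legs, hence to a seed logical $\bar X$ on every branch.
\item A logical $\bar Z$ pushes to a single $Z$ on one repetition leg, hence to a seed logical $\bar Z$ on one branch.
\end{itemize}
More generally, any representative of $\bar P$ decomposes branch by branch into a seed logical times a seed stabilizer on each branch. This uses the concatenated stabilizer group: the seed stabilizers together with the lifted repetition stabilizers $\bar Z_i\bar Z_j$. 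Step 1 then lets me replace each branch's factor by an equivalent one avoiding the unique leg of $A$ in that branch.

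\textbf{Step 3: check the bookkeeping.} Each replacement multiplies by a seed stabilizer, which is a stabilizer of the whole branch code. The overall logical action is therefore unchanged. Because $A$ meets each branch in at most one leg, the cleanings on different branches act on disjoint supports and do not interfere. The resulting operator has no support on $A$.

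\textbf{Main obstacle.} I expect Step 2 to be the hardest part: justifying that an arbitrary Pauli representative, not just the canonical one, decomposes per branch into a seed logical times a seed stabilizer. Any residual mismatch between branches must be absorbed by the lifted repetition stabilizers, which are themselves products of seed logicals. I would handle this by working directly with the Choi-state symmetries: first push the logical Pauli on the central leg into the repetition tensor, then apply Step 1 independently inside each seed tensor via the Matching Lemma (Lemma~\ref{lemma:operatormatching}).
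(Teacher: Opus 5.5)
Your proposal is correct and follows the paper's proof: push the logical Pauli through the repetition code onto seed logicals, then use the cleaning lemma in each $[[n_0,1,d\geq 2]]$ seed (any single leg is a correctable erasure) to remove support from that branch's leg of $A$ with a seed stabilizer. Different branches do not interfere because their supports are disjoint. The obstacle you flag in Step 2 is not a real one: any element of the branch code's normalizer restricts, on each branch, to an element of that seed's normalizer, since the seed stabilizers are supported entirely on that branch; the ``full Pauli content'' equivalence in Step 1 is likewise unnecessary, and it fails if a seed has a weight-one stabilizer on that leg.
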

\begin{proof}
    A logical Pauli operator on the repetition code may be supported on all $b$ legs. However, from the usual cleaning lemma \cite[Lemma 1]{bravyi2009no}, any logical Pauli operator on one of the $[[n_0,1,d]]$ branch seeds can be cleaned off of any single physical leg using a stabilizer of the branch seed. Hence we can clean the logical Pauli operator branch-by-branch off the leg of $A$ on that branch where needed.
\end{proof}

For quantum codes, we focus only on the cleanability of the physical legs in the encoding tensor since the logical legs of an encoding isometry are always cleanable. Then $r$-cleanability ensures that there is not only localized symmetry, but enough localized symmetries in the tensor to push operations where needed. Cleanability also guarantees transversality and addressability. 

\begin{proposition}\label{prop:clean_transversal}
    If an $[[n,k]]$ code is $r$-cleanable with respect to $U$, then (interblock) logical $U$ is fully addressable and (interblock) transversal. 
\end{proposition}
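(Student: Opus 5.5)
We will work directly with the Choi state $\ket{V}$ of the encoding map and use the definition of $r$-cleanability together with Prop.~\ref{prop:duallogical}. Fix a target pattern of logical action, namely an operator $O_J(\mathbf{s})=\bigotimes_{i\in J}U^{s_i}$ on the logical legs with arbitrary exponents $s_i\in\mathbb{Z}_N$. Full addressability means that every such pattern, in particular one with $s_i\neq 0$ on a single $i$ (or a single thickened leg in the interblock setting), is realized by some symmetry of $\ket{V}$. Transversality means that the physical part of that symmetry is a tensor product over physical legs.

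The first step is to produce a symmetry with the prescribed logical part. Take $A$ to be the set consisting of all logical legs together with any $r$ physical legs, where $r\ge 0$ and the case $r=0$ suffices. Apply cleanability to the operator $O(\mathbf{r})$ on $A$ whose exponents are $-s_i$ on the logical legs and $0$ on the chosen physical legs. This gives a unitary symmetry $S$ of $\ket{V}$ such that $SO(\mathbf{r})$ acts trivially on $A$. Since $O(\mathbf{r})$ is supported on $A$, we can write $S=O(\mathbf{r})^{\dagger}\otimes W_{A^c}$, that is, $S=O_J(\mathbf{s})\otimes I_{A\setminus J}\otimes W_{A^c}$ for some unitary $W_{A^c}$ on the remaining physical legs. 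Prop.~\ref{prop:duallogical} then says that $W_{A^c}$ implements the logical operator $O_J(\mathbf{s})^*$. Since $\mathbf{s}$ was arbitrary and $U^*$-powers are obtained by relabelling the exponents, every single-leg (and every multi-leg) pattern is addressable. In the interblock case the same argument applies verbatim with legs thickened as in Fig.~\ref{fig:multicopy_tensor}a.

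The main obstacle is transversality. The definition of cleanability only promises that $S$ is some unitary symmetry, not that $W_{A^c}$ is a product. I would handle this by iterating the cleaning leg by leg, and by restricting the unitary symmetries used in the definition to the group generated by $U$-powers and the code's transversal symmetries. Concretely, we interpret the definition in the natural way used in the examples so far, where the cleaning symmetry is itself a product $\bigotimes_j U^{t_j}$ (or a Pauli/stabilizer times such a product). Under this reading $S$ is a product over legs, so $W_{A^c}=\bigotimes_{j\in A^c}U^{t_j}$ is a depth-one layer of (powers of) $U$, which is transversal.

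If only a general $S$ is available, I would instead argue that $W_{A^c}$ can be replaced by a product operator as follows. Start from a global transversal symmetry. Then, for each logical leg where the exponent must be changed, multiply by a cleaning symmetry associated to the set consisting of the logical legs plus one physical leg. Each such factor modifies only the exponent pattern on legs in the product group generated by $U$. Composition therefore stays within products of $U$-powers, giving a transversal implementation. I expect this step, pinning down that the cleaning symmetries may be taken of product form, to be the point requiring the most care.
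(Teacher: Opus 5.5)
Your first two paragraphs follow the paper's proof. You clean a pattern $O_J(\mathbf{s})$ off the logical legs to obtain a symmetry $O_J(\mathbf{s})\otimes I\otimes W_{A^c}$, then read $W_{A^c}$ as an addressable logical gate via Prop.~\ref{prop:duallogical}. For transversality, the paper only asserts that, because the encoding tensor is an isometry, $W_{A^c}$ ``must'' be a tensor product of powers of $U$. You are right that nothing in the definition delivers this. Your explicit reading is exactly what the paper relies on in practice: the cleaning symmetry is itself a unitary product symmetry, e.g.\ a product of powers of $U$ possibly times a Pauli, as in Lemma~\ref{lemma:branchcode}. With that reading, your second paragraph is already a complete proof.

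Your fallback paragraph, however, has a genuine gap and should be dropped. It invokes a global transversal symmetry that is not among the hypotheses. Its claim that each cleaning factor ``modifies only the exponent pattern'' presupposes the product structure you are trying to establish, so it is circular. More seriously, no argument can succeed under the literal definition, because the statement is then false.

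For any isometric encoding $V$ and any unitary $Q$ on the logical legs, set $W = VQ^{T}V^{\dagger} + (I-VV^{\dagger})$. This is a unitary on the physical legs with $(Q\otimes I)\ket{V} = (I\otimes W)\ket{V}$, so $S = Q^{\dagger}\otimes W$ cleans $Q$ off the logical legs. Hence every code is $0$-cleanable with respect to every finite-order $U$. Likewise, the $6$-leg encoding tensor of the $[[5,1,3]]$ code is perfect, so the same transpose trick across any $3|3$ cut makes that code $2$-cleanable with respect to $T$. Yet its transversal logical gates are all Clifford. The product-form requirement therefore has to be stated as a hypothesis, not derived.

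A smaller point: Prop.~\ref{prop:duallogical} gives you logical $\bigotimes_i (U^{*})^{s_i}$. ``Relabelling the exponents'' recovers powers of $U$ only when $U^{*}\in\langle U\rangle$. This holds for the diagonal $C^{\ell}P$ gates the paper uses, where $U^{*}=U^{-1}$, but not for, say, $SH$.
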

\begin{proof}
    The definition of $r$-cleanability included that any product of $U$ acting on any subset of logical legs can be cleaned. Since the encoding tensor of a code is an isometry, they must correspond to physical operators that are tensor products of $U$ or its power. Hence such operators $U$ are both transversal and addressable. For multi-qubit logical operators $U$, they correspond to interblock gates.
\end{proof}

\begin{lemma}\label{lemma:branchcode}
    Consider a $(b,\mathcal U)$-branch code, with $b > 2$, and where each seed code has distance $d\geq 2$. Let $U \in \mathcal{U}$ with $U^N = I$ for an integer $N > 1$. Then any power of $U$ is cleanable off any two physical qubits in two distinct branches. Furthermore, the code admits a localized transversal logical $U$ gate.
\end{lemma}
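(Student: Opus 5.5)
The plan is to use the structure of the branch code as a concatenation. At the top sits the $b$-qubit GHZ tensor, i.e.\ the bit-flip repetition code together with its logical leg. Each of its $b$ physical legs is contracted with the logical leg of an $[[n_0,1,d]]$ seed code whose encoding tensor $|V_0\rangle$ has the strongly transversal symmetry $U^*\otimes U^{\otimes n_0}$ for $U\in\mathcal U$. Throughout I treat $U$ as a (controlled-)phase gate, so it is diagonal, and I use thickened legs (Fig.~\ref{fig:multicopy_tensor}a) for the multi-qubit case.

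\textbf{Step 1: localized symmetries of the GHZ tensor.} For any two distinct legs $a,c$ of the $(b+1)$-leg GHZ tensor, the operator $U_a^{s}\otimes (U^{-s})_c$ is a symmetry, because $U$ is diagonal and $U^sU^{-s}=I$. This is exactly the identity $C^\ell_iP(\varphi)C^\ell_jP(-\varphi)$ recorded earlier and shown in Fig.~\ref{fig:zxhcalc}. Conjugation issues are handled as in the matching lemma: $U^*=U^{-1}$ up to phase for phase gates, so powers can be matched across Bell traces.

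\textbf{Step 2: lifting to branch-wise symmetries.} For branch $i$, the seed symmetry gives $(U^{s})^*$ on the seed's logical leg and $U^{s}$ on all $n_0$ of its physical legs. Applying Lemma~\ref{lemma:operatormatching} to each traced edge, a GHZ symmetry of the form $U^{s}$ on internal leg $i$ and $U^{-s}$ on internal leg $j$ (logical leg untouched) becomes the branch-code symmetry $S_{ij}(s)=(U^{s})^{\otimes n_0}_{\text{branch } i}\otimes (U^{-s})^{\otimes n_0}_{\text{branch } j}$. In the same way, $U$ on the top logical leg paired with an appropriate power on a single internal leg yields the logical gate $\bar U$ implemented as a power of $U$ on the $n_0$ qubits of one branch only. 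This gives the localized transversal logical $U$ claimed in the second part of the statement, and matches the earlier lemma on localizing to a single branch.

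\textbf{Step 3: cleaning $U^{r_1}\otimes U^{r_2}$ off qubits $p$ (branch $i$) and $q$ (branch $j$).} Since $b>2$, I pick a third branch $m\neq i,j$ and multiply by $S_{im}(-r_1)$ and $S_{jm}(-r_2)$.
\begin{itemize}
\item On qubit $p$ this leaves the identity.
\item On qubit $q$ this leaves the identity.
\item The remaining support is the other qubits of branches $i$ and $j$, plus branch $m$.
\end{itemize}
The logical leg is never touched, so this is a cleaning in the sense of the definition. The same product also handles the case where only one of the two qubits carries a nontrivial power. The interblock version works identically on thickened legs, because the GHZ$^{\otimes\ell}$ tensor has the analogous $C^\ell P\otimes C^\ell P^{-1}$ symmetries.

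\textbf{Main obstacle.} The hard part is ensuring the constructed products are genuine symmetries when the seed implementation is only strongly transversal with a logical action that is a specific power or conjugate of $U$. Concretely, one has to check that the powers on traced edges satisfy the matching condition (possibly needing $X$-deformed edges, as in Lemma~\ref{prop:3.2}), and that the logical action on the internal leg is $U$ itself rather than some other power. I would resolve this by reading off, for each seed, the logical power $U^{t}$ implemented by $U^{\otimes n_0}$ and choosing exponents modulo $N$ accordingly. If $t$ is not invertible mod $N$, I would instead restrict to the subgroup of powers it generates.

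The condition $b>2$ is essential: with $b=2$, cleaning both qubits would force the compensating operator back onto branches $i,j$ or onto the logical leg. The hypothesis $d\ge2$ is used only to make sure that the Pauli parts arising from deformed edges can be cleaned off single legs via the preceding lemma.
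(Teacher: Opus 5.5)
Your proposal is correct and is essentially the paper's proof. The paper uses the repetition-code symmetry $P^{a_1}\cdots P^{a_b}$ with $\sum_i a_i\equiv 0 \pmod N$ to push two incoming powers on distinct branches onto the rest of those two branches plus a third branch, which is exactly your product $S_{im}(-r_1)S_{jm}(-r_2)$; it treats $C^\ell P$ through $\ell$ stacked GHZ copies, phrased there with $Z$-spiders and an H-box, just as you do on thickened legs. One correction: your worry about deformed edges is unnecessary, because the branch code is built by plain concatenation and the complex conjugate on the seed's logical leg in its Choi symmetry makes the matching condition automatic, so $d\geq 2$ plays no role in this cleaning step.
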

\begin{proof}
    In the case $U = P$, the repetition code of length $b$ is stabilized by $P^{a_1}P^{a_2}\dots P^{a_b}$ whenever $N \:|\: \sum_i a_i$. Therefore, any two phase gates acting on two physical qubits $P^{a_i}P^{a_j}$ can be pushed to a third physical qubit that has $P^{a_k}$ where $a_k + a_i + a_j = 0\pmod N$. Hence any single-qubit phase gates in $\mathcal{U}$ are 2-cleanable.

   For multi-qubit gates, any $C^{\ell} P$ gate can be written as the contraction of a H-box with $Z$-spiders, for example see \cite{East_2022} and Fig.~\ref{fig:zxhcalc}. Since $Z$-spiders commute through the repetition code, which is also a $Z$-spider, any logical $C^\ell P$ gate can be realized as a single $C^\ell P$ gate on a group of $\ell$ qubits on the repetition code. Since $U$ is bitwise transversal on each branch, the logical operator can thus be written as $\overline{C^\ell P}=\bigotimes_{i\in \mathrm{branch}} C^{\ell}P_i$ where $i$ denotes a group of $\ell$ qubits when stacking $\ell$ copies of the branch code and $C^{\ell}P_i$ acts transversally on all groups that belong to the same branch.
   
   Now, just as above, any two $C^{\ell}P^{a_i}$, $C^{\ell}P^{a_j}$ gates acting on two groups of $\ell$ physical qubits can be pushed to $C^{\ell}P^{a_k}$ as long as $a_k+a_i+a_j=mN$ by commuting the Z spiders through the repetition code and use $C^{\ell}P^{a_i}C^{\ell}P^{a_j}C^{\ell}P^{a_k}=I$. For the branch code, for any incoming $C^{\ell}P^{a_i}, C^{\ell}P^{a_j}$ on two legs from distinct branches, they are pushed to tensor product of those same gates on the remaining qubits on those two respective branches plus the tensor product $(C^{\ell}P^{a_k})^{\otimes n_0}$ acting on a third branch.

\end{proof}

Using seed codes with localized symmetries like the ones above, we can now build up codes using the same technique in previous section but with more addressability.

\begin{lemma}\label{lemma:comptensorcode}
    Consider a $[[a,k,d_0]]$ code with $d_0 \geq 2$ that supports localized transversal $C^\ell P$ where $P^N=I$ for some integer $N$ and $C^\ell P$ is 1-cleanable. Suppose we join together $b\geq \ell$ of them using a GHZ tensor by one physical leg from each $a$-qubit code block, then this tensor network $\mathcal T_b$ is a $[[(a-1)b,kb,d_0]]$ code which supports addressable interblock, mesoblock, and intrablock transversal $C^\ell P$ gates on any $\ell$ logical qubits. 

    If $d_0\geq 3$, then the interblock $C^\ell P$ is fault-tolerant. If the shortened dual $[[a-1,k+1,d_0']]$ code formed by choosing the traced leg as a logical qubit further has $d_0'\geq 3$, then the intrablock and mesoblock $C^\ell P$ gates are also fault-tolerant.
\end{lemma}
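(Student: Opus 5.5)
We prove the three parts separately: code parameters, transversality and addressability, and fault tolerance.

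\textbf{Parameters.} Each $[[a,k,d_0]]$ block is an isometry from $k$ logical legs. The GHZ tensor $|GHZ_b\rangle$ fuses one physical leg from each block. Contracting a GHZ state on $b$ legs amounts to projecting those $b$ qubits onto the code space of the $b$-qubit repetition code and then discarding them. This removes exactly $b$ physical qubits, giving $n=(a-1)b$.

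To show that no logical qubit is lost, I would check the maximally-mixed condition of Lemma~\ref{lemma:isometry} on the $kb$ logical legs. Because $d_0\ge 2$, the reduced state of each block's Choi state on its logical legs together with any single physical leg is maximally mixed. Hence the $k$ logical legs plus the traced leg of each block are maximally mixed and decoupled from one another. Projecting the $b$ traced legs onto the GHZ state therefore leaves the logical marginal proportional to $I$, so the network is an $[[(a-1)b,kb]]$ code.

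For the distance, I would push an arbitrary logical Pauli through the GHZ tensor. The $Z$-type stabilizers $Z_iZ_j$ and the global $X^{\otimes b}$ of the GHZ state only move Paulis between traced legs. A nontrivial logical on block $i$ must act nontrivially on block $i$'s logical legs, and after cleaning it off the traced leg (possible since $d_0\ge2$), its restriction has weight at least $d_0-1$. I would need to refine this to $d_0$, either via the cleaning lemma or by noting that the traced leg behaves as a shortened-code leg, so the shortened code keeps distance at least $d_0$ in the relevant sector. \emph{This is the step I expect to be delicate}, and I would handle it by case analysis on how many blocks the logical touches.

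\textbf{Transversality and addressability.} Fix any $\ell$ target logical qubits, possibly spread over several blocks or copies.
\begin{enumerate}
\item Use the localized transversal $C^\ell P$ of each block to push the logical gate to physical $C^\ell P$'s inside the relevant blocks.
\item Whenever such a gate lands on a traced leg, use 1-cleanability to move it off that leg, so it never reaches the GHZ tensor.
\item If cleaning is not wanted, push instead through the GHZ tensor using the localized symmetries $C^\ell P\otimes C^\ell P^{-1}$ and the $H$-box/$Z$-spider decomposition (Fig.~\ref{fig:zxhcalc}). Gates whose legs meet at the hyperedge reduce to single-site phases, which are then cleaned as in Proposition~\ref{prop:GHZcomptran}.
\end{enumerate}
This yields depth-one physical $C^\ell P$'s. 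The inter-, meso- and intrablock cases differ only in which thickened legs are grouped, so Proposition~\ref{prop:clean_transversal} gives full addressability. The requirement $b\ge\ell$ ensures enough branches for intrablock gates on $\ell$ distinct blocks.

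\textbf{Fault tolerance.} An interblock gate acts qubit-wise on corresponding physical qubits, so a single fault spreads to at most one qubit per block. With $d_0\ge3$ (the network distance from the first part), such errors remain correctable.

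For intra- and mesoblock gates, the implementation may couple the traced leg's neighbourhood, because cleaning inserts extra gates on other legs of a block. A single fault can then spread to two qubits of the same block. I would argue that the block, viewed with its traced leg as an extra logical leg, is the shortened code $[[a-1,k+1,d_0']]$, and that spread errors correspond to weight-one errors on that code's physical qubits together with a possible logical on the traced leg. That logical is absorbed by the GHZ stabilizers, so $d_0'\ge3$ guarantees correctability.
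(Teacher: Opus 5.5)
Your transversality argument (clean the logical $C^\ell P$ off the traced leg using 1-cleanability, then use that the branches are identical copies) matches the paper's. Your decoupling route to the logical count is a valid alternative to the cleaning argument of Lemma~\ref{lemma:isodist}. The first gap is the distance, which you leave open. One of your two suggested fixes fails: the shortened code does not keep distance $d_0$ in general, since the paper stresses $d_0'\le d_0$, possibly $d_0'\ll d_0$.

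The missing idea is that the GHZ tensor forces any Pauli on one traced leg onto a second block. Write a nontrivial logical of $\mathcal T_b$ as block symmetries whose traced-leg parts $P_{t_1}\otimes\cdots\otimes P_{t_b}$ match a GHZ stabilizer, and pick a block $i$ acting nontrivially on its logical legs. If $P_{t_i}=I$, its weight on the dangling legs is already $\ge d_0$. Otherwise it is $\ge d_0-1$, and since every nontrivial stabilizer of $|GHZ_b\rangle$ ($b\ge2$) is supported on at least two traced legs, some $j\ne i$ has $P_{t_j}\ne I$. The block-$j$ operator is either a logical of weight $\ge d_0\ge 2$ or a stabilizer, and no stabilizer is supported on the traced leg alone. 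So it cannot be supported on $t_j$ alone, and it adds at least one more qubit, giving total weight $\ge d_0$. That last fact is also what you actually need for $\rho_{L_it_i}\propto I$: $d_0\ge 2$ by itself only gives $\rho_{L_it_i}=\rho_{L_i}\otimes\rho_{t_i}$. The paper assumes it implicitly as well.

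The second gap is the intra/mesoblock fault-tolerance argument, which rests on a wrong propagation picture. Cleaning only changes which thickened legs carry powers of $C^\ell P$. Every physical gate still acts on exactly one dangling qubit of each targeted branch, and nothing ever acts on the traced leg, which is virtual. So a single fault never spreads to two qubits of one block, and there is no traced-leg logical to absorb. Nor could one be absorbed: a single-site $Z_{t_i}$ or $X_{t_i}$ is not a GHZ stabilizer. Were your two-errors-per-block picture correct, $d_0'\ge 3$ would not suffice.

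The real difficulty, which you do not identify, is that one fault produces up to $\ell$ errors, one per branch, inside a single copy of $\mathcal T_b$. The distance $d_0$ of $\mathcal T_b$ does not guarantee their correction. The paper resolves this by decoding each branch independently, using only its stabilizers that avoid the traced leg. These are the stabilizers of the shortened $[[a-1,k+1,d_0']]$ code, and they are also stabilizers of $\mathcal T_b$. Their syndrome depends only on that branch's errors, and $d_0'\ge 3$ is exactly what makes a single error on each branch correctable by this local decoding, whatever happens on the other branches.
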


\begin{proof}
    Since $d_0\geq 2$, we first make use of Lemma~\ref{lemma:isodist} where it is clear that the code has the proposed parameters. In particular, $d_0\geq 2$ ensures there there is no kernel to the encoding map and all logical legs indeed correspond to independent logical qubits. The tensor network is also invariant under permutation of any of the branches. For any interblock gate, since each code is 1-cleanable, there must exist a representation of the logical $C^\ell P$ operator that does not act on the leg being traced. That is, the physical $C^\ell P$ can be implemented transversally entirely on the physical dangling legs across $\ell$ blocks. Since the tensor network is invariant under branch permutation, the interblock gate can apply transversally for any $\ell$ logical qubits one on each code block. For intrablock gates, since the $C^\ell P$ operator can be pushed to the same boundary legs on each of the $b$ branches, and the branches are identical to each other, it follows that the $C^\ell P$ gate can be implemented transversally on any $\ell$ qubits by performing transversal $C^\ell Z^{\otimes m\geq d}$ on those $\ell$ branches. Since the operator flows on any intrablock gates do not intersect on any code block, it follows that mesoblock gates that mix interblock and intrablock operator pushing are also transversal by acting on the physical legs where the operator flow exits. 

    If $d_0\geq 3$, then it is clear that the interblock transversal gates are FT because each block can correct any single error. An $\ell$-qubit gate that is transversal spreads 1 error to at most $\ell$ errors but on distinct code blocks each correcting one error. Furthermore, if the shortened $[[a-1,k+1,d_0']]$ code obtained by converting one physical leg $m$ to a logical qubit corrects any single error, then the intrablock gates are also FT if $m$ is traced with the GHZ state. By construction, any single error can be corrected on each branch. As the $\ell$ qubit transversal gate spreads errors to distinct branches connected by the GHZ state, we can correct the error on the branches individually. By extension, the mesoblock gates are also FT.
\end{proof}

Note that requiring $d'_0\geq 3$ is strictly stronger than having $d_0\geq 3$ because $d_0\geq d_0'$ and the shortened code typically has lower distance (unless the code is non-degenerate) as the logical operators that have support on the new logical leg for the dual code come from stabilizer operators in the unshortened code. If the code is highly non-degenerate, then the dual code can have $d'_0\ll d_0$. For example, a surface code of size $L\times L$ can have $d_0=L$ but $d_0'=3$. For the construction above, it is sufficient to have the unshortened seed code be non-degenerate with $d_0\geq 3$ and all non-identity stabilizer weights 4 or higher.

\section{Applications: codes with global transversal gates}
\label{sec:4}
We now construct examples in increasing complexity to illustrate steps in designing codes that have transversal single or multi-qubit gates with varying degrees of addressability.
We begin with codes that have global non-Pauli transversal gates but are not addressable.

\subsection{Single qubit transversal gates}
\label{subsec:4.1}

By matching lego blocks that have (weakly) transversal single qubit gates, one can easily generate bigger codes with the same transversal property with possibly additional local unitary deformations on different legos and edges to ensure the uniformity of the construction.

For example, consider gluing two codes both of which have bitwise transversal Hadamard. Since $H^*=H$, it is clear that these transversal gates remain symmetries of the tensor (network) no matter how legs are contracted. Therefore, the Hadamard remains a bitwise transversal logical gate in resulting code also. This includes both case of generating bigger codes by gluing multiple code blocks as well as creating smaller codes via self-tracing. The $[[n,k,3]]$ holographic Steane code \cite{Harris_2018} constitutes one such an example, where each atomic lego is a Steane code that carries (strongly) transversal $H$, and therefore a global $\bar{H}^{\otimes k}=H^{\otimes n}$ is a logical gate of the code. However, to preserve the complete transversality of $S$ gates in the same code requires extra care because $S^*=S^\dagger$. However, we can solve this matching problem easily using Lemma~\ref{prop:3.2} where the seed codes are fused with $|\Phi_X\rangle$ instead. 

Applying the ``matching'' Lemma~\ref{lemma:operatormatching}, one can create codes with different classes of transversal gates. Here we focus on two that are of special interest. 

\subsubsection{Example: Transversal $SH$ gates} 
\label{subsubsec:SHgate}
While transversal Hadamards and $S$ are relatively easy to spot with self-dual CSS codes, codes with other transversal Clifford gates such as $SH$ are far less understood with the CSS construction or stabilizer formalism. There is an added interest for identifying such codes (also called $M_3$ codes) because the eigenstates of such gates are magic states\cite{Bravyi_2005}. Although they  correspond to linear self-orthogonal codes over GF(4)\cite{kalra2025invarianttheorymagicstate}, systematic classifications of these codes have been far less explored compared to other MSD protocols like triorthogonal codes. With quantum lego, $M_3$ codes can easily be generated by gluing $[[5,1,3]]$ codes that support transversal $SH$. However, since $(SH)^*\ne SH$, a Pauli-deformed edge $|\Phi_Y\rangle$ is needed as $SH\otimes SH|\Phi_Y\rangle\propto |\Phi_Y\rangle$. 

\begin{figure}
    \centering
    \includegraphics[width=0.6\linewidth]{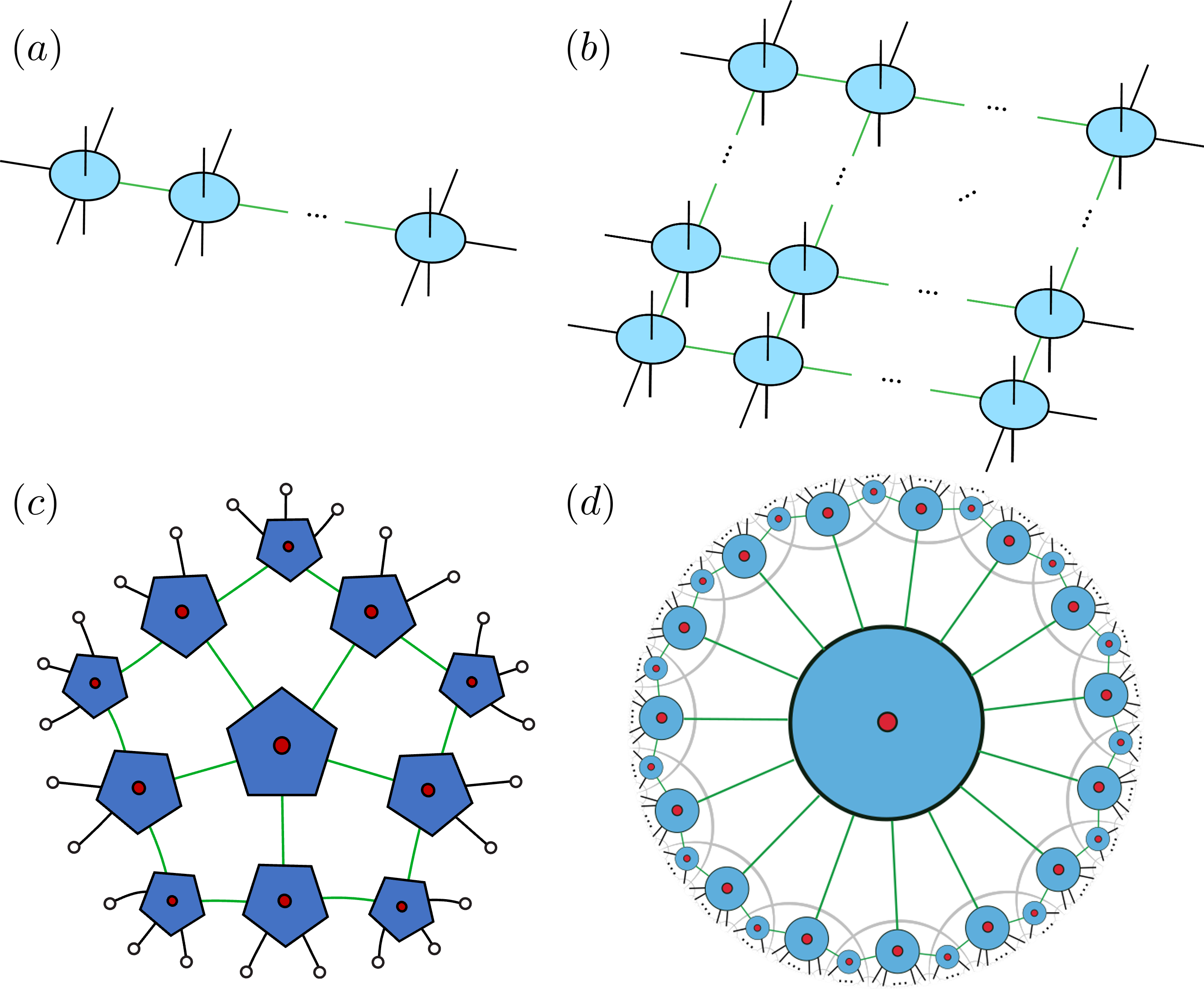}
    \caption{(a,b) A linear and planar code where each tensor represents a $[[5,1,3]]$ perfect tensor, having transversal $SH$ gate. Alternatively, each tensor could be a $[[15,1,3]]$ Reed-M\"uller tensor, having transversal $T$ gate, where each downward-pointing dangling edge contain $11$ qubits and hence has bond dimension $2^{11}$. (c) A $[[25,11,3]]$ holographic pentagon code with bit-wise transversal $SH$ gate. (d) An $[[n,k,3]]$ holographic Reed-M\"uller code with transversal $T$ gate. In all of the diagrams, red dots represent not shown logical legs/degrees of freedom and green edges represent a generalized trace with a Pauli $X$ or $Y$ insertion for transversal $T$ or $SH$ gates respectively. }
    \label{fig:1qbit_code}
\end{figure}

Arbitrary fusion of these seed codes will then generate a code with transversal $SH$, but we focus on a few special examples shown in Fig.~\ref{fig:1qbit_code} where each $Y$-deformed edge is marked in green. In Fig.~\ref{fig:1qbit_code}c, an edge-deformed holographic HaPPY pentagon code $[[n,k,3]]$ with transversal $(\overline{SH})^{\otimes k}=(SH)^{\otimes n}$ is given, which has asymptotic rate of $k/n\approx 0.447$ \cite{Pastawski_2015}. In Fig.~\ref{fig:1qbit_code}ab, we construct a linear $[[3L+2,L,3]]$ and a planar $[[L^2+4L,L^2,3]]$ code with Euclidean geometries. Notably, the latter has an asymptotic unit rate $k/n\rightarrow 1$. The encoding map of each code here can be written as a sequence of contractions of isometric tensors of bounded degree, which ensures that the overall encoding map is an isometry and that the tensor network is efficiently contractible. It then follows from \cite{QL2} that each code admits a polynomial time maximum likelihood tensor network decoder and their weight enumerators can be obtained efficiently.

A natural question is whether any of these codes can be used to extract magic states used in the BK protocol. In this procedure, one inputs $n$ noisy noisy magic states $\rho=(1-\epsilon)|T_0\rangle\langle T_0|+\epsilon|T_1\rangle\langle T_1|$ and perform syndrome measurements, postselecting on the trivial syndrome. Here $|T_{0,1}\rangle\langle T_{0,1}|=\frac 1 2 (I\pm \frac 1 {\sqrt{3}}(X+Y+Z))$. We might be tempted to conclude that the planar code with asymptotically unit rate already leads to a distillation protocol that has asymptotically constant overhead. However, unlike the Bravyi-Haah protocol, the code distance itself is insufficient to guarantee a stable fixed point at the desired magic state. We show in Appendix~\ref{app:msd} that the line code remains a valid Bravyi-Kitaev-like MSD protocol but the 2d code and the pentagon HaPPY code do not.

If one insists on using undeformed edges, then care must be taken to account for the flipping the $SH$ gates into $(SH)^*$ on adjacent seed codes. One such example was discussed in holographic codes \cite{Williamson}, which results in a weakly transversal gate. In such examples, things can be remedied by applying local Pauli $Y$-deformations on the physical qubits, hence the end result is not much different from what we constructed above. In general, however, tensor networks obtained from $Y$-deformed traces are not always equivalent to those obtained from undeformed traces up to local unitary rotations on the physical/logical legs. For example, if three such seeds are contracted into a loop using undeformed Bell fusion, then the matching condition to preserve even weakly transversal $SH$ cannot be satisfied and the symmetry will be lost altogether.

\subsubsection{Example: Transversal $T$ and other single-qubit phase gates}

It was shown in \cite{QL} that by tracing (on arbitary number of legs) two atomic legos with bitwise transversal $T$ gates, e.g. arising from triorthogonal codes, the resulting code also has a weakly transversal $T$ gate. Specifically, $T^{\otimes n_1}\otimes (T^{\dagger})^{\otimes n_2}$ is a transversal gate, where the first $n_1$ qubits belong to the first atomic lego and the next $n_2$ belong to the second. The inversion of $T$ is to satisfy the matching condition $T\leftrightarrow T^*=T^{\dagger}$ along the traced legs. This inversion is inconvenient if we want to preserve strong transversality of the logical gates and to allow for consistent self-traces on the same tensor (network). Here we apply Lemma~\ref{prop:3.2} by tracing seed codes with $|\Phi_X\rangle$ where $T\otimes T|\Phi_X\rangle\propto |\Phi_X\rangle$ so that $T$ can be matched to $T$ with deformed traces. 
With such deformed traces, smaller codes like the $[[7,1,1]]$ code have been obtained through tensor self traces \cite{Shen:2023xmh}. Similar to Sec.~\ref{subsubsec:SHgate}, codes with other geometries, such as a planar $[[4L+11 L^2,L^2,3]]$ (Figure~\ref{fig:1qbit_code}b) or linear $[[13L+2,L,3]]$ (Figure~\ref{fig:1qbit_code}a) constructions are also possible using the same kind of tensor network but with $X$-deformed traces. These codes similarly admit efficient optimal tensor network decoders and enumerator computations. Nevertheless, to maintain the isometric property of the encoding map, we achieve a much lower rate compared to the transversal $SH$ codes above and the class of triorthogonal codes by \cite{BravyiHaah}. 

Thanks to the generality of Lemma~\ref{prop:3.2}, the same technique can also be used to generate codes with other transversal non-Clifford phase gates higher level Clifford hierarchy~\cite{Webster_2023}, provided the right seed codes are used.

\subsection{Multiqubit transversal gates}
\label{subsec:4.2}

Recalling that multiqubit transversal gates correspond to the symmetries that preserve multiple copies of a tensor, we can use the same matching lemma for identifying transversal multiqubit gates. As we see earlier, since any (two-stack of a) CSS code supports completely transversal CNOT gates as a symmetry, operator matching tells us that tracing two CSS codes in any arbitrary manner retains this symmetry and hence support completely transversal CNOT gates also. Of course we already know this from a different perspective, where \cite{QL} showed that the (self)trace of any CSS is CSS, and hence must support completely transversal CNOTs. However, with operator matching, we can also examine other transversal gates, such as $CCZ$, or even more exotic multiqubit gates \cite{Huang_2023}.

\subsubsection{Example: Codes with Transversal Interblock $K_3$ Gates}
A slightly unusual example is the 3-qubit Clifford gate $K_3$ \cite{Gottesman_1998}.
This gate is strongly transversal on the $[[5,1,3]]$ code. Therefore, to generate classes of codes that have it as a transversal gate, we can use the perfect code as atomic legos. To satisfy the matching condition, one easily checks that $(YYY)K_3 (YYY) = K_3^*$. As $K_3\otimes K_3^*(|\Phi^+\rangle)^{\otimes 3}=(|\Phi^+\rangle)^{\otimes 3}$, this implies that a bitwise $K_3$ gate satisfies the operator matching condition in the codes in Figure~\ref{fig:1qbit_code}abc where $Y$-deformations are applied to the green edges. Therefore, the codes considered in Fig.~\ref{fig:1qbit_code} of Sec.~\ref{subsubsec:SHgate} admit not only transversal $SH$, but also bitwise transversal $K_3$. 

\subsubsection{Example: Mixed single and multi-qubit intrablock gates}
Using the same Lemma \ref{prop:3.2} and Corollary \ref{coro:isotrace_Tcode}, we can also build up codes sequentially through isometric traces by combining atomic codes with transversal gates of the form $\bar{G} = T^{\otimes n}$ where $\bar{G}$ is an intrablock multi-qubit gate. We need to use $|\Phi_X\rangle$ fusion instead of the usual Bell fusion to preserve strong transversality. A similar construction without deformed edge is also possible \cite{weirdgates}. 

For instance, using $[[8,3,2]]$ codes that has $\overline{CCZ}=T^{\otimes 8}$ as seed codes, we notice that a tree tensor network of the form in Fig.~\ref{fig:CCZ_transv} having degree $8$ in-plane legs instead of $12$ with $3$ bulk/logical qubits on each tile (represented by the red dot) will admit a $\overline{CCZ}^{\otimes k/3} = T^{\otimes n}$ code with parameters $[[n=8(7^L),k=4(7^L)-1,d=2]]$ where $L=0,1,\dots$ denotes the number of layers. The logical $CCZ$s will act on the triplet of logical qubits on each tile.

A similar exercise can be repeated using $[[12,2,2]]$ codes \cite{Webster_2023} that have $\overline{CS}=T^{\otimes 12}$ as building blocks. Contracting the network as in the same figure while placing two logical qubits on each tile (red dot), we arrive at a class of $[[n=12(11^L),k=\frac {12}{5}(11^L-1),d=2]]$ codes that have $\overline{CS}^{\otimes k/2}=T^{\otimes n}$ as transversal gates where the logical $CS$ act on the pairs of logical qubits on each time.

A 1D tensor network like Fig.~\ref{fig:1qbit_code}a can also be built from a sequence of ($1-$)isometric contractions as long as we contract only one leg for the tensors on the boundaries and two for those in the bulk. Again, using $[[8,3,2]]$ codes as building blocks, let each tensor node have $3$ logical legs while the rest as physical legs. Then performing $X$-deformed Bell fusion produces a class of $[[n=6(L+1)+2,k=3(L+1),2]]$ codes
where $\overline{CCZ}^{\otimes L+1}=T^{\otimes n}$. Each logical $CCZ$ acts on the 3 logical qubits associated with each tensor tile and $L$ is the length of the graph measured in the path length connecting the left and right boundary nodes.

Repeating with $[[12,2,2]]$ blocks in a 1D tensor network where we have two logical and $12$ physical legs on each tensor node, one obtains $[[n=10(L+1)+2,k=2(L+1),2]]$ codes with $\overline{CS}^{\otimes L+1}=T^{\otimes n}$. The logical CS gates act on the two logical qubits on each tile.

If the atomic codes have stronger isometric properties, which requires them to have higher code distances, then other network architectures are also possible without inducing kernels on the resulting encoding map.

\subsubsection{Example: Codes with transversal interblock CCZ gates}
A similar exercise can be done to generate codes with (interblock) completely transversal $CCZ$ gates. A convenient set of lego blocks with completely transversal CCZ gate is the class of rectified cubic codes \cite{Vasmer_2019}, where the smallest example is a $[[12,1,2]]$ code\footnote{Note that many codes are said to have transversal $CCZ$, e.g., the $[[8,3,2]]$ or $[[2^D,D,2]]$ color codes, where logical $CCZ$s implementations include additional single qubit rotations. In this example, we use atomic codes where interblock $CCZ$s are given strictly by the tensor product of physical $CCZ$s.}. 
Then it is easy to generate other higher rate error detection codes using such atomic legos such that the strongly transversal $\overline{CCZ}^{\otimes k}=CCZ^{\otimes n}$ remains a logical operator. For example, a tree tensor network generates a $[[132, 13, 2]]$ code with transversal $CCZ$ gate (Figure~\ref{fig:CCZ_transv}). Since $CCZ^*=CCZ$, no special Clifford deformation is needed when gluing together these blocks.
\begin{figure}
    \centering
    \includegraphics[width=0.3\linewidth]{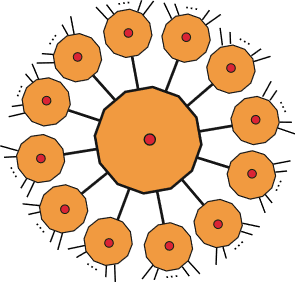}
    \caption{A code with completely transversal $CCZ$ gate built from rectified cubic code legos. Red dots represent logical qubits.}
    \label{fig:CCZ_transv}
\end{figure}

Connecting such tensors into a line with regular Bell fusions, as in Figure~\ref{fig:1qbit_code}a where each downward pointing leg contains 8 qubits, we obtain a class of $[[10L+2,L,2]]$ error detection codes with bitwise transversal $CCZ$ gates. Similar to codes with transversal $T$ gates, these are also of interest for magic state distillation. 

\subsubsection{Codes with asymptotic unit rate}

Given what we have just seen, one might wonder if it is possible to use combination rules such as Lemma~\ref{lemma:comptensorcode} to produce codes that have (asymptotically) constant overhead ($\gamma\rightarrow 0$) for magic state distillation. One way is to generate codes with asymptotically unit rate while preserving bitwise non-Clifford transversality. For $T$, using CSS codes on qubits, this is impossible: triorthogonal codes constitute the most general class of codes that realize logical $T$ with physical transversal $T$ gates \cite{Rengaswamy_2020}. However, such codes with $d_Z\geq 2$ must have $n\geq 2k$ \cite{Nezami_2022}, ruling out any asymptotic unit rate codes of this type. 

For other gates or higher dimensional qudits, this is less clear. For example, the tensor network in Figure~\ref{fig:1qbit_code}b clearly allows codes with such parameters as the number of logical qubits scale as the area of the 2 dimensional region whereas the physical qubits scale as its area plus the perimeter. Hence, when such kind of tensor networks are generalized to $D$-dimensional lattices, one might expect asymptotic unit rate codes to exist which can support completely transversal non-Clifford gates. 

\begin{proposition}\label{prop:MDSconstraint}
    Suppose there is a quantum maximum-distance-separable (MDS) $[[n,k,d]]_q$ qudit code that is non-degenerate, has $k>0$, and supports completely transversal $U$ and there exists single-site unitary $C$ such that $U\otimes U|\Phi_C\rangle=\omega |\Phi_C\rangle$ where $\omega\in \mathbb{C}$. Then a $d-1$ dimensional hypercubic lattice tensor network constructs an asymptotic unit rate code with constant distance $\geq d$ and completely transversal $U$.  
\end{proposition}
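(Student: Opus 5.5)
We place one copy of the MDS encoding tensor on each site of a $D=(d-1)$-dimensional hypercubic lattice of linear size $L$. Each site has $2D=2(d-1)$ in-plane legs used for contraction with its nearest neighbours. All remaining $n-2D$ legs of the tensor, which include the logical leg(s) and the leftover physical legs, dangle. Every internal edge is fused with the deformed edge $|\Phi_C\rangle$ rather than $|\Phi^+\rangle$. Legs on the lattice boundary that have no neighbour are left as physical legs.

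The rate count then follows from the same heuristic already made for Fig.~\ref{fig:1qbit_code}b. The number of sites is $L^D$, each contributing $k$ logical qudits. The physical qudits number $(n-2D-k)L^D$ plus $O(L^{D-1})$ boundary legs. For the rate to tend to one we need $n-2D-k=0$. For an MDS code $n-k=2(d-1)=2D$ (quantum Singleton), so $n-2D=k$ exactly. Hence each bulk site contributes $k$ logical qudits and no physical qudits, and $k_{\rm tot}/n_{\rm tot}=kL^D/(kL^D+O(L^{D-1}))\to 1$.

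Transversality comes from the matching Lemma~\ref{lemma:operatormatching} applied to each internal edge. Since $U\otimes U|\Phi_C\rangle=\omega|\Phi_C\rangle$, completely transversal $U$ on every site tensor (acting as $U$ on every leg of its Choi state, with the appropriate conjugation convention) matches across every $|\Phi_C\rangle$-deformed edge up to a global phase. The global Choi state is therefore invariant up to phase under $U$ on all dangling legs. By Prop.~\ref{prop:duallogical}, this yields $\bar U^{\otimes k_{\rm tot}}=U^{\otimes n_{\rm tot}}$ once the map is shown to be an isometry, as in Corollary~\ref{coro:isotrace_Tcode}.

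The isometry and distance arguments are the main obstacle. For the isometry, we use that a non-degenerate MDS code of distance $d$ has every set of $d-1$ legs maximally mixed together with the logical legs. This is the perfect-tensor-like property that any $n-(d-1)$ legs of the Choi state are isometric targets for the complementary $k+(d-1)$ legs. We then order the sites so that the contraction proceeds by a sweep, e.g.\ lexicographically from one corner. When a site is attached, it has at most $D=d-1$ legs connecting to already-contracted sites. Treating its logical legs plus those $\le d-1$ incoming legs as inputs gives an isometry by the MDS property. Lemma~\ref{lemma:isometry} and Proposition~\ref{prop:isometrictrace_sym} then apply at each step, so the whole network is a composition of isometries. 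Deforming edges by the unitary $C$ does not affect this.

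For the distance, we plan to invoke the isometric-trace distance result (Lemma~\ref{lemma:isodist}) inductively. Each step is a partial concatenation through at most $d-1$ legs of a distance-$d$ code. We argue that any logical operator must either be nontrivial on some site's own input, or be pushed through that site. In the latter case it acquires weight at least $d$ on that site's outgoing legs, and these are either physical or feed into later sites where the same argument repeats. We would try first to show that a minimal-weight logical, restricted to the last site in the sweep order that it touches nontrivially, already has weight $\ge d$ on dangling legs there, using that the site's outgoing legs are all dangling once nothing later touches it. Making this rigorous, with the ordering chosen so the \emph{last} touched site has only physical outgoing legs, is where care is needed. This argument gives distance $\ge d$, constant in $L$.
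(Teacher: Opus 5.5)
Your construction, the matching argument across the $\ket{\Phi_C}$ edges, and the sweep argument for the isometry are the same as the paper's. Its pushing of flows in the $D$ positive directions is your lexicographic sweep, and taking a site's logical legs plus at most $d-1$ backward legs as inputs is the right use of purity. Two things need repair, and the second is a genuine gap.

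First, the leg count is wrong. The Choi tensor of an $[[n,k,d]]$ code has $n+k$ legs, not $n$. After the $2D=n-k$ contractions, each bulk site therefore leaves $k$ logical \emph{and} $n-2D=k$ physical legs dangling. Your claim that a bulk site contributes no physical qudits would encode $kL^D$ qudits into $O(L^{D-1})$ physical ones, which contradicts your own isometry argument. The ratio $kL^D/(kL^D+O(L^{D-1}))$ you then write is correct only with the corrected count.

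Second, the distance step fails as planned. You want weight $\ge d$ on dangling legs at the last touched site, but non-degeneracy only gives physical weight $\ge d$ for a nontrivial site symmetry. Up to $D=d-1$ of that weight can sit on internal backward legs. For $[[5,1,3]]$ with $D=2$, the perfect tensor has a weight-4 symmetry on the logical leg, both backward legs and the single dangling leg, so local reasoning at that site yields dangling weight only $1$.

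Inducting with Lemma~\ref{lemma:isodist} does not rescue this. That lemma treats one-leg traces between two codes, whereas each sweep step fuses up to $D$ legs into an already connected network. Moreover, a site fed through $D$ backward legs is a unitary from its $k+D$ inputs to its $k+D$ outputs, so intermediate networks need not have distance $d$. The paper's proof only cites Lemma~\ref{lemma:isodist} at this point, so it does not settle this step either.

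The missing idea is to collect weight from several extremal sites:
\begin{itemize}
\item Lift a nontrivial logical to matching nontrivial site symmetries (conjoining), and let $T$ be the set of touched sites.
\item For each $\sigma\in\{\pm1\}^D$, take the site of $T$ that maximizes $(\sigma_1x_1,\dots,\sigma_Dx_D)$ lexicographically. Its legs in the directions $\sigma_ie_i$ carry no internal flow, because the neighbor there is outside $T$ or absent. So it has dangling weight $\ge d-D=1$.
\item If one site is extremal for $s$ patterns, those patterns differ in $j\ge\log_2 s$ coordinates, and both directions are blocked in each of them. This gives that site dangling weight $\ge 1+j\ge 1+\log_2 s$.
\item Since $(1+\log_2 a)+(1+\log_2 b)\ge 1+\log_2(a+b)$ for $a,b\ge1$, summing over the distinct extremal sites gives total weight $\ge 1+\log_2 2^D=d$.
\end{itemize}
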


\begin{proof}

Given such an MDS code, the total number of legs is $n+k = 2k + 2(d-1)$; pick $k$ legs to be in the bulk at each vertex and $k$ physical legs at that site. The remaining $2(d-1)$ legs are contracted in the cardinal directions in the $D=d-1$ dimensional grid where by assumption we have $d\geq 2$ or $D\geq 1$. We assigning the deformed edge $\ket{\Phi_C}$ to each trace. 

As the code is non-degenerate, any $d-1$ physical qubits must be in a maximally mixed state, which implies that the tensor is an isometry from any $d-1$ physical legs to the remaining $n-d+1$ physical legs. Since the trace has been isometric, these logical legs correspond to independent qudit degrees of freedom. For stabilizer codes, we can also see this from the point of view of Pauli flows. The cleaning lemma guarantees that all incoming Pauli flows of weight $\leq d-1$ must be cleanable because they cannot be non-trivial logical operators. However, since stabilizers cannot have weight less than $d$, the incoming flows always push to non-trivial outgoing flows on the remaining physical legs. 

Now choose any logical leg to inject the initial logical flow, which flows to at most $D$ orthogonal directions. Without loss of generality, choose the coordinate such that the flow goes towards the positive direction relative to the tensor using the natural coordinate of the lattice. Then for each junction, we can clean the incoming flow with a stabilizer such that it always flows to the $D$ positive canonical directions. This will reach the boundary without ever returning to the starting point. 

Then using Lemma~\ref{lemma:isodist}, we see that the distance must not decrease under the isometric trace. By operator matching, we are guaranteed complete transversality of $U$ for a code where the total number of logical legs are given by the number of interior nodes $Vk$. The number of physical legs is given by $Vk+V_\partial$ where $V_\partial$ is the volume of the boundary. For a hypercubic lattice, we know that $V_\partial/V\xrightarrow {V\rightarrow \infty} 0$. Hence the rate $k/n\xrightarrow{n\rightarrow \infty} 1$.
\end{proof}

It is known that quantum maximum-distance-separable codes are already heavily constrained \cite{Huber_2020}. It is unclear whether any MDS codes with higher local dimension $q$ can support bitwise transversal non-Clifford gates, thus permitting asymptotically unit rate codes. 
We remark that the condition on non-degeneracy may not be necessary. It is still possible for a localized group of tensors to satisfy the isometric or operator pushing condition even though each individual tensors are not sufficiently isometric in the needed directions. For example, having 2-isometries in holographic codes are sufficient to make sure the global encoding tensor is isometric. However, in \cite{Steinberg:2024ack}, the holographic encoding map is still globally isometric even though each local tensor is only 1-isometric. This is because localized groups of tensors still satisfy a non-local isometric constraint.

\section{Applications with addressable transversal gates}

\label{sec:5}
\subsection{Holographic codes with QRM legos}\label{subsec:hqrm}
Holographic codes are first constructed as toy models \cite{Pastawski_2015} of the Anti-de Sitter/Conformal Field Theory (AdS/CFT) correspondence where these QECCs have a hyperbolic geometry when written in terms of tensor networks. In these codes, physical qudits live on the conformal boundary of the hyperbolic disk, analogous to the degrees of freedom associated with a boundary CFT. Logical qudits live in the bulk of the hyperbolic space of one higher dimension which emerges from the long range boundary entanglement. Different varieties of the code have been constructed, some with constant rate and above square-root distance scaling \cite{Harris_2018,Steinberg:2024ack}, while retaining high erasure and code capacity thresholds against (biased) Pauli errors that are competitve with leading topological or LDPC codes \cite{tnc,Fan:2024add}. A concerning drawback for holographic codes is the difficulty in implementing fault-tolerant non-Clifford gates as pointed out by \cite{Williamson}, but transversal non-Clifford gates are permitted in codes that satisfy complementary operator algebra quantum error correction but not complementary subsystem quantum error correction \cite{Harlow_2017,Pollack_2022,Cao_2021}. One such example is the holographic quantum Reed-Muller (QRM) code \cite{QL}. A recent breakthrough  also allows one to construct a universal gate set that is addressable and fault-tolerant \cite{Steinberg:2025kzb}. 
However, most of these multi-qubit addressable gates still come at an overhead that scales non-trivially with the length of the code.

We first summarize our main conclusion as follows.

\begin{theorem}\label{thm:holoQRM}
    There exists an explicit construction of a \emph{constant-rate} holographic QRM code of hyperbolic tiling built entirely with QRM tensors where any interblock and intrablock logical $CZ$s on acting on a pair of logical qubits can be implemented transversally if the location of their bulk tensors are related by a discrete rotation in the tensor network. Furthermore, the logical $CZ$s acting on disjoint pairs of logical qubits in the same layer have depth 1.
\end{theorem}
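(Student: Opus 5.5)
We prove the theorem by giving an explicit network and then verifying the gates with operator pushing. Take a hyperbolic $\{p,q\}$ tiling, for instance the pentagon-type layout of Fig.~\ref{fig:1qbit_code}d. Place at each vertex an encoding tensor $\mathbf{V}_{[[15,1,3]]}$ of the punctured QRM code, with one leg kept as a bulk logical leg. The remaining legs are contracted radially: each tensor has a small number of legs pointing toward the center, say one or two, and all other legs point outward. This keeps every contraction at most $2$-isometric in the outward direction.

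\textbf{Parameters.} First we establish the code parameters.
\begin{itemize}
\item Since the QRM code is non-degenerate enough to be an isometry from any two legs, Proposition~\ref{prop:isometrictrace_sym} shows, layer by layer, that the global encoding map is an isometry.
\item Lemma~\ref{lemma:isodist} then gives distance at least $3$.
\item Hyperbolic growth means the number of bulk tensors is proportional to the number of boundary legs, so the rate is constant.
\end{itemize}

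\textbf{Step 1: local gadget.} The QRM tensor has a bitwise transversal $CS$. Since $CS^*=CS^\dagger$, we handle matching either by using $X$-deformed edges in the sense of Lemma~\ref{prop:3.2} or by tracking signs. By the Corollary on bitwise $CS$, a stack of two QRM tensors then carries a localized $CZ$ symmetry. This symmetry is supported on the logical legs and on a subset of physical legs: a face of the tetrahedron, as in Fig.~\ref{fig:multicopy_tensor}c and Appendix~\ref{app:cz}. We will show that, using this freedom, a $CZ$ entering a tensor on an inward leg can be routed to exit on a chosen set of outward legs only, avoiding the other inward legs. This is the analogue of the localized symmetries of the $[[14,2,2]]$ example.

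\textbf{Step 2: interblock gate.} Consider two copies of the network. Push $\overline{CZ}$ from a bulk tensor at radius $r$ in both copies outward through the radial ``light-cone'' (causal wedge) of that tensor. Lemma~\ref{lemma:operatormatching} ensures that on every traced edge the two pushed operators match. The physical $CZ$s produced on the boundary pair up qubit-by-qubit across the two blocks, so the gate is transversal.

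\textbf{Step 3: intrablock gate.} For two bulk qubits in the same block whose tensors are related by a discrete rotation $R$ of the tiling, we use the rotation covariance of the network. The wedge of the second tensor is the image under $R$ of the wedge of the first, and the boundary support maps by the same permutation. The $CZ$ flows therefore produce boundary $CZ$s between qubit $i$ and qubit $R(i)$; there is one such $CZ$ per boundary qubit of the wedge, so each physical qubit is acted on at most once.

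The hard part is the overlap region. The two wedges meet in deeper shared tensors, and there a two-leg $CZ$ lands on the same tensor or on a traced edge. On a traced edge, $CZ|\Phi^+\rangle=|\Phi_Z\rangle$, so the resulting Pauli $Z$ (or a $P$-type residue) must be cleaned using a stabilizer of an adjacent QRM tensor, as in Fig.~\ref{fig:CZ_steane}. Cleaning adds only single-qubit $Z$/$S$ gates, which can be absorbed into the depth-one layer. This is where I would actually have to verify that the QRM face symmetries (Appendix~\ref{app:cz}) suffice to keep the two flows disjoint on physical legs. If they do not, I would first try choosing wedges that go outward from each tensor only, which avoids the shared interior by construction.

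\textbf{Step 4: depth one.} For disjoint pairs in the same layer, the pushing wedges of distinct bulk tensors at equal radius can be chosen to be disjoint, the cleanable domains of Step 1 being used to make them so. The corresponding physical $CZ$ sets then act on disjoint qubits and can be applied in parallel, giving depth $1$. Steps 3 and 4, where wedges must avoid collisions while residues are still cleaned locally, are the main obstacle.
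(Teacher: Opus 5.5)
\textbf{Gap: the intrablock and depth-one claims are left unproved.} Your ingredients are the right ones: the localized $CZ$ symmetries of two stacked QRM tensors, operator pushing across plain traces, and rotation covariance. But the step that carries the theorem is the one you leave open, and your fallback for it fails.

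The intrablock and depth-one claims need the flows from two rotation-related bulk tensors, and from disjoint pairs in a layer, never to share a tensor. Pushing only outward does not ensure this. In your own network, tensors with two inward legs are exactly those joining adjacent branches. So the outward wedges of neighbouring tensors in a layer generically meet there, with the two halves of the $CZ$ entering one tensor on two different inward legs. That is not the $CZ\ket{\Phi^+}=\ket{\Phi_Z}$ situation you discuss, which needs the two flows at the two ends of a single edge. You give no mechanism for this case.

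Your network is also not actually explicit. Which legs point inward, which leg is the bulk leg, and how outputs are ordered are all left open, yet these choices are the whole content of the proof. In addition, routing a $CZ$ that arrives on a physical inward leg needs the $CZ^{\otimes 8}$ subcube symmetries coming from $X$-stabilizers. The face symmetry through the logical leg, which is what your $CS$ corollary gives, is not enough.

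\textbf{How the paper closes it.} The paper uses a specific edge-inflation orientation.
\begin{itemize}
\item $T_1$ tensors have the central leg $15$ inward and leg $0$ as the bulk leg.
\item $T_2$ tensors are $[[16,0,4]]$ with no bulk qubit, and have legs $0$ and $15$ inward.
\item All tensors in a layer are oriented identically, with outputs $1$--$7$ in one group and $8$--$14$ in the other.
\end{itemize}
Since $\{0,\dots,7\}$ and $\{8,\dots,15\}$ are complementary $3$-subcubes, a $CZ$ entering a $T_2$ on leg $0$ exits on legs $1$--$7$, and one entering on leg $15$ exits on $8$--$14$. A $T_1$ can clean a $CZ$ off leg $15$ onto seven outward legs avoiding its bulk leg. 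In this arrangement, each flow enters the neighbouring $T_2$ through leg $0$, while that tensor's leg $15$ is fed by an output group the neighbour's flow does not use. Hence:
\begin{itemize}
\item flows from rotation-related tensors are translates that never meet;
\item the boundary $CZ$s pair each qubit with its rotated image;
\item no residue ever needs cleaning;
\item disjoint supports give depth one.
\end{itemize}

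\textbf{Two further points.} First, the paper deliberately omits bulk qubits on $T_2$, and the rate stays constant. With a bulk leg on every tensor, as you propose, a $T_2$ has three inputs. A $CZ$ from its bulk leg must then exit on a subcube meeting both output groups, so you would need a separate parallelization argument. Second, drop the $X$-deformed edges. Since $CZ^*=CZ$, $CZ$ flows match exactly on plain Bell traces, whereas on $\ket{\Phi_X}$ edges $CZ\otimes CZ$ leaves a $Z\otimes Z$ residue.
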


\begin{theorem}[Holographic black hole code]
\label{thm:holobhcode}
    Consider a QRM or Steane-QRM black hole code where all the logical qubits are localized on the horizon of an AdS black hole tensor network, intrablock and interblock logical $CZ$s on acting on any pair of logical qubits are transversal and fault-tolerant. Furthermore, logical $CZ$s acting on disjoint pairs of logical qubits have depth 1. 
\end{theorem}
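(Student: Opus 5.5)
Our plan is to reduce the statement to the GHZ-hyperedge construction of Lemma~\ref{lemma:comptensorcode}, with the horizon playing the role of the GHZ tensor. We model the black hole code as the hyperbolic tensor network with a central disk removed. Each horizon tensor is a QRM tensor (or, in the Steane-QRM version, a QRM tensor surrounded by Steane tensors in the outer layers), and it carries exactly one logical leg. Every logical qubit therefore sits on a tensor adjacent to the horizon. Two such tensors are related by a discrete rotation of the tiling, and, after an appropriate choice of contractions, their outward operator flows can be taken to be disjoint. This is the feature that Theorem~\ref{thm:holoQRM} lacks: there, bulk qubits at different radii spoil the rotational equivalence, which is why addressability was only available within a layer.

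First, fix a pair of logical legs $i,j$ and push the logical $CZ$ (intrablock) or the interblock $CZ$ on two stacked copies, using the transversal $CZ$ symmetry of the $[[15,1,3]]$ QRM tensor (Sec.~2, Fig.~\ref{fig:multicopy_tensor}c). We choose the localized $CZ$ symmetry so that the flow leaves each horizon tensor only through legs pointing away from the horizon, and then continue it outward. Each subsequent QRM or Steane tensor carries a transversal $CZ$ symmetry, so at every junction the matching lemma (Lemma~\ref{lemma:operatormatching}) applies. The flow spreads into a wedge that ends on a boundary region $R_i$.

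For the intrablock gate, the two wedges from $i$ and $j$ must be paired leg-by-leg. By rotational symmetry the wedges are isomorphic, so $CZ$ acts transversally on pairs (leg in $R_i$, image of that leg in $R_j$). Wherever a $CZ$ lands on a traced edge instead, we use $CZ\ket{\Phi^+}=\ket{\Phi_Z}$ and clean the resulting $Z$ with a $Z$-stabilizer of the adjacent tensor, exactly as in Fig.~\ref{fig:CZ_steane}. This leaves a depth-one circuit of $CZ$'s plus single-qubit Paulis.

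Second, we address depth one for disjoint pairs. Wedges rooted at distinct horizon tensors can be chosen disjoint; this follows from the tree-like, negatively curved structure outside the horizon, which gives each horizon tensor its own outward causal wedge. Gates on disjoint pairs then touch disjoint physical qubits and can be applied in parallel.

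Third, for fault tolerance, we argue as in the second half of Lemma~\ref{lemma:comptensorcode}. Each physical $CZ$ couples one qubit in $R_i$ with one in $R_j$, so a single fault produces at most one error in each wedge. The distance-$3$ QRM or Steane components, whose shortened codes still have distance $\geq3$, then correct it. For interblock gates this is immediate, since the two copies are separate blocks. For intrablock gates we need that the errors in $R_i$ and $R_j$ are jointly correctable. We would argue this through Lemma~\ref{lemma:isodist}: the code distance is at least that of a single tensor along an isometric flow, so any two single-qubit errors in different wedges are correctable.

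We expect the main obstacle to be justifying the wedge choice. Specifically, we must show that the localized $CZ$ symmetry of the QRM tensor can always be routed so that the flow never re-enters the horizon and never overlaps another wedge. We would first try to establish this using the many localized $CZ$ symmetries on 3-subcubes (Appendix~\ref{app:cz}), which allow the flow to be sent into any chosen set of outgoing legs.
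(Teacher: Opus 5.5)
Your transversality argument is essentially the paper's. You push with the localized $CZ$ symmetries of the QRM tensor, match across bonds, and let the discrete rotation supply isomorphic flows. The fault-tolerance step, however, has a genuine gap.

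\textbf{Fault tolerance.} The shortened QRM code is $[[14,2,2]]$, not distance $\geq 3$. The paper notes that the $[[15,1,3]]$ code cannot be shortened to a $[[14,2,3]]$ code. So the $d_0'\geq 3$ hypothesis of Lemma~\ref{lemma:comptensorcode} fails. There is also no GHZ tensor here: excising the center simply leaves the horizon legs as logical legs. Concretely, every $T_2$ tensor in the black hole network (inputs on legs $0$ and $15$) can only \emph{detect} a single error.

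Your fallback via Lemma~\ref{lemma:isodist} does not rescue this. A distance bound of the form $\min\{d_1,d_2\}=3$ certifies correction of one arbitrary error. It does not cover the weight-two error that a single fault produces under an intrablock $CZ$.

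The missing idea is the paper's heralding argument:
\begin{itemize}
\item Outside the horizon there are no bulk qubits, so the network is a genuine concatenation.
\item An error spread onto a $T_1$ is corrected there by the distance-3 code.
\item An error spread onto a $T_2$ is detected. Detection locates it, so it reaches the next inner layer as a \emph{located} error.
\item In the worst case, two such located errors enter a single inner distance-3 block, which can correct two erasures. That block is a QRM $T_1$, or a Steane tensor in the heterogeneous code.
\end{itemize}

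\textbf{Disjointness and depth one.} Your justification here is wrong as stated. The region outside the horizon is not tree-like: each $T_2$ joins legs from two adjacent branches, so the wedges of neighboring horizon qubits share tensors. What is actually disjoint are the \emph{flows}, and this comes from the leg assignment inside each shared tensor. A $CZ$ entering a $T_2$ through leg $0$ exits on a face (legs $1$--$7$). A $CZ$ entering through leg $15$ exits on the complementary 3-subcube (legs $8$--$14$).

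This disjointness is also what your one-error-per-region count in the fault-tolerance step relies on. It is the routing you anticipate in your last paragraph, but it replaces the causal-wedge argument rather than supplementing it.

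\textbf{A superfluous step.} The $Z$-cleaning borrowed from Fig.~\ref{fig:CZ_steane} is not needed. Bonds in the edge-inflation network are radial and rotations preserve layers, so a paired $CZ$ never lands on both ends of one bond. Each rotation-related pair of bonds is handled directly by the matching lemma, since $CZ^*=CZ$.
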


\subsubsection{Addressable transversal $CZ$ in Holographic QRM codes}
The proofs are instructive in showing how operator pushing is used to identify transversal gates, hence we provide a sketch here. 

\begin{figure}
    \centering
    \includegraphics[width=0.8\linewidth]{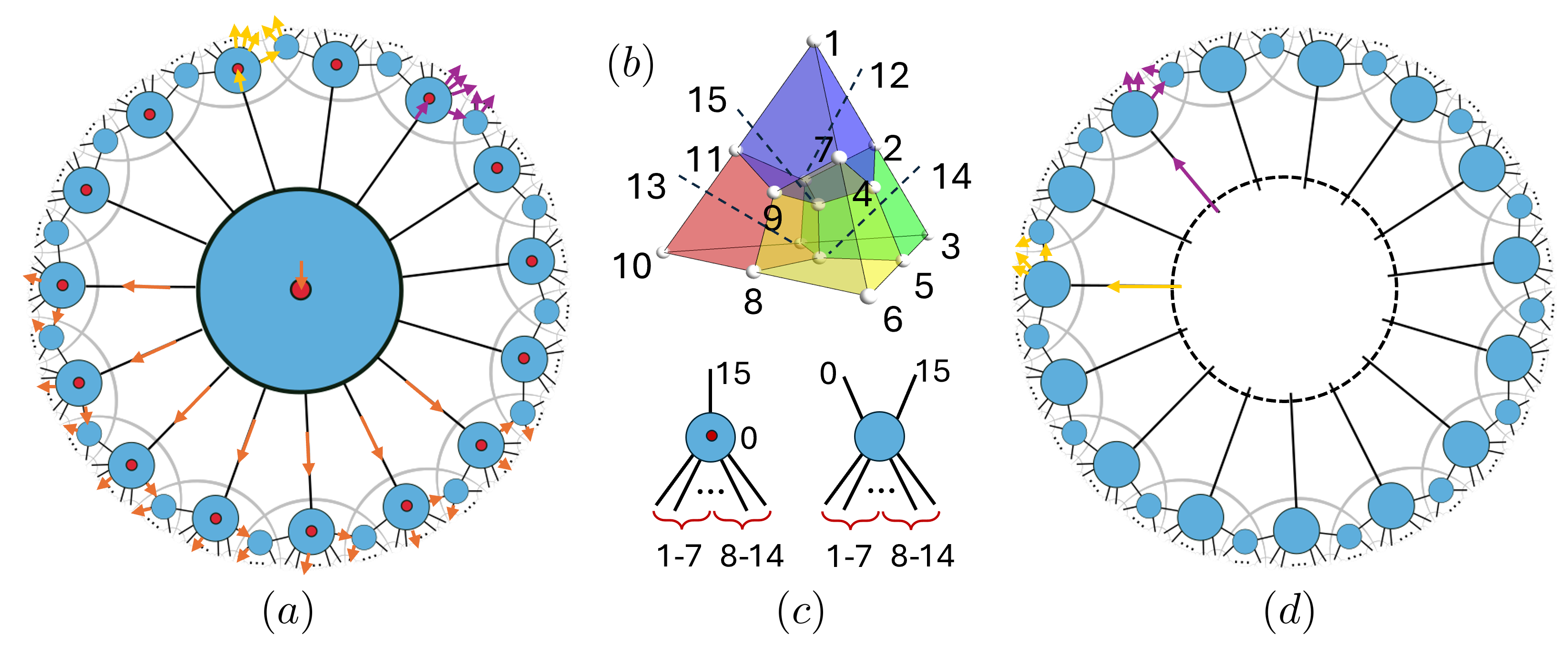}
    \caption{(a) A heterogeneous holographic Reed-Muller code that supports targeted $CZ$ gates. The support of an interblock $CZ$ acting on two bulk qubits in the center can be found following the orange flow while intrablock transversal $CZ$ on two bulk qubits in the first layer is supported on the purple and yellow flow. (b,c) Labelling of the tensors of each seed code. The logical leg of the encoding tensor is labelled 0. (c) left is $T_1$ while right is $T_2$. (d) A black hole QRM tensor network that encodes 15 logical qubits represented by the wires lying along the dashed circle in the middle of the bulk that represents the black hole horizon. The rest of the tensors have no bulk qubit, but consists only of $[[15,1,3]]$ seed codes and the shortened $[[14,2,2]]$ codes. }
    \label{fig:QRM_CZ}
\end{figure}

Consider a holographic code built entirely from $[[15,1,3]]$ QRM legos where the edges are traced without any Pauli insertion (Fig.~\ref{fig:QRM_CZ}a). We can build up such a code using edge inflation\cite{Jahn_2021,Steinberg:2025kzb}, where the tensor network is grown from a single $[[15,1,3]]$ seed in the center of the bulk. At layer 1, a seed with one leg facing inward is glued to the central tensor, while in layer 2, a $[[16,0,4]]$ tensor is added with two legs facing inwards that connects two tensor legs from two distinct branches that are adjacent to each other. At layer 3, each leg is again contracted with a seed with only one leg facing inward. This process is then repeated until some radial cut off. For convenience, we refer to all tensors with one leg facing inwards as $T_1$ and those with two facing in as $T_2$. 

We choose the orientation of the legs on the QRM lego as follows. For $T_1$ codes we use the centermost qubit (i.e. qubit 15 in Figure~\ref{fig:QRM_CZ}bc) as input and the logical qubit (qubit 0 in Fig.~\ref{fig:QRM_CZ}c) as bulk qubit. In $T_2$ tensors, we use the logical and centermost qubit legs as inputs and the rest as output. For conceptual simplicity, though this is unnecessary to ensure transversality, we also choose the placement of the output legs such that qubits $1-7$ in $T_1$ and $T_2$ are placed in the group on the left among output wires while the rest are placed the right (Figure~\ref{fig:QRM_CZ}c). 

From Lemma~\ref{lemma:isodist} and the fact that these seed tensors correspond to non-degenerate codes, we know that the max-rate holographic QRM code in Fig.~\ref{fig:1qbit_code}d has distance at least 3. The constant distance parameter is somewhat deceptive here as the distances of different bulk qubits are different. From AdS/CFT, we expect the qubits near the conformal boundary have lower distance whereas those closer to the center have higher distance. In a discrete model such as this one, to ensure that the bulk qubits in the interior have higher distance (and to ensure better parallelizability for the logical $CZ$ gates) we consider a finite but reduced-rate code where we do not add any bulk qubit on $T_2$s.  As a function between the bulk logical and the minimal radial distance $r$ (where $r$ is the number of layers) between it and the boundary, the logical qubit will have distance 
$d\geq 2^{r/2}$ because each inflation by $T_1$ doubles the distance in the worst case while inflation by $T_2$ does not add to the distance since it corresponds to concatenating some qubits with a $[[14,2,2]]$ code \cite{Steinberg:2025kzb,Cao_2021}. Suppose the radial coordinate of two logical qubit differ by $\delta r$, then their distance differ by at least $O(2^{\delta r/2})$ with the inner qubits having the higher distance. 
This holographic code has a slightly lower rate compared to the one in Fig.~\ref{fig:1qbit_code}d. It supports a transversal non-Clifford gate of $T$ type but is not bitwise transversal as $T$ and $T^{\dagger}$ need to alternate across adjacent seed codes thanks to the matching lemma. Pauli gates are clearly transversal and so are global bitwise CNOT gates. 

Recall from Fig.~\ref{fig:multicopy_tensor}c that the QRM lego admits localized symmetries of $CZ$ operators such that two copies of the 16-legged tensor is left invariant when $CZ$s act on 8 pairs of their legs. More precisely, $CZ^{\otimes 7}$ acting on any face of the tetrahedron implements a logical $CZ$ while $CZ^{\otimes 8}$ acting on the vertices of any colored cube leaves the code invariant. Using these facts, inter- and intra-block transversal gates can now be obtained via operator pushing (Fig.~\ref{fig:QRM_CZ}a). 

\paragraph{Transversal interblock gate.} In orange, an interblock $CZ$ between the central bulk qubits on two copies of the holographic code can be pushed to 7 other legs, which can then be cleaned off of leg 15 any $T_1$ tensors to another 7 legs that are facing outwards (e.g. to legs 4-9 and 14). As $CZ$ can always be cleaned off of the input legs of $T_1$ and $T_2$, we proceed similarly through subsequent layer, until one reaches the boundary. Then the interblock logical $CZ$ acts transversally on qubits where the orange flows are supported and identity elsewhere. The identical exercise holds for the other bulk qubits. It is clear that this operator pushing can always generate interblock $CZ$ that targets the two bulk qubits at the same location in the bulk between two copies of the holographic QRM code. But since different branches of this tensor network are actually identical, we see that logical $CZ$s between any two bulk qubits related by a discrete rotation symmetry (e.g. bulk qubits in layer 1) can also be pushed to the boundary as a transversal gate. 

\paragraph{Transversal intrablock gate.} The same argument also applies to intrablock operator pushing. Consider for instance a $CZ$ that acts on two bulk qubits in the same layer. A logical $CZ$ (yellow and purple arrow in Fig.~\ref{fig:QRM_CZ}a) can then be pushed to legs 1-7 on both QRM tensors, which then flow through leg 0 of adjacent $T_2$ tensors to legs 1-7 on the two $T_2$s. This pattern can be made identical across the two flows and  can be pushed to the boundary. Therefore, such intrablock $CZ$ gates are also transversal and are given by acting the tensor product of $CZ$s on where the yellow and purple flows are supported. Each $CZ$ straddles one yellow arrow with another purple arrow at the same position related by a translation on the boundary. 

As bulk qubits in different layers do not admit obvious identical subgraphs in the operator flow, it is most likely that this code does not permit transversal $CZ$ gates that connect bulk qubits in distinct layers. 
We leave a proof of this for future work on specific holographic codes. Nonetheless, each layer of the code constitutes a subblock within each code where any two bulk qubits can talk to each other transversally as long as there is a discrete rotation symmetry.

\paragraph{Black hole QRM code.} Similar to \cite{Pastawski_2015,Steinberg:2025kzb}, one can also consider a black hole code (Fig.~\ref{fig:QRM_CZ}d) where instead of having bulk qubits sitting at different radii from the center, we can cut out the central tensors of the holographic tensor network such that the bulk logical qubits all lie on the ``horizon'' of the black hole and not assign bulk qubits elsewhere. This amounts to changing the contraction of $T_1$ tensors slightly, where leg $0$ is used as the inward facing leg while the other fifteen legs are outward facing. Then the rest of the network is built up the same way as before using edge inflation. Then inter- and intra-block operator pushing follows similarly. Yellow and purple arrows in Fig.~\ref{fig:QRM_CZ}d  represent how intrablock $CZ$s acting on two bulk qubits on the horizon can be pushed to the boundary. In this case, any pair of bulk qubits are now equally addressable, and the same holds for interblock $CZ$ gates, since all logical qubits lying on the horizon are identical each other up to a discrete rotation. 

For both the reduced rate QRM and the black hole codes above, the $CZ$ gates are also parallelizable as long as they act on disjoint sets of logical qubits. These logical gates can be parallelized as long as the operator flow of distinct logical gates do not intersect. For $CZ$s acting on disjoint pairs of logical qubits in the same layer, by construction their outflows can be chosen to be identical and non-overlapping. A slightly more careful treatment is needed when multiple logical $CZ$s  act in parallel on bulk qubits in different layers in the reduced rate QRM code. With our current arrangement, the support of such $CZ$s do not overlap since operator flow from the in-plane legs and those from the logical legs have no overlapping support by construction (Fig.~\ref{fig:QRM_CZ}c). Explicitly, the former flow to legs $1-7$ while the latter to legs $8-14$ on $T_1$. Similarly, any $CZ$ flow through leg $0$ on $T_2$ will go through legs $1-7$ while that going into $15$ will come out out on legs $8-14$. It then follows that the resulting flows from two different layers are completely disjoint from each other. Therefore, $CZ$s from different layers in this code are completely parallelizable.

\paragraph{Fault tolerance.} While the interblock $CZ$ gates are clearly fault-tolerant as each code block can corrects single qubit errors, the intrablock fault-tolerance is more nuanced. However, thanks to the expansion property of the holographic code, they too can be fault-tolerant. Consider any single-qubit error on the boundary. An intrablock $CZ$ gate propagates it to two errors on the boundary. However, each of these errors is fed into a distinct QRM lego that is either an error correcting code $T_1$ or an error detecting code $T_2$. 

For the finite-rate QRM code, contraction with $T_1$ and $T_2$ are nothing but partial concatenation with $[[14,2,2]]$ codes. As such, these errors are detectable. However, correctable errors in this code do not always propagate to correctable errors, and the intrablock $CZ$ gates are not fault-tolerant.

For the black hole code, a single boundary error still propagates to two distinct errors on either two $T_1$ tensors or two $T_2$ tensors. Because $T_1$s here have no bulk qubits, their contractions correspond to the concatenation of $[[15,1,3]]$ codes. Hence errors here can be corrected with one round of syndrome measurement. If the errors propagated to $T_2$ blocks, then the $T_2$ blocks themselves can only detect these errors but not correct them. However, error detection allows us to locate these errors, and the worst case amounts to having two located errors from $T_2$ propagating into a single $T_1$ block in the inner layer. These located errors can be corrected on a $d=3$ code, hence a correctable error propagates to correctable errors and the gate is fault-tolerant.

\subsubsection{Addressable Transversal $CZ$ gate in Heterogeneous Steane-QRM code}
We now consider the heterogeneous Steane-QRM code constructed in \cite{Steinberg:2025kzb} which supports a universal set of fault-tolerant gates. Edge inflation is also used to produce a holographic code but with alternate layers of QRM and Steane codes. Ref. \cite{Steinberg:2025kzb} shows that these codes support addressable fault-tolerant but non-transversal gate sets consisting of $H,T$ and $CNOT$. Here we show that transversal $CZ$s, instead of fault-tolerant $CNOT$s, can be used to complete the universal set using operator flow we discussed in earlier sections. For the sake of concreteness, consider the construction of QRM-Steane iterations from the center to the boundary where Steane codes are only used for $T_1$ tensors while QRM are used for both $T_1$ and $T_2$ tensors. Both the zero rate code which encodes a single qubit in the bulk  (Fig. 1c of \cite{Steinberg:2025kzb}) and the black hole code (Fig. 6c of \cite{Steinberg:2025kzb}) which encodes multiple qubits living on the horizon can be constructed. 

\begin{figure}
    \centering
    \includegraphics[width=0.9\linewidth]{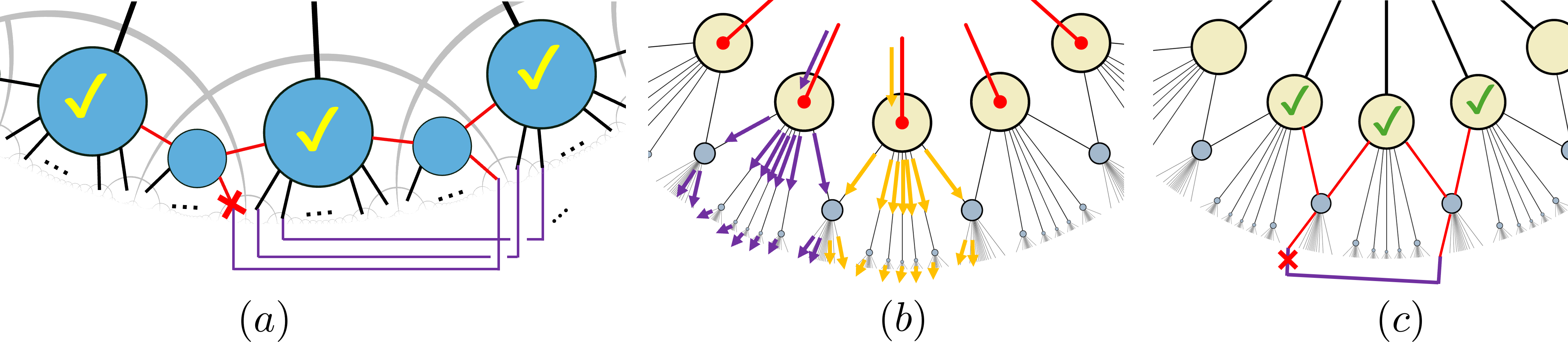}
    \caption{(a) In a black hole QRM code, a boundary error can propagate to at most 4 located errors (red) in the inner layer, which are then correctable. (b) Intrablock $CZ$ gates can be pushed from the bulk qubits (red) to the boundary following the operator flows. Here the beige tensors are Steane codes and blue tensors are QRM codes. Each purple and yellow arrow marks one of the qubits on which physical $CZ$ gates acts. (c) Intrablock gates are fault-tolerant in the Steane-QRM code where error propagates along red lines after acting physical $CZ$ (purple), but can be corrected at the Steane layer. }
    \label{fig:QRM_FT}
\end{figure}

For the zero rate code, it suffices to consider interblock transversal gates. The key difference between the holographic QRM code above and the heterogeneous QRM-Steane code is that $CZ$ is completely transversal on Steane tensors. However, this does not pose any obstruction to operator pushing because QRM layers can push any $CZ$ operators flowing through its input in both $T_1$ and $T_2$ tensors (Fig.~\ref{fig:QRM_CZ}c). Although $CNOT$ is also interblock transversal in this code, it is bitwise transversal where one has to perform CNOTs on all physical qubits. Logical $CZ$ on the other hand only has support on a small subregion where the operator flow has support. With 3 layers in the tensor network consisting of QRM, Steane, and QRM as we move from the center outwards, a logical $CZ$ only requires $49$ physical $CZ$s whereas the original transversal $CNOT$ in \cite{Steinberg:2025kzb} is $\overline{CNOT}=CNOT^{\otimes n}$ with $n=1335$. This reduction is even more significant as we increase the number of layers, where the overhead saving is exponential in the number of layers even though both types of gates are transversal. 

For the black hole code, we have both inter and intrablock $CZ$ gates. For the former, operator flow proceeds similarly as the zero rate code, because QRM codes are 2-cleanable with respect to $CZ$ gates. The bulk qubits on the horizon are again related to each other by a discrete rotation, hence interblock transversal $CZ$ gates are permitted between any two bulk qubits, same as the black hole QRM code. Intrablock $CZ$ gates are also transversal. This is again because QRM are 2-cleanable and thus any $CZ$ flows on the input legs can be pushed to the boundary without obstruction (Fig.~\ref{fig:QRM_FT}b). For the same reason, $CZ$ gates on disjoint pairs of logical qubits are parallelizable as in the reduced rate and black hole QRM code.

These transversal $CZ$ gates are also fault-tolerant in the zero rate and black hole codes. Interblock transversal gates are clearly fault-tolerant already. Intrablock gates in the black hole code are fault-tolerant in the same way as the black hole QRM code. Since any boundary error can propagate to two errors from the physical $CZ$s, the worst case is when it occurs on a $T_2$ block where errors can only be detected but not corrected. However, the error detection on $T_2$s herald the location of these errors and thus can be subsequently corrected at the inner Steane layer where each Steane tensor has distance 3 and corrects 2 located errors (Fig.~\ref{fig:QRM_FT}c). 

 Although a comprehensive analysis of the Steane-QRM fault-tolerant overhead has not yet been done in \cite{Steinberg:2025kzb}, it is clearly better to use transversal $CZ$s as the two-qubit gates instead. For black hole  codes of $k>1$, a naive implementation of addressable $CNOT$s for certain logical qubits can take a depth-20 circuit~\cite{Steinberg:2025kzb} even with $r=2$ layers where we do not include the central tensor as a layer. Generically, because of the global $CNOT$ symmetry on both the Steane and QRM code, addressable $CNOT$s are not transversal and the depth grows with the number of layers where the worst case scaling can result in a depth $O(10r)$ gate, or $O(\log(n))$ where $n$ is the length of the code. Furthermore, special arrangements on the gates are often needed to ensure fault tolerance.

\subsection{Generalized Holographic Code}
\label{subsec:genholo}

\subsubsection{Construction and transversal gates}
We now use the insight above to construct some examples of holographic codes that admit transversal addressable diagonal gates like $T$ and $C^\ell Z$. 
A large class of finite rate and black hole holographic codes can be build to support addressable inter/intrablock transversal diagonal gates using the $b>2$-branch codes we introduced in Definition~\ref{def:branchcode}. They are structurally similar to the holographic QRM and the heterogeneous QRM-Steane code \cite{Steinberg:2025kzb} in the previous section.

We first notice that transversal addressable gate in these codes arise from operator flows that start in any point in the bulk and end up on the boundary without crossing. In the above example, this rests on having localized transversal $CZ$ gates in a single atomic lego. Here we will use the $(b>2, \mathcal{U})$-branch codes we introduced earlier as seed codes which are built by concatenating a $b$-qubit bit-flip repetition code with $b$ lego blocks that have completely transversal $T$ or $C^\ell Z$ gates. For instance, concatenation by the $[[15,1,3]]$ code will produce 2-cleanable branch codes with $\mathcal{U}=\{T,CS,CCZ\}$.

Conjoining these $(b,\mathcal{U})$-branch codes with $d\geq 2$ in a hyperbolic geometry, we can create a holographic code using edge inflation. Again, we orient these seed codes as $T_1$ and $T_2$ tensors that have 1 or 2 legs respectively facing towards the center of AdS when contracted. Each leg of the central tensor is contracted with another $T_1$ atomic block at the first layer with the remaining legs facing outward. Then all outward facing legs are contracted with a $T_1$ block except the two edges that come out of two different atomic blocks --- they are contracted with a $T_2$ tensor in the next layer where the two contracted legs belong to two distinct branches (e.g. red and blue legs feeding into the tensors on the outermost layer in Figure~\ref{fig:transversal_holo}). The rest is identical to the holographic codes in Sec.~\ref{subsec:hqrm}. By construction, we see that operator pushing from any bulk qubit can flow to the boundary without obstruction because each seed code is 2-cleanable by Lemma~\ref{lemma:branchcode} and each $T_1$ or $T_2$ only need at most 2-cleanability to push the operator flows from their ingoing to outgoing legs. For simplicity, we also arrange the ordering of all legs across all $T_1$ (or $T_2$) tensors to be identical in the same layer. 

\begin{figure}
    \centering
    \includegraphics[width=0.9\linewidth]{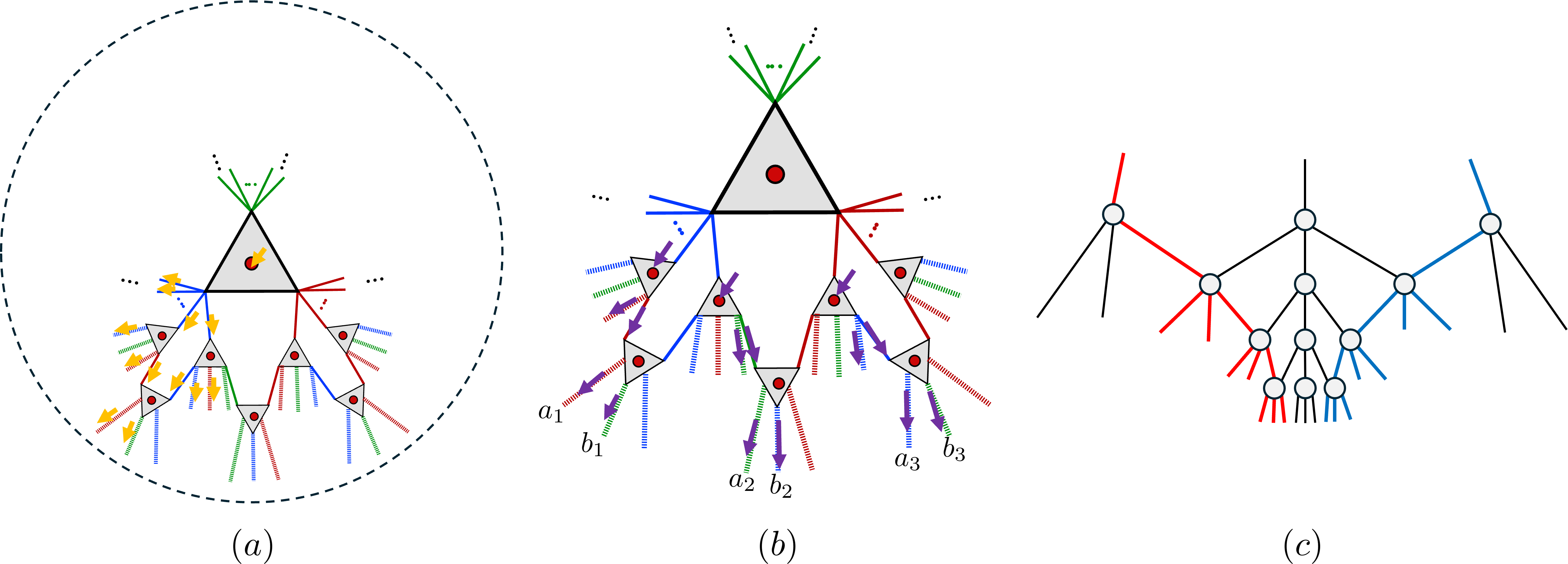}
    \caption{A holographic code created from the tri-color atomic blocks in a tensor network representation. Each colored dashed line represents more than one qubit whereas solid line represents one qubit degree of freedom, a red dot represents a logical qubit. (a) A interblock transversal gate represented as operator flow. (b) An intrablock 3-qubit operator acting on 3 qubits in the same layer of the bulk can be pushed. (c) Two operator flows supported on red and blue edges never merge due to expansion.}
    \label{fig:transversal_holo}
\end{figure}

\begin{remark}
    The holographic codes constructed above have finite rate.  This is because each tensor is cleanable on the legs that are contracted by Lemma~\ref{lemma:branchcode}. Using Theorem~\ref{thm:tnc}, we know that the number of logical qubits add under each tensor contraction. Since the number of boundary (physical) qubits scale the same way as volume in a hyperbolic tiling, the rate is finite for any fixed $b$.
\end{remark}

\begin{theorem}\label{prop:5.3}
    A holographic code constructed using edge inflation and $\{b>2,\mathcal U\}$-branch seed codes of $d>1$ support transversal addressable logical $U\in \mathcal{U}$ gate on any single logical qubit if $U$ is a single-qubit phase gate $P$ such that $P^N=I$. 
    If $U$ is a $C^\ell P$ interblock, intrablock, or mesoblock gate acting bulk qubits on the same type of tensor nodes (i.e. $T_1$ or $T_2$) in the same layer, then $U$ is transversal. If multiple $C^\ell P$s act on disjoint subset of logical qubits on the same layer and type, then they are also parallelizable.
\end{theorem}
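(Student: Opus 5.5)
The plan is to argue by operator pushing through the edge-inflated network, layer by layer from the bulk outward, using the Matching Lemma~\ref{lemma:operatormatching} at every contracted edge. The key local input is Lemma~\ref{lemma:branchcode}: for each $(b>2,\mathcal U)$-branch seed with $d>1$, any power of $U\in\mathcal U$ sitting on one or two input legs (lying in distinct branches) can be cleaned onto output legs, where it appears as a tensor product of powers of $U$ supported on whole branches. Since $U$ is diagonal and built from $Z$-spiders and H-boxes, $U^*$ is again a power of $U$ (for instance $P^*=P^{-1}=P^{N-1}$). Hence the matching condition on an undeformed Bell edge can always be met by pushing the appropriate power into the next tensor. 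If $X$-deformed edges are used, Lemma~\ref{prop:3.2} gives the same conclusion. Throughout, the encoding map is taken to be an isometry with logical legs exactly the bulk legs, as guaranteed by isometric tracing (Lemma~\ref{lemma:isodist}) and $d>1$. Proposition~\ref{prop:duallogical} then turns any product symmetry supported on one bulk leg together with boundary legs into a transversal logical gate.

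\textbf{Single-qubit phase gates.} Start with $P$ on one bulk leg of a tensor in layer $r$. Apply Lemma~\ref{lemma:branchcode} to push it into a single branch of that seed; this branch consists only of outward legs, because inward legs can be cleaned off (2-cleanability). Each outward leg is either a boundary leg or the inward leg of a $T_1$/$T_2$ in layer $r+1$. In the latter case the flow enters with at most two incoming legs from distinct branches, since $T_2$ nodes were glued to two distinct branches by construction. We then clean again into outward-facing branches. Because the tiling is hyperbolic and the flow only moves outward, the process terminates at the boundary after finitely many layers. It never returns to any tensor it has already visited, and it never touches a different bulk leg, so the result is addressable and transversal.

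\textbf{Multi-qubit $C^\ell P$ gates.} For interblock gates, we stack $\ell$ copies, so that each leg is thickened. The same single-flow argument applies verbatim, now with the thickened-leg version of Lemma~\ref{lemma:branchcode}. Mesoblock and intrablock gates on bulk qubits $q_1,\dots,q_\ell$ in the same layer and of the same node type are handled as follows. First, pick a representative flow for one qubit. Second, use the discrete rotation symmetry of the layer, together with the identical leg ordering we imposed on all $T_1$ (resp. $T_2$) nodes in a layer, to produce congruent flows for the other qubits. These flows reach the boundary on sets $S_1,\dots,S_\ell$ with a fixed bijection between them induced by the rotation. The physical gate is then $C^\ell P$ applied to each matched $\ell$-tuple of boundary qubits, decomposed via the $Z$-spider/H-box identity of Fig.~\ref{fig:zxhcalc}.

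\textbf{Main obstacle.} The crux is showing that the flows are pairwise disjoint at every layer. If they merged, a single tensor would receive $C^\ell P$ pieces from different logical origins and matching might fail. The first tool to try is the expansion of the hyperbolic tiling: each flow is confined to the outward cone (light cone) of its starting node. Under edge inflation these cones can share at most the $T_2$ nodes lying between adjacent branches. At such a node the two flows enter on legs from distinct branches, and we choose the cleaning so that each exits on its own disjoint branch, as in Fig.~\ref{fig:transversal_holo}c. This uses $b>2$, since a third branch is needed to absorb the compensating power of $U$ from Lemma~\ref{lemma:branchcode}. Parallelizability follows from the same disjointness argument applied to several gates on disjoint qubit sets in the same layer and type: their flows have disjoint supports, so the product of the individual depth-one circuits is still depth one. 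The restriction to the same layer and type is precisely what the rotation-symmetry step requires, which is why the statement does not claim more.
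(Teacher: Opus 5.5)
Your overall plan matches the paper's: outward operator pushing with Lemma~\ref{lemma:branchcode} as the local move, rotation symmetry plus identical leg ordering to produce isomorphic flows, and disjointness of those flows as the crux. The single-qubit $P$ part is fine.

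The gap is in how you resolve the crux. You let flows from two different logical origins enter a common $T_2$ node on distinct branches. You then claim each can exit ``on its own disjoint branch'' while a third branch absorbs the compensating power. These two claims contradict each other.

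\begin{itemize}
\item The two-input move of Lemma~\ref{lemma:branchcode} places $(C^\ell P^{a_k})^{\otimes n_0}$, with $a_k\equiv-(a_i+a_j)$, on a single third branch, so both flows land there.
\item More generally, each incoming flow occupies its entry branch plus one further branch. Keeping two flows leg-disjoint inside one tensor would therefore need $b\ge 4$, not $b>2$.
\item That two-input move is a symmetry of $\ell$ stacked copies in which both inputs belong to the same thickened flow. When two different flows of one intrablock gate meet at a tensor, that tensor's legs are paired with two different partner tensors. Your ``congruent flows with a fixed bijection'' picture then breaks down.
\item Boundary qubits on the shared branch would be hit by two $C^\ell P$ gates, so the result is no longer transversal.
\end{itemize}

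This is exactly the merging you said must be avoided. Your light-cone remark does not rescue it either. The cones of adjacent sources share not only the $T_2$ node between them but its entire subtree of descendants, so cone reasoning alone cannot give disjointness.

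What the argument needs is that isomorphic flows from distinct same-layer, same-type sources never enter a common tensor at all. The paper obtains this by routing the outgoing flows (possible since $b>2$, so some branch has all legs pointing outward) so that flows from different sources are always separated by at least one branch carrying no flow. Because all tensors in a layer carry identical leg and branch orderings, this separation is reproduced at each subsequent layer, and by the expansion of the tiling the flows never cross (Fig.~\ref{fig:transversal_holo}c). Once the isomorphic flows are node-disjoint, the intrablock and mesoblock cases reduce to the interblock one. Parallelizability of $C^\ell P$ gates on disjoint qubit sets then follows directly.
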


\begin{proof}

We convert the tensor network into a graph where we assign a bulk vertex from the set $V_{B}$ to each logical leg and a boundary vertex $V_{\partial}$ to each physical leg. We denote the internal vertices that come from tensor nodes as $V_I$. The dangling edges then become graph edges that connect elements of $V_I$ with those of $V_B$ or $V_\partial$. Let such a graph be $G=(V_B\cup V_I\cup V_\partial,E)$. Operator flows that traverse through $G$ originate in $V_B$ and terminate in $V_\partial$. To show the (addressable) gates are transversal, it suffices to demonstrate that the logical operators can be pushed to the boundary without inconsistencies. A flow is inconsistent if two different operators can act on the same point in the tensor network.

We focus on the phase $P$ or $C^\ell P$ flows as the Pauli flows are unremarkable. For each $P$ flow that originates in any $v_s\in V_B$, we can push it from the logical node $v_s$ to edges $e=(v_s,v)$ where $v\in V_I$ is the tensor node connected to the bulk qubit $v_s$. This flow then goes through the neighbours of $v$ in the $i$th branch of the seed code --- the logical flow goes through all legs belonging to a single branch (i.e. same color) as a $P$ flow. Because $b>2$, we can always choose this branch to be the branch that faces towards the UV direction (or boundary) of the network. Each of the edges is then fed into another neighouring node $v'$. If $v'\in V_\partial$, then the flow line terminates. If $v'\in V_I$, then we change $P$ to $P^{\dagger}$ in the flow. 

There are two types of internal nodes, ones belonging to $T_1$, and hence receiving a single $P^a$ flow on a branch $i'$ for $a=1,\dots,N$, or ones that belong to $T_2$, which receives two flows $P^a, P^b$ on two of its branches $i',j'$. Both types of flows are cleanable by Lemma~\ref{lemma:branchcode}. For the first type, all outgoing legs point towards the boundary. Hence this $P^a$ flows to its downstream neighbours towards the boundary on the remaining legs on branch $i'$ as well as as a $P^{N-a}$ flow on another branch $k'\ne i'$. Note that since $b>3$, there is at least one more branch on which there is no flow. If a logical $P$ gate acts on the input node $v_s'$ adjacent to $v'$, then it can be pushed to the branch that has no flow, though this is not required by the consistency condition. 

For the second type, we have two incoming $P^a, P^b$ flows into the tensor for some integer $a, b=1,\dots, N$. Therefore, the remaining legs in branches $i',j'$, which face outward, now have $P^a,P^b$ flowing through them respectively. In addition, the total net flow at this tensor junction $v'$ has to be conserved so that there is an outflow of $P^c$ on branch $k'\ne i',j'$ for some integer $c$ such that $a+b+c=g$ if a logical $P^g$ operator acts on the logical node $v_s'$. As such, for a third branch with flow $T^{(a+b-g)}$ is needed. The same pattern above recurs until all flow lines terminate on vertices in $V_\partial$.
We first note that because flows always reach the sink, all logical $P$ gates are addressable and can be represented as the type of operator (tensor product of $P^a$ where generally a different value of $a=1,\dots,N$ is needed for each site) on the boundary.

Apply the above layer to layer until one reaches the boundary, we thus produce a valid physical representation of the logical $P$ operator via operator pushing.

The $C^\ell P$ gates act transversally on bulk qubits where operator can be pushed to the boundary following a flow of the same pattern; except the flow acts on $\ell$ blocks all at once. Let the support of the operator flow for phase operator $P$ from sources $S=\{v_s\}$ on a single code block to the boundary sinks $K\subset V_\partial$ is given by the minimal induced subgraph $F_{S}$ on which the flow traverses. Two flows are isomorphic if the induced subgraphs $F_S\cong F_{S'}$ with the constraint that the isomorphism maps $S$ to $S'$ and $K$ to $K'$. A logical $C^\ell P$ can flow from the sources to the sink if one can identify $\ell$ non-intersecting isomorphic flows. 
For interblock gates, the $\ell$ isomorphic flows will come from distinct code blocks, i.e., disconnected copies of the tensor networks. By construction, such isomorphic flows exist for all flows originate from sources with the same tensor type (i.e. $T_1$ or $T_2$) within the same layer. These flows clearly do not intersect as they act on different copies of the tensor network.

Intrablock $C^\ell P$ gates requires one to identify $\ell$ isomorphic flows within a single code block. We need to also demonstrate that these induced subgraphs can be identified on a single tensor network without overlap. 

Consider operator flows originate from the tensors of the same type in the same layer. 
For $b>2$, we notice that outgoing operator flows separated by at least one leg/branch never cross (Fig~\ref{fig:transversal_holo}c). 
Regardless of whether the sources are on $T_1$ or $T_2$, there is always at least one branch where all its legs are pointing toward the boundary. Because the legs and branches of the tensors are ordered identically in the same layer (Fig.~\ref{fig:transversal_holo}b), the outgoing flows will always be separated by at least one other branch on which there is no flow. Hence these flows will not cross downstream.

As long as there are no fewer than $\ell$ such sources on a single layer, we can then generate the required isomorphic flows $\{F_s^{\alpha}, \alpha =1,\dots, \ell\}$.  Let the sink $K_\alpha$ of each $F_s^{\alpha}$ to have size $|K|$, then the intrablock logical operator is given by
$\overline{C^\ell P}= \bigotimes_{i=1}^{|K|} C^\ell P^{a_i}$ where each $C^\ell P^{a_i}$ gate acts on exactly one qubit from from $K_\alpha$ with $a_i=1,\dots,N$. 

For the mesoblock $C^\ell P$ gates, we allow some isomorphic flows $F_s^{\alpha}$ to act on the same block while others across different blocks. This follows trivially from the analysis above. 

Finally, for parallelizability --- $P$s can be inserted on a mix of $T_1$ or $T_2$ blocks in the same layer. For such logical operators, the outgoing flow can indeed merge in some tensors. However, for $P$ gate does not not impact parallelizability because their physical implementation consists only of powers of $P$ gate, which are diagonal and hence commute with each other. If the operator flows overlap, we simply add the power of $P$s in the outgoing flows to implement logical T gates inserted at these locations at the same time.

For $C^\ell P$ gates, above analysis indicates that $F_s^{\alpha}$ from any two distinct bulk qubits on the same layer already do not intersect. Therefore, the operator flows generated by multiple $C^\ell P$s acting on disjoint subsets of bulk qubits on the same layer (and same tensor type) do not intersect either, and such gates are indeed parallelizable.

\end{proof}

Similar to Sec~\ref{subsec:hqrm}, we can construct a black hole code where we first define a holographic state with a rank-$m$ central tensor. Each physical leg of this tensor is contracted with $T_1$ tensors on each leg. Each $T_1$ tensor now is precisely a branch code where the logical leg faces towards the center. Then for outgoing legs, we again contract a $T_1$ tensor except two legs coming from adjacent tensor nodes, which we contract with a $T_2$ tensor. Each $T_2$ tensor has two legs facing towards the center, one of which is the logical leg of the branch code, while the other a physical leg from one of the branch seeds. Note that neither $T_1$ nor $T_2$ here introduces any additional bulk qubits. Finally, for each layer with $T_2$, the next layer is contracted with only $T_1$ tensors. This repeats until the boundary cut off. We then excise the central tensor of the tensor network, which creates a black hole-like code with $k=m$.

\begin{corollary}[$b$-branch black hole codes]
\label{coro:5.1}
    The black hole code constructed above with $\{b>2,\mathcal U\}$-branch codes with $d>1$ seed codes support addressable logical $U$ gate on any single qubit logical qubit if $U$ is a single-qubit phase gate. Furthermore, if $U=C^\ell P$, then it is interblock, mesoblock, and intrablock transversal on any $\ell$ of the bulk qubits. If multiple $C^\ell P$s act on disjoint subset of logical qubits, then they are also parallelizable.
\end{corollary}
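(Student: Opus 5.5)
The plan is to reduce the corollary to the proof of Theorem~\ref{prop:5.3}. The black hole network is obtained from the holographic network by excising the central tensor, so I will first describe its graph $G=(V_B\cup V_I\cup V_\partial,E)$ in the same way as in that proof. After excision, the $m$ dangling legs become the bulk vertices $V_B$. Each of them attaches to the logical leg of a layer-one $T_1$ branch code. Because none of the other $T_1$ or $T_2$ tensors carries a bulk qubit, all sources of operator flow lie on the horizon, in the same layer, and on nodes of the same type ($T_1$).

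Next I check that the network is an isometric encoding of $k=m$ qubits, so that Prop.~\ref{prop:duallogical} converts symmetries into logical gates. Each branch code with $d>1$ is 1-isometric from any leg. Each $T_2$ has one inward logical leg and one inward physical leg lying on a distinct branch, so it is isometric on those two legs. This follows from Lemma~\ref{lemma:branchcode}, since two legs on distinct branches are cleanable, and by Lemma~\ref{lemma:isodist} the encoding map then has no kernel.

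With the graph in hand, I treat single-qubit phase gates first. A logical $P$ on a horizon qubit enters its $T_1$ node through the repetition-code logical leg. There it can be placed on one outward branch. Downstream, every $T_1$ receives a single $P^a$ flow and every $T_2$ receives at most two flows on distinct branches. Lemma~\ref{lemma:branchcode} (2-cleanability, $b>2$) then supplies the conserved outflow $P^{c}$ on a third branch. As in the theorem, each trace conjugates $P\mapsto P^\dagger$, and the flow propagates layer by layer to $V_\partial$. The one difference from the theorem is that a $T_2$'s two inward legs are the logical leg of the branch code and one physical leg. A flow through the logical leg goes into a single branch by the localization lemma for branch codes, so the same counting applies.

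For $C^\ell P$, I will use the isomorphic-flow argument. All horizon sources are related by the discrete rotation symmetry of the network. I therefore fix identical leg orderings in each layer, pick the outward branch consistently, and obtain $\ell$ isomorphic flows $F^\alpha_s$ from any $\ell$ horizon qubits. For the interblock case these flows live on different copies and are automatically disjoint. For the intrablock and mesoblock cases I need them disjoint within one copy.

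The main obstacle is exactly this disjointness together with parallelizability for arbitrary pairs of horizon qubits, including adjacent ones whose $T_1$ nodes feed the same $T_2$. My approach is to reuse the expansion argument of Theorem~\ref{prop:5.3}. Since $b>2$, every node has a branch carrying no flow, which separates neighbouring outflows. When two adjacent flows enter a shared $T_2$, I route them into distinct branches and send the conserved third-branch flow into a distinct branch as well. The isomorphism between flows should then still hold up to the rotation symmetry, because the shared $T_2$ pattern repeats around the horizon. Once disjointness holds, $\overline{C^\ell P}=\bigotimes_i C^\ell P^{a_i}$ on the matched sinks follows. Parallelizability for disjoint sets of logical qubits follows because the flows of distinct gates are non-intersecting, and for single-qubit $P$ gates because diagonal powers of $P$ commute and add where flows overlap.
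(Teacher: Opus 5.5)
Your overall route matches the paper's. You describe the black-hole network as a flow graph, note that every source lies on a horizon $T_1$ in a single layer, and rerun the flow argument of Theorem~\ref{prop:5.3}. The gap is in the disjointness step for intrablock and mesoblock $C^\ell P$, and hence for parallelizability.

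You propose to let two adjacent flows enter their shared $T_2$ (call it $T_2^{AB}$) and then route them out on distinct branches. This fails for two reasons.

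\emph{The flows intersect.} Both flows pass through the same tensor node, so the $C^\ell P$ arriving there acts between two legs of a \emph{single} tensor. The only symmetries you have, from Lemma~\ref{lemma:branchcode}, are symmetries of $\ell$ stacked copies of a branch code, with $C^\ell P$ acting across copies on corresponding legs. Nothing lets you push a $C^\ell P$ that couples the logical leg and a physical leg of one $T_2$.

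\emph{The flows are not isomorphic.} In the black-hole $T_2$, one inward leg is the branch-code logical leg and the other is a physical seed leg, so the two flows enter through inequivalent legs. The rotation taking source $A$ to source $B$ maps $T_2^{AB}$ to a different $T_2$, not to itself.

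So the requirement of ``$\ell$ non-intersecting isomorphic flows'' is exactly what breaks. Your parallelizability claim, which rests on non-intersection, inherits the same problem.

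The missing observation is that in the black-hole network a flow never needs to enter a $T_2$ at all. Every $T_1$, at every layer, receives its flow on its logical leg. By the localization lemma for branch codes, you can therefore emit the flow on any single branch you like. Each tensor has at most two legs contracted with $T_2$'s (one shared with each neighbour), so at most two of its $b$ branches are affected. With $b>2$ there is always a branch all of whose legs feed $T_1$'s, and with identical leg ordering you can use the same branch index throughout a layer.

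With this routing, each flow stays inside the tree of $T_1$'s hanging off its horizon tensor. Every $T_1$ has a unique inward leg, so these trees are disjoint for distinct horizon qubits, and the discrete rotation makes them isomorphic. This is the content of the paper's remark that the flows are ``even harder to intersect.'' Intrablock, mesoblock, and parallel $C^\ell P$ gates then follow directly.

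A minor point on your isometry check: it should rest on Pauli cleanability, which holds because $d>1$ makes any single physical leg correctable. Lemma~\ref{lemma:branchcode} only concerns powers of $U$, and Lemma~\ref{lemma:isodist} covers a single trace, not the two-leg contraction at a $T_2$.
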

\begin{proof}
    The proof is similar to that of Theorem~\ref{prop:5.3} except the $T_1,T_2$ tensors now no longer have bulk inputs. As such, the support of different isomorphic operator flow is even harder to intersect compared to those in Theorem~\ref{prop:5.3}. By construction, all bulk qubits are on the same layer with the same type of tile ($T_1$), hence the rest follows from the above proof. 
\end{proof}

\subsubsection{Decoder and Fault tolerance}
All holographic codes admit tensor network decoders \cite{tnc,QL2} which are maximum likelihood decoders that are exactly contractible in $poly(n)$ time. For large number of bulk qubits, then it requires more efficient schemes such as parallel decoders that compute the marginal probabilities\cite{QL2,Farrelly:2020xpp}. Such decoders will only be $O(k \mathrm{poly}(n))$ and can be parallelized to further reduce the run time in $k$. However, FT decoders which account for circuit and measurement noise are very much an open problem for holographic codes which we leave for future work.

The fault tolerance of the gates discussed above is simple for single qubit gates and multi-qubit interblock transversal gates as they are clearly FT. The intrablock gates are more complicated as it depends on the connectivity of the tensor network and cleanability of each seed tensor. FT of the mesoblock gates are somewhere in between, but since its flows on a single block cannot spread more than the intrablock gates, it is sufficient to prove intrablock fault tolerance from which mesoblock fault tolerance follows.

\begin{theorem}\label{thm:finite_rate_holo}
 If the $b$ tensors joined together by the repetition code in constructing the $\{b>2, \mathcal{U}\}$ branch code can each be written as a $[[n_0,2,d\geq 3]]$ code where $C^\ell P\in \mathcal{U}$, then a finite rate holographic code constructed using the $b$-branch codes with a bulk qubit on each tensor will admit fault-tolerant intrablock, interblock and mesoblock transversal $C^\ell P$ gates.
\end{theorem}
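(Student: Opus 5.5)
Transversality, addressability and parallelizability of the gates are already given by Theorem~\ref{prop:5.3}, so only fault tolerance needs an argument. The plan is to show that a single physical fault, spread by the transversal $C^\ell P$, yields a pattern that the network corrects layer by layer. The template is the fault-tolerance argument of Lemma~\ref{lemma:comptensorcode} and the black-hole QRM discussion in Sec.~\ref{subsec:hqrm}.

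The interblock case comes first and is immediate. Each code block is a holographic code of distance at least $3$: by Lemma~\ref{lemma:isodist} the distance does not decrease under isometric tracing, and each branch seed has $d\geq 3$. A transversal $\ell$-qubit gate sends one fault to at most one fault per block, so each block corrects it separately.

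The intrablock case is the heart of the proof. The hypothesis that each of the $b$ components is an $[[n_0,2,d\geq3]]$ code is read as follows. Each branch seed, viewed with one extra leg promoted to logical, is still distance $3$. Within a branch, that extra leg is either the repetition-code leg or an inward-facing traced leg. This is exactly the shortened-code condition $d_0'\geq 3$ of Lemma~\ref{lemma:comptensorcode}, now holding inside every branch of every $T_1$ and $T_2$ node.

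I would then argue by induction from the boundary inward. A single boundary fault is mapped by the transversal intrablock $C^\ell P$ to at most $\ell$ faults. By Theorem~\ref{prop:5.3} these faults lie on $\ell$ isomorphic, non-intersecting flows. The step needing the most care is to show that the $\ell$ faults land in distinct branch seeds. For this I would use that flows originating from the same layer and tensor type are always separated by at least one empty branch, using $b>2$ and the identical leg ordering (Fig.~\ref{fig:transversal_holo}c). Each affected branch seed then sees at most one error. Because that seed remains a distance-$3$ code even when its inward leg is treated as logical, the error is corrected locally, before it is pushed inward through the traced leg. Decoding can therefore proceed one seed at a time, with no located or residual error passing to the next layer.

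Mesoblock gates spread errors within a block no more than intrablock gates do, so they follow by combining the two previous cases.

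The main obstacle is the step asserting that the spread errors are independently correctable branch by branch. Each tensor now carries a bulk qubit in addition to its traced legs, so I must check that the $[[n_0,2,3]]$ structure really covers the relevant leg assignment. For $T_2$ nodes, two inward legs sit on two distinct branches, which is fine since each branch seed then has only one extra logical leg. For the node hosting the bulk qubit, the bulk leg attaches through the repetition code and not to a specific branch seed, so I expect the point to hold there as well.

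Unlike the QRM case, where $T_2$ was only an error-detecting $[[14,2,2]]$ code, here every seed corrects one error even after shortening. No heralding or located-error argument is therefore needed, and I would use this to conclude that correctable errors propagate to correctable errors.
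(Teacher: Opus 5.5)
Your proposal is correct and follows essentially the same route as the paper. One boundary fault spreads to at most $\ell$ faults on isomorphic flows, and each lands on a single branch seed oriented so that its two logical legs are the repetition-code leg and the inward-facing leg, both at once, which is how the $[[n_0,2,d\geq 3]]$ hypothesis should be read; each such error is then corrected in one round. Your explicit argument that the $\ell$ faults land on distinct seeds via the non-crossing flows of Theorem~\ref{prop:5.3}, and your separate handling of the interblock and mesoblock cases, spell out points the paper's proof leaves implicit.
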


\begin{proof}
Recall that the holographic code of interest has the form in Fig.~\ref{fig:transversal_holo} and that each $b$-branch code is constructed with the form shown in Fig.~\ref{fig:tritensor}. We will refer to the tensors joined together by the repetition code as the branch seed (orange tensors in Fig.~\ref{fig:tritensor}). In a holographic code consisting of the $b$-branch codes (grey triangles in Figures), 
we again only have $T_1$ and $T_2$ types. For each such tensor, at most 1 leg from each branch seed faces inwards, i.e., towards the center of the hyperbolic disk by construction. 

    For any single physical error on the boundary, a logical intrablock $C^\ell P$ gate propagates it to at most $\ell$ errors. Since the operator flows are constructed such that they are isomorphic, these errors occur on the same type of tensors. When one such error occurs on either a $T_1$ or $T_2$ tensor, it corresponds to a single error acting on a branch seed code $[[n_0,2,d\geq 3]]$ that makes up the branch code. Since each branch seed has only up to 1 leg pointing inwards, as long as we orient the branch seed tensors such that one of its logical legs is connected to the repetition code and the other points towards the center, then any single error on this seed will be a correctable error as the code has distance $3$. These errors can then be corrected with one round of error correction.
\end{proof}

\begin{theorem}\label{thm:bh_holo}
    The intrablock, mesoblock, and interblock transversal $C^\ell P$-qubit gate on any $\ell$ logical qubits are FT on the generalized black hole codes built from $\{b>2, \mathcal{U}\}$-branch codes with branch seeds $[[n_0,1,d\geq 3]]$ and $C^\ell P\in \mathcal U$.
\end{theorem}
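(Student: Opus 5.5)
The argument splits into the interblock case and the intrablock case; the mesoblock case then follows, as remarked before Theorem~\ref{thm:finite_rate_holo}, since the spread on any single block is dominated by the intrablock pattern. The structure mirrors the proofs of fault tolerance for the black hole QRM and Steane-QRM codes: show that one physical fault, after the transversal gate, becomes a pattern of errors that is correctable layer by layer, using the distance of the branch seeds instead of explicit QRM properties.

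\textbf{Interblock case.} This is immediate. The gate is a tensor product of $C^\ell P^{a_i}$ acting on one qubit from each of $\ell$ distinct code blocks. A single fault therefore spreads to at most one error per block. Each block has distance at least $3$, because the branch seeds have $d\geq 3$ and Lemma~\ref{lemma:isodist} shows that isometric contraction does not lower the distance. Hence each block corrects its error independently.

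\textbf{Intrablock case.} Here I follow the proof of Theorem~\ref{thm:finite_rate_holo}.

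1. By Corollary~\ref{coro:5.1}, the $\ell$ flows $F^\alpha$ realizing the gate are isomorphic and pairwise non-intersecting. Each physical $C^\ell P$ therefore couples one boundary qubit from each of $\ell$ disjoint flows. A single fault thus becomes at most $\ell$ errors, each lying on a distinct outermost tensor of the same type ($T_1$ or $T_2$) and, more precisely, on a distinct branch seed.

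2. Each outermost $b$-branch tensor is, as a map from its inward legs to its outward legs, a partial concatenation of the repetition code with $[[n_0,1,d\geq3]]$ seeds. Unlike the finite-rate case, no extra bulk leg is needed, because the black hole $T_1$/$T_2$ tensors carry no bulk qubits.

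3. I then check how many inward legs each branch seed has. For a $T_1$ tensor, the only inward leg is the repetition-code logical leg, so each seed is an honest $[[n_0,1,3]]$ concatenation layer and corrects its single error directly. For a $T_2$ tensor, one branch seed additionally has a physical leg facing inward. That seed, viewed with two inward legs, is a shortened $[[n_0-1,2,d']]$ code with possibly $d'=2$.

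4. In the $T_2$ situation I use the heralding argument from the QRM black hole code. The shortened seed still detects the single error, which fixes its location. The error pushed inward lands on a known leg of a $T_1$ tensor in the next layer, because by construction every layer after a $T_2$ layer consists only of $T_1$ tensors.

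5. The worst case is that several of the $\ell$ heralded errors feed into the same inner $T_1$ branch seed. Because the flows are separated by at least one idle branch (the non-crossing argument in Theorem~\ref{prop:5.3}, using $b>2$), distinct flows enter distinct inner tensors, or at least distinct branch seeds. Each inner seed then receives at most two located errors, which a distance-$3$ code corrects as erasures.

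The main obstacle is step 5. I need to bound carefully how many located errors, coming both from the two inward legs of one $T_2$ tensor and from different flows, can converge on a single inner $[[n_0,1,3]]$ seed. If more than two converge, erasure correction at distance $3$ fails. I would first try to show, from the identical leg ordering within a layer and the expansion of the hyperbolic tiling, that the two inward legs of any $T_2$ attach to different branches of different inner $T_1$ tensors. That would give at most two located errors per inner seed.
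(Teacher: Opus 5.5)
You follow the paper's decomposition. Interblock is immediate. A single fault becomes at most $\ell$ errors on tensors of one type. Errors on $T_1$'s are fixed by the $[[n_0,1,d\geq 3]]$ seeds. Errors on the one seed of a $T_2$ whose physical leg points inward are pushed one layer in and fixed there.

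The proposal is not finished, though. Step 5, which you flag yourself, is the crux, and leg ordering or hyperbolic expansion is not what settles it. The missing ingredient is the construction itself:
\begin{itemize}
\item A $T_2$ is, by definition, glued onto two legs coming from two \emph{distinct, adjacent} tensor nodes of the preceding layer.
\item Those nodes are $T_1$'s, because a layer containing $T_2$'s is always followed outward by a pure-$T_1$ layer. Your step 4 invokes this fact in the wrong direction. What you need is its contrapositive: the layer just inside a $T_2$ layer contains no $T_2$'s.
\item Whatever a single error does on the special seed, it is equivalent to at most one Pauli on each of the $T_2$'s two inward legs. One goes through the repetition code to the branch-code logical leg; the other sits on the seed's inward physical leg.
\end{itemize}
So the error arrives as a single error on each of two branch seeds with $d\geq 3$, lying in two different $T_1$'s, and each corrects it. That is exactly how the paper concludes.

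Two remarks on your variant. First, your heralding step is unnecessary, and as stated it does not follow from the hypotheses. Shortening a $[[n_0,1,d\geq 3]]$ seed can reduce the distance to $1$ when a weight-$2$ stabilizer passes through the shortened leg, so the single error need not be detected. The paper's argument never uses detection.

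Second, once the construction fact is in hand, your erasure-style bookkeeping gives a clean count that the paper leaves implicit. An inner $T_1$ borders at most two $T_2$'s, each sending it at most one error, so no inner $T_1$ sees more than two errors. This covers a case the paper does not discuss: errors from two neighbouring $T_2$'s, possibly on different flows, meeting in the same $T_1$. It works provided either the shortened seeds do detect single errors, or the two legs of a $T_1$ that attach to $T_2$'s belong to different branch seeds.
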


\begin{proof}
    In the black hole code, there are no bulk qubits on any other tiles other than those next to the horizon. As a result, this corresponds to a concatenated code where all $T_1$ tensors have distance $\geq 3$. Like the previous theorem, a single physical error can be propagated to at most $\ell$ physical errors on tensors of identical types, i.e. $T_1$ or $T_2$. For $T_1$ tensors, the errors correspond to physical errors on any of the $[[n_0,1,d\geq 3]]$ branch codes, and thus can be corrected with one round of syndrome measurement. 

    For errors occuring on $T_2$ tensors, one of the branch codes must have a leg that is pointing towards the interior of the network. Errors on this branch can propagate to 2 logical errors in the next layer. However, by construction, the parent connecting to a $T_2$ tensor must be two adjacent $T_1$ tensors, hence these logical errors each propagate to a single physical error on a distinct code block with $d\geq 3$, which can then be corrected. 
\end{proof}

\subsubsection{Example: Diluted Holographic Reed-Muller code with tricolor tensors}
The holographic QRM code we examined in Sec.~\ref{subsec:hqrm} supports addressable $CZ$ gates on bulk qubits on the same layer. However, most of its other interesting logical gates like $T, CCZ$ are global. It is easy to make them addressable also, if we replace each seed tensor with a branch code defined above.

    Consider the triangle tensor in Figure~\ref{fig:tritensor}, where we concatenate a 3-qubit bit-flip repetition code with with three $15$-qubit QRM codes to produce a $[[45,1,3]]$ code. By Lemma~\ref{lemma:branchcode}, it is 2-cleanable with respect to $CS, CCZ$, and any power of $T$ on input legs that belong to two distinct branches. From Theorem~\ref{prop:5.3}, these gates are indeed addressable and transversal. This diluted holographic QRM code now supports transversal $T$ on any single qubit. In addition, it supports transversal addressable interblock, (mesoblock) and intrablock $CS$/$CCZ$ gates on any pairs/triplets of bulk qubits on tensors of the same type on the same layer.

While the interblock $C^\ell Z$ gate and $T$ gates are all fault-tolerant, the intrablock gates are not because each $[[15,1,3]]$ code is only dualizable to a $[[14,2,2]]$ code. However, we can construct a branch seed that does satisfy such a condition by performing a single trace on the physical legs of two $[[15,1,3]]$ codes. By Lemma~\ref{lemma:isodist}, this is a $[[28,2,3]]$ code. Using it as the branch code with the right orientation, then Theorem~\ref{thm:finite_rate_holo} ensures fault tolerance. Although $T$ is no longer completely transversal on this branch seed tensor, 2-cleanability of $T$ gate in the tricolor branch code they construct remains valid. 

Following Corollary \ref{coro:5.1} and Theorem~\ref{thm:bh_holo}, we also produce a 
 diluted QRM black hole code $[[n,k,d]]$ code where $k$ is given and can be adjusted by the number of horizon qubits, $n$ the number of qubits/legs on the conformal boundary, and $d\sim \exp(O(r))$ where $r$ is the radial distance between the black hole horizon and the conformal boundary. In this case, the $[[45,1,3]]$ code is sufficient to ensure fault tolerance of the all $T, CS$ and $CCZ$ gates that are inter/intra/mesoblock addressable on any (subset of) bulk qubits.

\subsection{Iterated Fractal Code}
\subsubsection{Construction}

Building on these simple lego gadgets that support addressable multi-qubit gates through GHZ tensors, we can also grow a code iteratively via a different self-similar pattern.

By Lemma ~\ref{lemma:comptensorcode}, one can connect $b$ codes to increase the number of encoded qubits without sacrificing distance. However, we still need to increase distance. An obvious way is through code concatenation. However, to build up for the iteration step,  we also want our code to remain 1-cleanable with respect to some unitary $C^\ell P$ after concatenation. This can be done by concatenating with a seed code that has this property. 

\begin{lemma}\label{lemma:concatT}
    Let $U$ be a unitary where $U^N=I$, $T'$ be  a $[[m,1,d']]$ code that is 1-cleanable with respect to $U$ 
    and let $T_b$ be an $[[n,k,d]]$ code that supports interblock, mesoblock, and intrablock transversal $U$ addressable on any subset of logical qubits whose physical representation consists of the tensor product of $U$ and $I$ on the physical qubits. Then concatenating $T_b$ by $T'$ on all physical qubits produces a $[[nm,k,dd']]$ code that preserves the transversality, addressability, parallelizability, and 1-cleanability of the $U$ gates. If $d'\geq3$, then these transversal gates are also fault tolerant.
\end{lemma}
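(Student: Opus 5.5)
The plan is to argue entirely by operator pushing through the concatenated network, treating concatenation as the special case of Proposition~\ref{prop:isometrictrace_sym} in which every physical leg of $T_b$ is traced against the logical leg of a copy of $T'$.

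\textbf{Parameters.} The parameters $[[nm,k,dd']]$ come from the standard concatenation bound. We can also read them off Lemma~\ref{lemma:isodist}: each trace is isometric on the $T'$ side, because the logical leg of an encoding isometry is maximally mixed. So no kernel is introduced, $k$ is unchanged, and the distance is at least $dd'$.

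\textbf{Transversality, addressability, parallelizability.} Take any addressable logical $U$-type gate on $T_b$: interblock, mesoblock, or intrablock. By hypothesis its physical representation on $T_b$ is $\bigotimes_i U^{r_i}$ with $r_i\in\mathbb{Z}_N$, possibly acting on $\ell$-tuples of legs across blocks. Push each factor $U^{r_i}$ through the logical leg of the $i$th copy of $T'$.
\begin{itemize}
\item Since $T'$ is 1-cleanable with respect to $U$, hence in particular for its logical leg with respect to every power $U^{r}$, there is a symmetry of $T'$ whose logical part is $U^{r_i}$ and whose physical part is a tensor product of powers of $U$.
\item This gives a transversal implementation on the physical qubits. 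In the multi-qubit case it acts on the thickened legs of $\ell$ stacked copies of $T'$.
\item Copies of $T'$ sitting under legs where $r_i=0$ receive the identity, so the support is localized and addressability is preserved.
\item By the Matching Lemma~\ref{lemma:operatormatching}, these local symmetries glue into a symmetry of the full network.
\item For intrablock gates, the $\ell$ non-intersecting flows in $T_b$ feed distinct copies of $T'$. The outgoing flows therefore stay disjoint, giving both transversality and parallelizability. For diagonal $P$-type pieces any overlaps simply compose, since they commute.
\end{itemize}

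\textbf{1-cleanability of the concatenated code.} Given logical legs together with one physical leg $j$ carrying some powers of $U$, first use the 1-cleanability of the relevant copy of $T'$ to move the operator off $j$. This may leave a residual $U^{s}$ on that copy's logical leg. We then need to absorb that residual using a symmetry of $T_b$.

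This is the step I expect to be the main obstacle. The hypothesis on $T_b$ only gives addressable logical gates, not cleanability of its physical legs. I would try to sidestep the issue by cleaning within the single copy of $T'$ containing $j$. Concretely, pick a symmetry of $T'$ that equals the needed power on leg $j$, is trivial on its logical leg, and is a product of powers of $U$ elsewhere. Because $T'$ is 1-cleanable with respect to a set containing its logical leg plus $j$, such a symmetry exists and lives entirely on that copy. If that fails, I would fall back to assuming $T_b$ itself is 1-cleanable, as it is for the branch and GHZ codes of Lemma~\ref{lemma:comptensorcode}, and compose the two cleanings.

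\textbf{Fault tolerance.} When $d'\geq 3$, a single physical fault lies in one copy of $T'$. The transversal gate spreads it to at most one fault per copy of $T'$ in each participating flow: $\ell$ copies, on distinct copies of $T'$ by the disjointness above. Each copy corrects a single error independently, so one round of inner-level error correction removes all the errors, and the gates are fault tolerant.
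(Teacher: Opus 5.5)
Your proposal follows the paper's proof: parameters from ordinary concatenation, transversality and parallelizability by pushing each $U^{r_i}$ through the logical leg of its own copy of $T'$, 1-cleanability inherited from $T'$, and fault tolerance because a single fault spreads to at most one fault in each of $\ell$ distinct $T'$ blocks. Your primary fix for the leg-$j$ residual is correct: 1-cleanability of $T'$ on its logical leg together with $j$ gives a symmetry that is trivial on that logical leg, so no cleanability of $T_b$ is needed and the fallback is unnecessary. The one slip is citing Lemma~\ref{lemma:isodist} for the parameters. That lemma concerns a single trace between physical legs and only yields distance at least $\min\{d_1,d_2\}$, so the $dd'$ bound should rest on the standard concatenation argument alone, as in the paper.
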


\begin{proof}
    The code parameters follow trivially from concatenation. The transversality and parallelizability of the $U$ gates follow from operator pushing. Finally, the cleanability is inherited from the $T'$ tensors ---  any logical operator of $T_b$ acts on its physical legs as $U$, each of which can be pushed through the logical leg of $T'$. By 1-cleanability of $T'$, each of it can be pushed to all but one physical legs of $T'$. That is, the encoding tensor of the concatenated code is $A$-cleanable where $A$ contains all logical legs and any one physical leg. 
    
    Since the gates are transversal, interblock gates spread 1 error to distinct code blocks whose errors can be corrected by the inner code with $d'\geq 3$. The original intrablock $U$ gates on $T_b$ get pushed to tensor product of $U$ gates acting on distinct inner $T'$ code blocks. Suppose $U$ acts on $\ell$ qubits, a single physical error spreads to $\ell$ distinct $T'$ blocks and are correctable. Mesoblock logicals follow from both of the above. Hence addressable $U$ gates remain fault tolerant.
\end{proof}

\begin{proposition}[Iterated Fractal Codes]
Let $U$ be a $C^\ell P$ gate with $\ell\geq 0$ such that $P^M=I$ for some integer $M$. Consider the following iterative construction:     \begin{enumerate}
    \item Start with a base code that implements transversal $U$ gates as tensor product of $U^a, a=1,\dots, M$ and $U$ is 1-cleanable
    \item Fuse together $f$ such codes using a $f$-qubit GHZ state based on Lemma~\ref{lemma:comptensorcode}.
    \item Concatenate each physical qubit by $[[m,1,d']]$ codes $T'$ in Lemma~\ref{lemma:concatT} where $U$ is 1-cleanable on $T'$.
    \item Iterate $r$ times.
\end{enumerate}

We then obtain a class of codes that support transversal $U$ gates that have asymptotic parameter $[[N, K=O(N^{\alpha}),D=O(N^{\beta})]]$ where $\alpha=\frac{\log f}{\log f+\log m}$, $\beta=\frac{\log d'}{\log f +\log m}$. Furthermore, if $\ell>0$, then $U$ is interblock transversal on any $\ell$ logical qubits from $\ell$ distinct blocks. It is also mesoblock (if $\ell \geq 3$) and intrablock transversal on any $\ell$-tuple of logical qubits. Furthermore, these gates are parallelizable if they act on disjoint subsets of the logical qubits. Finally, $U$ is interblock fault-tolerant if the distance of the base code has $d\geq 3$ and also meso/intrablock fault-tolerant if its shortening by choosing a physical leg as logical leg also has $d\geq 3$.
\end{proposition}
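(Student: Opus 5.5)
The plan is an induction on the iteration count $r$. The induction hypothesis is a package of properties that Lemma~\ref{lemma:comptensorcode} and Lemma~\ref{lemma:concatT} each take in and hand back out. Let $C_j$ be the code after $j$ iterations, with parameters $[[n_j,k_j,d_j]]$. We carry the following invariant: $C_j$ supports interblock, mesoblock and intrablock transversal $U$ on any subset of logical qubits, implemented as a tensor product of powers $U^a$ and $I$. These gates are parallelizable on disjoint supports. Finally, $U$ is 1-cleanable on the encoding tensor of $C_j$. The base case $j=0$ holds by assumption (item 1).

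For the inductive step, start with item 2. Fusing $f$ copies of $C_j$ with $|GHZ_f\rangle$ is exactly Lemma~\ref{lemma:comptensorcode}, which needs $d_j\geq 2$ and 1-cleanability. It gives a $[[(n_j-1)f,k_jf,d_j]]$ code, and on it $U$ is addressable inter/meso/intrablock on any $\ell$ logical qubits. Since the GHZ trace preserves the form of the physical implementation (products of $U^a$ on dangling legs), the output satisfies the hypotheses of Lemma~\ref{lemma:concatT}. Item 3 is that lemma with $T'=[[m,1,d']]$. It yields $[[(n_j-1)fm,\,k_jf,\,d_jd']]$ and restores 1-cleanability, inherited from $T'$, so the invariant closes. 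Parallelizability should be checked separately: the operator flows of gates on disjoint logical supports pass through disjoint branches at the GHZ step, and concatenation maps disjoint physical supports to disjoint inner blocks.

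The parameters then follow from the recursions $k_{j+1}=fk_j$, $d_{j+1}=d'd_j$, and $n_{j+1}=fm(n_j-1)\le fm\,n_j$. Hence $N=\Theta((fm)^r)$ (the $-1$ correction only changes constants once $n_0$ is large), $K=k_0f^r$ and $D\geq d_0d'^r$. Eliminating $r=\log N/\log(fm)$ gives $K=O(N^{\alpha})$ and $D=O(N^{\beta})$ with the stated exponents. One subtlety is whether the distance genuinely stays $d_j$ after a GHZ fusion when $d_j\ge 2$. Here I rely on Lemma~\ref{lemma:isodist}, as in the proof of Lemma~\ref{lemma:comptensorcode}.

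For fault tolerance, note that interblock FT at each stage follows from Lemma~\ref{lemma:concatT} once the inner code has $d'\geq3$. The statement phrases the condition on the base code instead, so I will apply the FT clause of Lemma~\ref{lemma:comptensorcode} to the base code, which covers $d\geq3$ and the shortened $d_0'\geq3$. I then argue inductively that later stages preserve it: a single fault spreads under the $\ell$-qubit transversal gate to at most $\ell$ faults, and these land in distinct branches (intrablock) or distinct blocks (interblock). The intrablock and mesoblock case is the main obstacle. The shortened-code condition must be shown to persist through concatenation, i.e.\ after concatenation the faults sit in distinct $T'$ blocks or in distinct GHZ branches whose shortened codes still correct one error. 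I would first try to prove that the shortened distance is multiplied by $d'$ under concatenation, reusing the argument of Lemma~\ref{lemma:concatT}.
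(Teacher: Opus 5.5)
Your treatment of the parameters and of transversality/addressability matches the paper's proof. You alternate Lemma~\ref{lemma:comptensorcode} and Lemma~\ref{lemma:concatT}, solve $n_{j+1}=fm(n_j-1)$, $k_{j+1}=fk_j$, $d_{j+1}=d'd_j$, and eliminate $r$. The gap is in the fault-tolerance part.

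You read Lemma~\ref{lemma:concatT} as giving only interblock fault tolerance. You therefore plan to carry the base code's shortened-distance condition through the iteration by proving that the shortened distance is multiplied by $d'$ under concatenation. That claim is false. Suppose you shorten $C_j$ ($j\geq 1$) at a physical leg $p$. Then $p$ lies in some $T'$ block. Take any stabilizer $s$ of that block with $p\in\mathrm{supp}(s)$; restricted to the other legs, it is a nontrivial logical operator on the new logical qubit (Proposition~\ref{prop:duallogical}). So the shortened distance of every $C_j$ is at most $\mathrm{wt}(s)-1$, a constant fixed by $T'$ alone. For the $[[15,1,3]]$ code every qubit lies on a weight-4 $Z$ plaquette, so this bound is $3$ however large $d_j$ grows. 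The inductive persistence you are aiming for cannot be established.

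It is also unnecessary. The FT clause of Lemma~\ref{lemma:concatT} covers intrablock and mesoblock gates as well as interblock ones:
\begin{itemize}
\item each $\ell$-qubit factor of the outer implementation acts on $\ell$ distinct outer legs;
\item after concatenation it becomes an interblock transversal gate across $\ell$ distinct $T'$ blocks, since 1-cleanability of $U$ on $T'$ makes it interblock transversal there;
\item so a single fault leaves at most one error per $T'$ block, which is correctable once $d'\geq 3$.
\end{itemize}
Every iteration ends with a concatenation, so this gives fault tolerance of all three gate types for the final code. The base-code hypotheses enter only through Lemma~\ref{lemma:comptensorcode} at the fusion stage. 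This is how the paper concludes: it invokes the FT clauses of both lemmas, with $d'\geq 3$ implicit, as in its examples where $T'$ is the $[[15,1,3]]$ code.

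A smaller point concerns parallelizability. Gates on disjoint logical supports need not have flows through disjoint branches, because two such gates can touch the same branch. What you need is that flows are disjoint inside each branch, which is part of your invariant for $C_j$. You also need 1-cleanability to keep every flow off the GHZ-traced leg.
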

\begin{proof}
    Without loss of generality, let the base code be $[[n_0,1,d_0]]$, then branching constructs a $[[(n_0-1)f,f,d_0]]$ code. By Lemma~\ref{lemma:comptensorcode}, transversality is preserved. Then concatenation produces $[[n_1=(n_0-1)fm,k_1=f,d_1=d_0d']]$ code. Then by Lemma~\ref{lemma:concatT}, we maintain transversality of the gate $U$. Iterating $r$ times, we have 
    $n_r= mf(n_{r-1}-1),~k_r=fk_{r-1}=f^{r},~d_r=d'd_{r-1}=d_0(d')^r$. Writing $n_r$ through the iterative relation, we have $$n_r=(mf)^rn_0-\sum_{i=1}^r (mf)^i <n_0 (mf)^r.$$
Rewriting the exponent for $K=k_r, D=d_r$, we have $[[N,K=O(N^{\alpha}), D=O(N^{\beta})]]$.

   Finally, by Lemma~\ref{lemma:comptensorcode} and Lemma~\ref{lemma:concatT}, the distance of the seed code ensures that the $U$ gates are fault-tolerant under repeated concatenation of tensor contractions. 
\end{proof}

Note that for large branching ratio $f\gg1$ and fixed seeds with parameter $[[m,1,d']]$, $KD = O(N^{1-\epsilon})$ where \begin{equation}
    \epsilon = \frac{\log m-\log d'}{\log(mf)}.
\end{equation}

Tensor network decoders based on weight enumerators apply for this code, where we can obtain the marginal error probability by computing the mixed coset enumerator (Sec. III G.2 of \cite{QL2}). The enumerator, and by extension, the exact error probabilities are obtained through a tensor network contraction, which is efficient as the entire code is described by a tree tensor network where contraction will take only linear time in $n$. To compute all the marginals, we need to repeat this $k$ times, and the complexity is given by $O(nk)$ but the $k$ marginals can be parallelized. 

\subsubsection{Example: fully addressable $CZ$ with partially addressable $T$, $CS$, and $CCZ$}
    For example, using the $[[15,1,3]]$ code and GHZ states as seed tensors, we have $m=15,d'=3$. Taking $b=3$ as in Figure~\ref{fig:fractal_code}, we have $\alpha =\beta=0.29$. 

\begin{figure}
    \centering
    \includegraphics[width=0.7\linewidth]{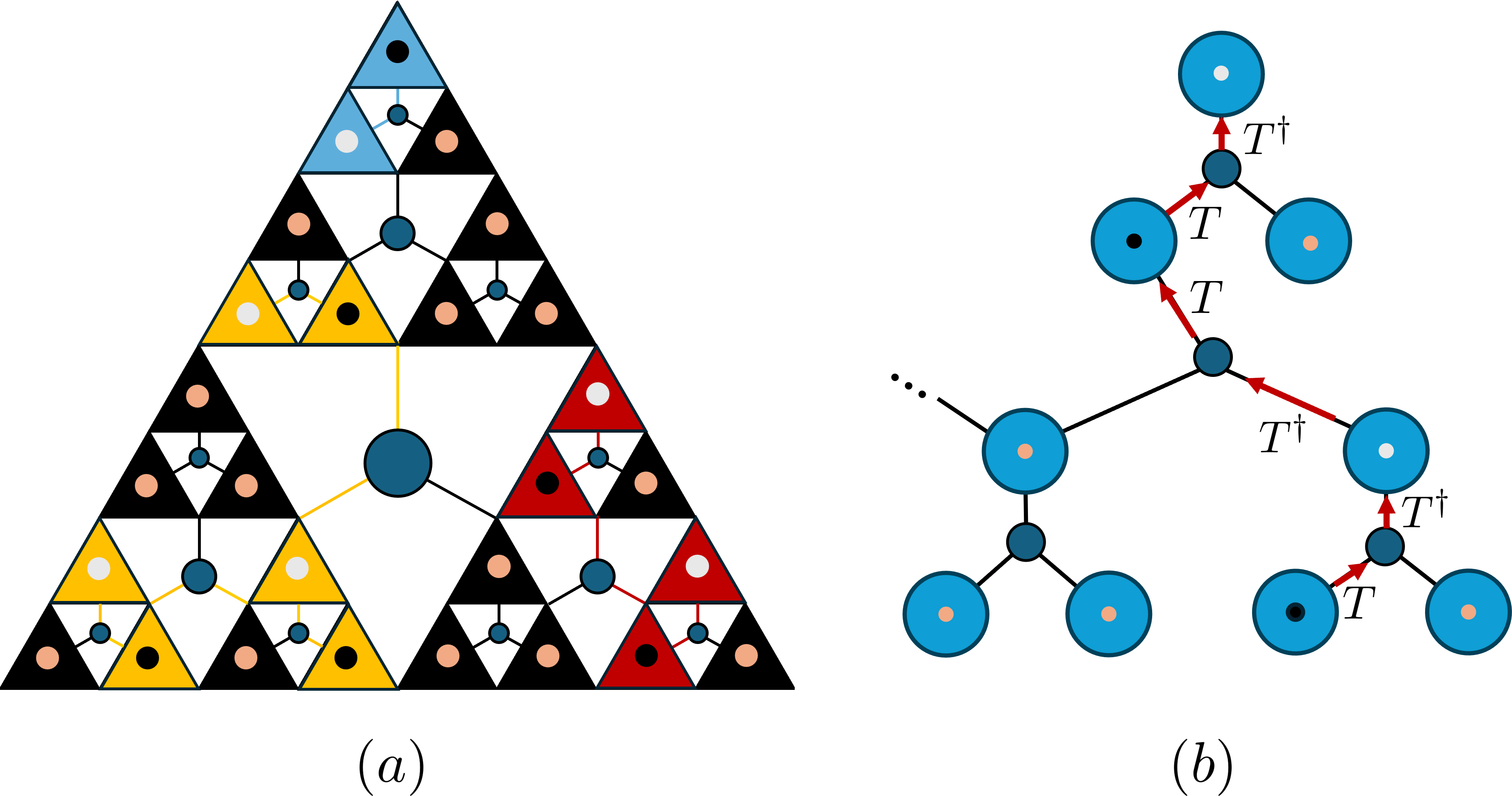}
    \caption{(a) A $b=3$ code produced with 3 iterations by conjoining GHZ states and 15-qubit QRM codes. Each black triangle represents the unit tensor at the base of the iteration where each contains a logical qubit. The dangling legs representing physical degrees of freedom and tensors from subsequent concatenations are suppressed for clarity. Logical $T$ and $T^{\dagger}$ are marked as black and white dots respectively where corresponding physical $T$ and $T^{\dagger}$ act on the suppressed legs associated with that triangle. Three different partially addressable $T$ gates are given each marked with a different color (blue, yellow, red) with their corresponding internal $T$ flows marked on the same colored edges. (b) Operator flow that produces the logical $TT^{\dagger}TT^{\dagger}$ gate on the red triangles. Physical legs and $T$s or $T^{\dagger}$s acting on them are suppressed.}
    \label{fig:fractal_code}
\end{figure}

Graphically, we can represent this codes as a fractal (a Sierpinski triangle at $b=3$) where for some finite cut off, we assign the smallest unit tensor to each triangle with edge length $\delta$, then a GHZ tensor to each white space. These tensors are glued together with one leg from each unit tensor. Then three such unit tensors are again joined together after concatenation, again glued to a GHZ tensor on a white triangle of length $2\delta$ and so forth. A logical leg is assigned to each unit triangle and as we have shown above that transversality is preserved under the iterative process. 

Since we have chosen the seed codes to be at least 1-cleanable with respect to $CZ$s, the resulting fractal code supports inter/intrablock transversal $CZ$ on any two logical qubits. However, since $T$ and $CCZ$ are bitwise transversal on the seed and are not 1-cleanable, these gates are not fully addressable as a result. However, we retain partially addressablility of logical $T$, interblock $CS$, and interblock $CCZ$ gates. This is because each unit tensor has transversal $T$, but a GHZ tensor is stabilized by $T\otimes T^{\dagger}$ (by $CS\otimes CS^\dagger$ and $CCZ\otimes CCZ$ acting on two and three blocks respectively). Hence the transversal $T$ (or $CS$ and $CCZ$) will always flow from one branch to another connected by the GHZ tensor in the middle. Depending on which branch is connected to the GHZ tensor on the coarser scale, this flow can activate $2\ell$ other logical qubits. 

Since the base seed code we use have distance 3, the interblock $CZ$ gates are also fault-tolerant. However, to ensure intrablock FT $CZ$ gates by the above Proposition, we need to build the tensor network with a base code that can be shortened to a $[[14,2, 3]]$ code. This is not possible with the QRM seed~\cite{Nezami_2022}. It is easy to build one up however, as we did previously by single tracing two 15 QRM codes into a single $[[28,2,3]]$ code. One can easily check that this tensor is still 1-cleanable with respect to $CZ$. By Lemma~\ref{lemma:branchcode}, the intrablock $CZ$ gates are also fault-tolerant if using this $28$-qubit code as seed.

\subsubsection{Example: A code with fully addressable $T, CS$ and $CCZ$}

It is also straightforward to construct codes that support other types of addressable gates by choosing the right kind of seeds. Let's first construct a $(b\geq2,\mathcal U =\{T,CS,CCZ\}$)-branch code with the punctured $[[15,1,3]]$ QRM code concatenated on the branches. By Lemma~\ref{lemma:branchcode}, we know that $\forall U\in \mathcal U$ is 1-cleanable. Hence the fractal code we have by using the branch codes as the base seeds now supports fully addressable $T, CS, CCZ$ gates in addition to the Pauli gates at the expense of slightly worse parameters. For $b=3$, we have $\alpha = \beta\approx 0.224$.

In the same way, the interblock transversal gates are fault-tolerant since the base seed codes have $d=3$. To make the intrablock $CS$, $CCZ$ gates (and mesoblock $CCZ$ gates) fault-tolerant also, we use the same trick by performing a single trace on two branch codes. This makes a $[[30b-2,2,3]]$ code whose tensor satisfies the requirements of Lemma~\ref{lemma:branchcode} where $\alpha=\beta\approx 0.197$.

To achieve better asymptotic scaling on rate and distance while ensuring fault-tolerant intrablock gates, a mix of different seed codes and more efficient conjoining scheme beyond naive tree-style contraction should be used.

\section{Discussion}
In this work, we have shown how operator pushing can be generalized to symmetries that correspond to multi-qubit gates. We also show how the localized symmetries give rise to addressable single and multi-qubit and multi-qudit logical gates under the quantum lego formalism. We have constructed local gadgets for which addressable (multi-controlled) gates can be built from existing lego blocks using generalized trace. 

We then applied this formalism to produce codes with global and addressable transversal gates. Using graphical methods and deformed trace, we can generate new families of codes that have finite rate with transversal logical $T$, $SH$, $CS$, $K_3$, or $CCZ$ gates, some of which can be used for magic state distillation. 
Using operator matching, we identified criteria for building (qudit) codes with completely transversal non-Clifford gates and asymptotically unit rate using hypercubic tensor networks. Although we cannot rule out their existence, such codes appear heavily constrained --- they can only be constructed if there exist MDS codes that support transversal non-Clifford gates.

As a proof of principle, we produced two types of codes, holographic codes and iterated fractal codes, with (multi-qubit) addressable gates that are efficiently decodable using tensor weight enumerator decoders. From operator pushing, we showed that addressable inter, meso, and intrablock transversal $C^\ell P$ gates can be easily identified using our extended formalism. While single qubit gates and interblock multiqubit gates are fault-tolerant by default, addressable intrablock transversal gates can also be made fault-tolerant provided the local seed codes can be shortened to a $k=2, d\geq 3$ code. 

For finite rate holographic codes, the transversal gates are addressable to logical qubits in the same layer in the bulk whereas for black hole codes, an $\ell$-qubit transversal gate can address bulk qubits of any subset with size $\ell$. Logical gates acting on disjoint subsets of logical qubits are also parallelizable. A novel result for holographic codes, we showed that the holographic QRM codes and the heterogeneous holographic Steane-QRM code both admit transversal addressable $CZ$ gates. In particular, in the black hole code studied by \cite{Steinberg:2025kzb}, any targeted inter/intrablock logical $CZ$ is fault-tolerant and has only depth $1$, drastically reducing its overhead. 

We also built up a class of fractal codes that admit fully addressable transversal inter/meso/intrablock $C^\ell P$ gates using a simple iterative algorithm whose fault-tolerant properties and parameters are determined by the local seed codes. 

Much future exploration is needed. The current work focuses on isometric traces where no kernel emerges in the encoding map under trace. However, the QL constructions of many interesting codes, such as qLDPC codes, do have kernels. In this case, symmetries of the Choi state need not translate into the same logical actions when that state is converted into an encoding isometry. It is important, therefore, to determine the true logical actions after the kernel is removed. On the level of understanding individual lego blocks, we still lack a general theory to identify atomic codes with localized symmetries or addressability. For instance, are there codes that support localized $CZ$ or CNOT gates with $n<14$ \cite{Koh:2026vsg}? What level of transversality or addressability are allowed given the no-go results \cite{Guyot:2025lyj,BKthm,Fu:2025lbb}? 
As for lego fusion, we also lack a systematic characterization for what generalized trace can preserve or generate localized symmetries. 
Combining the recent technique for growing sparse subsystem codes \cite{Cao:2025oep} and this work will also lead to a new method for producing codes with low fault-tolerant overhead and addressable transversal gates.

Coding theoretically, tensor network methods now can be used to complement conventional techniques in generating and characterizing code families following a graph-theoretic classification. Immediate work will include, for example, codes with  transversal non-Clifford or exotic Clifford gates like $SH$, $CS$, and $K_3$.

Another open direction is to use tensor network-based approaches to constrain the existence of codes with given parameters and logical gates. For instance, it is still unclear whether qudit MDS codes with transversal non-Clifford gates exist;  Prop.~\ref{prop:MDSconstraint} may be used in conjunction with resource-theoretic bounds to constrain the existence of gates with transversal non-Clifford gate. It is also interesting to study whether these tensor network methods can be used to produce tighter bounds on triorthogonal codes, as study of small sized codes suggest that the current bound $n\geq 2k$ is rather loose. 

There are other forms of unitary symmetries that this work has not yet studied, such as those that enable code switching and code automorphisms that correspond to fold transversal gates. Both are powerful tools for generating fault-tolerant (universal) addressable gate sets. It is key to study how these symmetries may transform under conjoining \cite{Koh:2026vsg}.

Finally, a more efficient algorithm is needed for practical implementations of Quantum Lego such as \cite{pato_2025_16761072}, which requires the efficient tracking of these additional symmetries that give rise to transversal (addressable) gates in the higher levels of Clifford hierarchy. One possible extension is through the XP stabilizer formalism ~\cite{Shen:2023xmh,Webster:2022kdn,Webster_2023}. Efficient representations for addressable multi-qubit gates and more general unitary product stabilizers are also vital. On the practical front, future extensions of machine learning-based frameworks \cite{Su_2025,Mauron:2023wnl} that incorporate the recent progress in small codes \cite{Chen_2022,Koh:2026vsg}, sparse code synthesis \cite{Cao:2025oep}, tensor enumerator \cite{TNenum,QL2,Pato:2025zqj}, and quantum anticode \cite{Cao:2025iec} may be used to design device or task-tailored codes of intermediate scales that can support exotic fault-tolerant gates.  

\section*{Acknowledgement}
We thank Victor Albert, Shayan Majidy, Balint Pato, Matt Steinberg, and Hayata Yamasaki for helpful comments and discussions.

\appendix

\section{Useful statements for code fusion}

Here we develop some helpful lemmas for how the code parameters transform when fusing lego blocks. For stabilizer tensors or states, the isometric property can also be checked through Pauli operator cleaning.

\begin{definition}
    Let $\ket{V}$ be a $n$-qudit tensor and $J \subset \{1,\dots, n\}$. An operator $P$ can be \emph{cleaned} (off of $J$) if there exists unitary symmetry $U$ of $\ket{V}$ such that $ PU|V\rangle = P'|V\rangle$ where $P'$ only has support on $J^c$.
\end{definition}

 \begin{lemma}
     Let $|V\rangle$ be a stabilizer state such that no stabilizer is supported entirely on $J$. Then the map $V:\mathcal{H}_J\rightarrow \mathcal{H}_{J^c}$ which uses $J$ as input/logical legs and $J^c$ as output physical legs of the tensor is an isometry.
 \end{lemma}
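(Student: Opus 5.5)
My plan is to reduce the statement to Lemma~\ref{lemma:isometry}. That is, I will show that the reduced state $\rho_J=\Tr_{J^c}|V\rangle\langle V|$ is proportional to the identity on $\mathcal{H}_J$. For a stabilizer state, the reduced density matrix on a subsystem is determined by the stabilizers supported on that subsystem. I will proceed in three steps.

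\textbf{Step 1.} Write $|V\rangle\langle V| = q^{-n}\sum_{S\in\mathcal{S}} S$, where $\mathcal{S}$ is the stabilizer group of the $n$-qudit state, with $|\mathcal{S}|=q^n$. This is the standard expansion of the projector onto a stabilizer state as the uniform average over its stabilizer group (written with generalized Paulis for qudits).

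\textbf{Step 2.} Take the partial trace over $J^c$ term by term. Each $S$ factorizes as $S_J\otimes S_{J^c}$ up to phase, and $\Tr S_{J^c}=0$ unless $S_{J^c}$ is proportional to the identity, since non-identity generalized Paulis are traceless. Hence
\[
\rho_J = q^{-n} q^{|J^c|}\sum_{S\in\mathcal{S}_J} S_J ,
\]
where $\mathcal{S}_J$ is the subgroup of stabilizers whose support lies entirely in $J$.

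\textbf{Step 3.} Apply the hypothesis that no nontrivial stabilizer is supported entirely on $J$. Then $\mathcal{S}_J=\{I\}$, so $\rho_J = q^{-|J|} I_J$ is maximally mixed, and Lemma~\ref{lemma:isometry} gives that the dual map $V:\mathcal{H}_J\to\mathcal{H}_{J^c}$ is an isometry.

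The main point of care is the phrase ``no stabilizer'': it must mean no nontrivial stabilizer, since the identity is always supported on $J$. There is also a phase subtlety. A stabilizer acting on $J$ as the identity up to a scalar phase must have that phase equal to $1$, because $-I$ (or $\omega I$) can never stabilize a nonzero state. So no stray scalar terms appear in the sum.

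An alternative route avoids density matrices and uses cleaning directly. Every Pauli on $J$ can be matched on $J^c$ so that the product lies in $\mathcal{S}$, which is counting $|\mathcal{S}|=q^n$ versus $q^{2|J|}$ Paulis on $J$. This yields the condition $V^\dagger P V\propto P$ on $\mathcal{H}_J$, hence $V^\dagger V\propto I$. I would present the density-matrix argument because it is shorter.
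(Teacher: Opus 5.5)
Your proof is correct and follows the paper's route: both reduce the claim to Lemma~\ref{lemma:isometry} by showing $\Tr_{J^c}|V\rangle\langle V|\propto I$. The only difference is that the paper cites \cite{Fattal:2004frh} for the maximal entanglement of $J$ with $J^c$, whereas you derive it directly from the expansion $|V\rangle\langle V|=q^{-n}\sum_{S\in\mathcal{S}}S$, which is exactly the content of that citation. Your cleaning side route is not needed, but its surjectivity step (restricting $\mathcal{S}$ onto the Pauli group on $J$ modulo phases) does not follow from counting alone; it follows because $\mathcal{S}$ is its own centralizer, so the Paulis on $J$ commuting with every restriction $S_J$ are exactly $\mathcal{S}_J$.
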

\begin{proof}
    By \cite{Fattal:2004frh}, we know that $J$ must be maximally entangled with $J^c$. Hence by Lemma~\ref{lemma:isometry} $V$ is an isometry.
\end{proof}

\begin{lemma}\label{lemma:isodist}
    Let $C_1, C_2$ be $[[n_i,k_i,d_i]],~i=1,2$, stabilizer codes with parameters where each $d_i\geq 2$. Then a single trace of these two codes on any two legs $a$ and $b$ produces a code $C=C_1\wedge_{a,b} C_2$ with parameter $[[n'=n_1+n_2-2, k'=k_1+k_2, d'\geq \min\{d_1,d_2\}]]$. Furthermore, the $k_i$ logical qubits that only act non-trivially on the pre-trace block $C_i$ have post-trace distance $d'_i\geq d_i$ and post-trace logical operators that act on both $C, C'$ must have $d_{12}\geq d_1+d_2-2$.
\end{lemma}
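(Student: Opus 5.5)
The plan is to work with stabilizer Choi states and argue entirely through Pauli operator cleaning, using Lemma~\ref{lemma:isometry} for the isometry property and the cleaning lemma of \cite{bravyi2009no} for the distance bounds. Let $|V_1\rangle$ and $|V_2\rangle$ be the Choi states of the two encoding maps. Leg $a$ is a physical leg of $C_1$ and leg $b$ is a physical leg of $C_2$, and the trace is the Bell fusion $\langle\Phi^+|_{ab}(|V_1\rangle\otimes|V_2\rangle)$.

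\textbf{Counting parameters.} The qubit count is immediate. For the logical count, I will show that the logical legs $L_1\cup L_2$ of the contracted tensor are maximally mixed. Since $d_1\ge 2$, the cleaning lemma says the single physical leg $a$ is correctable for $C_1$. Equivalently, the subsystem $L_1\cup\{a\}$ of $|V_1\rangle$ is maximally mixed, so no stabilizer of $|V_1\rangle$ is supported on $L_1\cup\{a\}$. In tensor language, $V_1$ is an isometry from $L_1\cup\{a\}$ to the remaining $n_1-1$ legs, and likewise for $V_2$. Contracting $a$ with $b$ then composes two isometries, one of which feeds the other through leg $b$. Concretely, treat $V_2$ as an isometry $L_2\cup\{b\}\to$ (rest), and feed into leg $b$ the output leg $a$ of the map $V_1^{(L_1\to \text{rest}_1\cup a)}$. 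The result is an isometry from $L_1\cup L_2$, so Lemma~\ref{lemma:isometry} gives $k'=k_1+k_2$ with no kernel.

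\textbf{Distance bounds.} I will use operator matching (Lemma~\ref{lemma:operatormatching}): every logical Pauli of $C$ restricts to symmetries of $|V_1\rangle$ and $|V_2\rangle$ that carry matching Paulis $Q$ and $Q^*$ on legs $a$ and $b$.

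Take a nontrivial logical $\bar P$ of $C$ whose action is supported only on $L_1$. Its $C_2$ factor has trivial logical action, so on $C_2$ it is a stabilizer carrying $Q^*$ on $b$. There are two cases.
\begin{itemize}
\item If $Q=I$, the $C_1$ factor is a nontrivial logical of $C_1$ supported on physical legs other than $a$, so its weight is at least $d_1$.
\item If $Q\ne I$, the $C_1$ factor is either a nontrivial logical (weight at least $d_1$ counting $a$) or it is a stabilizer. A stabilizer of $C_1$ would make the logical action trivial, which is a contradiction. So the restriction to $C_1$ is a nontrivial logical of $C_1$ whose support includes $a$.
\end{itemize}
The second case is where a careful step is needed. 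Removing $a$ drops one unit of weight, but the $C_2$ stabilizer factor with nontrivial $Q^*$ on $b$ must have weight at least $d_2\ge 2$, because the cleaning lemma rules out a weight-one restriction. That factor therefore contributes at least one physical qubit to compensate. The net bound is $d_1'\ge (d_1-1)+1=d_1$. The symmetric argument handles $L_2$.

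For a logical acting nontrivially on both sides, both restrictions are nontrivial logicals of their respective codes. This gives weight at least $(d_1-1)+(d_2-1)$ after removing legs $a$ and $b$, which is $d_{12}\ge d_1+d_2-2$. Combining the cases gives $d'\ge\min\{d_1,d_2,d_1+d_2-2\}=\min\{d_1,d_2\}$, since each $d_i\ge2$.

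\textbf{Main obstacle.} The step I expect to be delicate is the compensation argument in the $Q\ne I$ case for a one-sided logical. I must verify that a stabilizer of $|V_2\rangle$ with nontrivial support on $b$ and trivial support on $L_2$ has at least one additional physical qubit of support. This follows because $\{b\}$ alone is maximally mixed given $d_2\ge2$, so no stabilizer can be supported only on $b$. I will check that this remains true when $L_2$ is included, which is the same isometry statement used in the counting step.
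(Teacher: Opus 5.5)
Your proposal is essentially the paper's proof. Composing isometries through leg $b$ is the Choi-state phrasing of the paper's observation that every logical can be cleaned off the traced leg. Your $Q=I$ versus $Q\neq I$ case split, with the stabilizer factor on the other block supplying the unit of weight lost at the traced leg, together with the $(d_1-1)+(d_2-1)$ count, is exactly the paper's distance argument.

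The one caveat is shared with the paper's proof (``all stabilizers in a $d>1$ code must have weight $\geq 2$''): $d_i\geq 2$ does not by itself exclude a stabilizer supported only on the traced leg, so ``$\{b\}$ is maximally mixed'' is really an extra hypothesis the lemma needs. For example, padding $[[4,2,2]]$ with a frozen qubit $a$ stabilized by $Z_a$ and tracing $a$ into another $[[4,2,2]]$ produces a weight-one logical. Also, the compensating stabilizer is only guaranteed weight $\geq 2$, not $\geq d_2$, though $\geq 2$ is all you actually use.
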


\begin{proof}
    Without loss of generality, consider any logical Pauli operator on $C_1$. Since we have $d_1\geq 2$, any weight one error on $a$ can be cleaned as there is no logical operator supported on it. In particular, all logical operators can be pushed to the complement of $a$ because there are no weight 1 stabilizers in the code. The same argument holds for $C_2$. Therefore, after trace, all $k_1+k_2$ logical Pauli operators can be pushed to the dangling legs without ever passing through the fused $a$ and $b$. Hence $k'=k_1+k_2$ and fusion reduces the number of physical legs by two, hence $n'=n_1+n_2-2$.

    Finally, for any logical operator supported on $C_1$ that does flow from $a$ to $b$, can be cleaned off of $b$ and pushes to the rest of the dangling legs on $C_2$. Since this amounts to multiplying a Pauli error on $b$ with a stabilizer, and all stabilizers in a $d>1$ code must have weight $\geq 2$, hence this operator must have weight at least $d_1$. The same argument holds and logical operators that act non-trivially on $C_2$ will have distance at least $d_2$. For logical operators that act on both blocks, any matching operator must have weight at least $d_{12}\geq d_1+d_2-2$. Since both $d_1, d_2\geq 2$, we must have $d_{12}$ must be lower bounded by $d_1$ and $d_2$. 

\end{proof}

A simpler version of the above can be found in \cite{tnc} but it does not discuss how the code distance transforms. A more general form of the above can also be made, as long as the fusion is performed on qubits where located errors are correctable.

\begin{proposition}
\label{prop:gen_fusion}
    Consider two stabilizer codes $C$ and $C'$ parameters $[[n,k,d]]$ and $[[n',k',d']]$  with stabilizer and normalizers $S,\mathcal{N}(S)$ and $S', \mathcal{N}(S')$ respectively. Let $A\subset [n]$ and $A'\subset [n']$ be two subsets of physical legs where $|A|=|A'|$ and a fusion of $A$ and $A'$ such that each qubit in $A$ is identify with one in $A'$ with a bijective map $\phi:A\rightarrow A'$.
    
    Suppose $A,A'$ are correctable erasures and that $$m\equiv \max_{L_C\in\mathcal{N}(S), L_{C'}\in\mathcal{N}(S')} |\phi(supp(L_C))\cap supp(L_C')|,$$ then the fused code has $[[n+n'-2|A|,k+k',d'']]$ where $\min\{d,d'\}\geq d''\geq \min\{d, d'\}-m$ where
    $$
     d''_C\geq d-m,\quad
     d''_{C'}\geq d'-m,\quad
     d''_{CC'}\geq d+d'-2m.$$
    Here $d''_C$ (resp. $d''_{C'}$ and $d''_{CC'}$) is the distance of the post-trace logical operators only acting non-trivially on the logical legs of $C$ (resp. $C'$ and nontrivially split between $C$ and $C'$). 
\end{proposition}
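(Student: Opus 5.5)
The plan generalizes the argument of Lemma~\ref{lemma:isodist}, replacing the single traced leg by the sets $A,A'$ and the weight-one cleaning step by erasure correctability. There are three steps: count logical qudits, bound the distance from below by splitting logicals by support, and bound it from above.

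\textbf{Counting $k+k'$.} Since $A$ is a correctable erasure for $C$, the cleaning lemma \cite{bravyi2009no} says every element of $\mathcal{N}(S)$ can be multiplied by a stabilizer so that its support avoids $A$. The same holds for $C'$ and $A'$. Hence every logical Pauli of either code can be pushed to the dangling legs without passing through the fused edges. Then Lemma~\ref{lemma:operatormatching} and Proposition~\ref{prop:isometrictrace_sym} make these logicals of the fused tensor. Correctability of $A$ also means $A$ is maximally entangled with its complement in the Choi state. So the fusion is $|A|$-isometric, the encoding map has no kernel, and the logicals stay independent. This gives $[[n+n'-2|A|,\,k+k']]$.

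\textbf{Lower bound on $d''$.} Take a nontrivial post-trace logical $\bar L$ and pull it back by operator matching. This gives a pair $L_C\otimes L_{C'}$ of pre-trace operators whose restrictions to $A$ and $A'$ match under $\phi$ (up to conjugation). Its post-trace weight is
\[
|\mathrm{supp}(L_C)\setminus A|+|\mathrm{supp}(L_{C'})\setminus A'| .
\]
I split into the three classes in the statement.
\begin{itemize}
\item If $\bar L$ acts nontrivially only on the logical legs of $C$, then $L_C$ is a nontrivial element of $\mathcal{N}(S)\setminus S$, so $|\mathrm{supp}(L_C)|\ge d$. The matching forces $L_{C'}\in\mathcal{N}(S')$.
\item The overlap on $A$ is at most $|\phi(\mathrm{supp}(L_C))\cap\mathrm{supp}(L_{C'})|\le m$. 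I read $m$ as the maximal overlap through the matched legs, since only those legs are deleted.
\item This gives $d''_C\ge d-m$, and symmetrically $d''_{C'}\ge d'-m$.
\item If the action is split between both codes, both pieces are nontrivial logicals, and subtracting at most $m$ from each gives $d''_{CC'}\ge d+d'-2m$.
\end{itemize}
Taking the minimum over the three classes gives $d''\ge\min\{d,d'\}-m$.

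\textbf{Upper bound $d''\le\min\{d,d'\}$.} By the first step, a minimum-weight logical of $C$ (say) can be cleaned off $A$ while keeping its weight at most $d$? Not immediately, since cleaning can increase weight. Instead I would take a minimum-weight logical $L_C$ of weight $d$. Any part of it on $A$ is matched by pushing through $A'$ into $C'$ via a stabilizer of $C'$, which is possible since $A'$ is correctable. I would then argue the resulting weight does not exceed the original. If the push might add weight, I expect to rely on the fact that the operator restricted to the dangling legs of $C$ alone still represents the same logical class. I would justify this by showing that the $C'$ side acts as a stabilizer-equivalent identity on the fused tensor.

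\textbf{Main obstacle.} I expect the trouble to be the bookkeeping of the pull-back in the second step. The matched pair on $A,A'$ must be realizable simultaneously by a logical of $C$ and a normalizer element of $C'$. A post-trace logical may also be represented with nontrivial action on $A'$ coming from a stabilizer of $C'$ rather than a logical. I would handle this by noting that $S'\subset\mathcal{N}(S')$, so the definition of $m$ already covers it. For the upper bound I would mainly rely on the inequality holding trivially when a minimum-weight logical avoids $A$, and treat the general case via the correctability of $A$.
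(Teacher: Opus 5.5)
Your lower-bound step is the paper's argument made explicit, and it is fine: matching forces $\phi(\mathrm{supp}(L_C)\cap A)=\mathrm{supp}(L_{C'})\cap A'$, so each side loses at most $m$. The genuine gap is your third step, and it cannot be closed, because under the stated hypotheses $d''\le\min\{d,d'\}$ is false.

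Here is a counterexample. Let $Q$ be the $[[5,1,3]]$ code. Let $C=C'$ be the $[[21,1]]$ code obtained by concatenating qubits $2,\dots,5$ of $Q$ with copies of $Q$, leaving qubit $1$ bare. Fuse the two bare qubits, so $A=\{1\}$ and $A'=\{1'\}$; both are correctable and $m=1$.
\begin{itemize}
\item A logical of $C$ with outer image $p\in\mathcal{N}(S_Q)\setminus S_Q$ has weight at least $3\,|\mathrm{supp}(p)\setminus\{1\}|$, plus $1$ if $p_1\neq I$.
\item $Q$ has weight-$3$ logicals both through qubit $1$ and avoiding it (e.g.\ on $\{3,4,5\}$). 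Hence $d=7$, every weight-$7$ logical contains qubit $1$, and logicals avoiding qubit $1$ have weight $\geq 9$.
\item Every stabilizer of $C$ touching qubit $1$ has weight $\geq 10$, since all nontrivial stabilizers of $Q$ have weight $4$.
\item A fused logical either avoids the fused pair, with weight $\geq 9$ (attained), or it comes from a matched pair $L_C\otimes L_{C'}$ with at least one factor a nontrivial logical. In that case its weight is $(|L_C|-1)+(|L_{C'}|-1)\geq 12$.
\end{itemize}
So the fused $[[40,2]]$ code has $d''=9>7=\min\{d,d'\}$.

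Your proposed mechanism fails for exactly this reason. Keeping only the restriction of $L_C$ to the dangling legs of $C$ amounts to pairing it with the identity on $A'$. That violates the matching condition whenever $L_C|_A\neq I$, so it is not a symmetry of the fused tensor. The paper's proof offers only the sentence that logicals not flowing through $A,A'$ keep their weight. That gives the upper bound only under an extra hypothesis: a minimum-weight logical of the code with the smaller distance can be chosen with support off the fused legs. You would have to add that hypothesis.

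Your counting step also rests on a false claim. Correctability of $A$ means $A$ decouples from the logical reference ($\rho_{RA}=\rho_R\otimes\rho_A$), not that $\rho_A$ is maximally mixed. Any stabilizer supported inside $A$ (e.g.\ $Z_1Z_2$ in Shor's code with $A=\{1,2\}$) breaks maximal mixedness while $A$ stays correctable. This is the correctable-versus-detectable (PME) distinction the paper draws in the remark after the CSS corollary. So you cannot obtain the hypothesis of Proposition~\ref{prop:isometrictrace_sym} this way.

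Use instead what you had already shown, as the paper does. Every logical Pauli of either code has a representative trivial on the fused legs, so by Lemma~\ref{lemma:operatormatching} it survives as a symmetry of the fused tensor supported only on logical and dangling legs. For any nontrivial Pauli $P$ on the logical legs, some product of these symmetries acts on the logical legs by a Pauli anticommuting with $P$, and acts elsewhere only on disjoint legs. This forces $\langle P\rangle=0$, so the logical legs are maximally mixed and Lemma~\ref{lemma:isometry} gives the isometry.

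This still tacitly requires the contracted tensor to be nonzero, which correctability alone does not guarantee. For example, if $C$ has stabilizer $Z_aZ_b$ with $a,b\in A$ and $C'$ has $-Z_{\phi(a)}Z_{\phi(b)}$, the contraction vanishes. The paper's proof makes the same tacit assumption.
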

\begin{proof}
    This follows directly from operator pushing, where the normalizers of the new code must arise from tracing together normalizers of the individual codes. Since $A,A'$ are correctable erasures, such that $\Pi_C O_A\Pi_C \propto \bar{I}_C$ for all $O_A$ supported only on $A$ (and $\Pi_C' O_A'\Pi_C' \propto \bar{I}_{C'}$ for all $O_{A'}$ supported only on $A'$), it must follow that no logical operators can be completely supported on $A$ (or $A'$). By the cleaning lemma (or complementary recovery of stabilizer codes), any Pauli operator acting on such regions can be pushed to their respective complements or annihilated by the logical identity operators.\footnote{Note that this need not imply $|A|<d$ (or $|A'|<d'$) because the intersecting support of logical operators and $A$ (or $A'$) need not coincide with $A$ (or $A'$).}

    Now consider operator pushing. Since $A$, $A'$ are correctable erasures, one can always push the logical operators from the two different code blocks out to the boundary legs without ever going through $A$ and $A'$. Hence $k''=k+k'$ as Pauli operators acting on those logical degrees of freedom can always be acted upon independently. It is clear that the number of post-trace physical legs if $n+n'-|A|-|A'|=n+n'-2|A|$.

    The logical operators that do not flow through $A$ and $A'$ do not have their weights changed. For those that do, the logical operators acting only on the pre-traced code block $C$ can have their weight reduced by at most $m$, hence retaining a weight lowered bounded by $d-m$. The same holds for the operators from $C'$ and they are lower bounded by $d'-m$. Note that these lower bounds are generally not tight, because the stabilizers used to clean them typically have support outside of $A$ and $A'$ and form only a subset of the normalizers.  In the same way, logical operators from both $C$ and $C'$ flowing through $A$ and $A'$ can have their weight reduced by at at most $2m$, obtaining a minimal weight lower bound $d+d'-2m$. 
\end{proof}

For CSS codes, it suffices to check this for the $X$ and $Z$ operators separately.
\begin{corollary}
    If fusing two CSS codes $[[n,k,d_X/d_Z]]$ and $[[n',k',d'_X/d'_Z]]$ on correctable erasures $A, A'$ where $L_X, L_Z\in\mathcal{N}(S)$ and $L'_X, L'_Z\in\mathcal{N}(S')$ are $X$-type and $Z$-type normalizers of $C$ and $C'$ respectively such that 
    
    \begin{align*}
        m_X\equiv \max_{L_X, L'_{X}} |\phi(supp(L_X))\cap supp(L'_X)|\quad\mathrm{and}\quad m_Z\equiv \max_{L_Z, L'_{Z}} |\phi(supp(L_Z))\cap supp(L'_Z)|   
    \end{align*}

    then the fused code has parameter $[[n''=n+n'-2|A|,k''=k+k',d''_X/d''_Z]]$ where $\min\{d_X,d'_X\}\geq d''_X\geq \min\{d'_X-m_X,d_X-m_X\}$ and $\min\{d_Z,d'_Z\}\geq d''_Z\geq \min\{d'_Z-m_Z,d_Z-m_Z\}$.
\end{corollary}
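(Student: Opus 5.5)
The plan is to reduce the CSS statement to Proposition~\ref{prop:gen_fusion}, treating the $X$-type and $Z$-type sectors separately. The key structural fact is that a fusion of two CSS codes is again CSS (as recalled from \cite{QL} after Corollary~\ref{coro:completetransversal}). Concretely, operator pushing of an $X$-type Pauli through an undeformed Bell edge $|\Phi^+\rangle$ yields an $X$-type Pauli ($X\otimes X$ stabilizes $|\Phi^+\rangle$), and likewise for $Z$. Consequently every post-trace $X$-type (resp.\ $Z$-type) normalizer element arises by matching an $X$-type (resp.\ $Z$-type) normalizer element of $C$ with an $X$-type (resp.\ $Z$-type) one of $C'$ on $A$ and $A'$. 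The plan is therefore to rerun the proof of Proposition~\ref{prop:gen_fusion} sector by sector.

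\textbf{Parameters $n''$ and $k''$.} The count $n''=n+n'-2|A|$ is immediate. For $k''$, because $A$ and $A'$ are correctable erasures, the cleaning argument of Proposition~\ref{prop:gen_fusion} pushes every logical $\bar X_i,\bar Z_i$ of either block off of $A$ (resp.\ $A'$) using stabilizers. In a CSS code this cleaning can be done within type: an $X$-type logical can be cleaned by $X$-type stabilizers, since the $X$ and $Z$ parts of the cleaning are independent. Hence $k''=k+k'$.

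\textbf{Lower bound on $d''_X$ (and symmetrically $d''_Z$).} Take any nontrivial $X$-type post-trace logical. Restricting it and its matching data on the traced legs gives $X$-type normalizer elements $L_X$ of $C$ and $L'_X$ of $C'$. At least one of these is nontrivial, and the two agree on the fused legs. The post-trace operator then has weight
\[
|supp(L_X)|+|supp(L'_X)|-|\phi(supp(L_X))\cap supp(L'_X)|-|supp(L_X)\cap A\setminus \phi^{-1}(supp(L'_X))|-\dots,
\]
and matching forces the supports on the fused legs to coincide, so the lost weight is at most $2$ times the overlap. Arguing as in the Proposition, the case split is:
\begin{itemize}
\item if only the $C$ part is a nontrivial logical, the weight is at least $d_X-m_X$;
\item symmetrically, if only the $C'$ part is nontrivial, it is at least $d'_X-m_X$;
\item if both parts are nontrivial, it is at least $d_X+d'_X-2m_X\ge\min\{d_X,d'_X\}-m_X$ when $d_X,d'_X\ge m_X$.
\end{itemize}
Taking the minimum over the cases gives $d''_X\ge\min\{d_X-m_X,d'_X-m_X\}$.

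\textbf{Upper bound.} For $\min\{d_X,d'_X\}\ge d''_X$, take a minimum-weight $X$ logical of $C$ cleaned off $A$. It survives the trace unchanged and is not a post-trace stabilizer, because logical independence is preserved as above. So $d''_X\le d_X$, and similarly $d''_X\le d'_X$.

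\textbf{Expected obstacle.} The delicate step is justifying that restricting to same-type normalizers is legitimate. A mixed $Y$-containing normalizer might in principle achieve a larger overlap and thus be relevant in Proposition~\ref{prop:gen_fusion}, but it is not needed here. I would argue this via the CSS block structure of the conjoined check matrix $\wedge_{A,A'}$: its $X$ and $Z$ blocks conjoin independently, so the $X$-distance depends only on $X$-type data. This step is where care about degenerate cleaning would be required.
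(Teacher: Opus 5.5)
Your overall route is the paper's own: the paper proves this corollary only by saying the argument of Proposition~\ref{prop:gen_fusion} goes through with $X$- and $Z$-flows handled separately. Most of what you write is sound:
\begin{itemize}
\item within-type cleaning gives $k''$;
\item the reduction to matched same-type normalizers is valid for CSS codes, because the $X$- and $Z$-parts of a matched pair must match separately on $A,A'$, and a $Z$-part supported on the traced legs carries no logical content since $A$ is correctable;
\item the weight count $|L_X|+|L'_X|-2|\mathrm{supp}(L_X)\cap A|$, with your case split, gives the lower bound. Your side condition only needs $\max\{d_X,d'_X\}\ge m_X$; otherwise the claimed bound is negative and vacuous.
\end{itemize}

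The genuine gap is the upper bound. Cleaning a minimum-weight $X$-logical off $A$ means multiplying it by a stabilizer, and this can increase its weight. So your argument proves only $d''_X\le\min\{\delta_X,\delta'_X\}$, where $\delta_X$ is the least weight of an $X$-logical of $C$ supported on the complement of $A$. It does not prove $d''_X\le\min\{d_X,d'_X\}$.

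This cannot be patched, because the stated inequality is false. Take $C=C'$ to be the $[[6,1]]$ CSS code with stabilizers $X_1X_2X_3X_4$, $Z_2Z_3$, $Z_3Z_4$, $Z_5Z_6$, $Z_1Z_2Z_5$, and fuse qubit $1$ with qubit $1'$.
\begin{itemize}
\item $A=\{1\}$ is a correctable erasure: every Pauli on it anticommutes with $X_1X_2X_3X_4$ or with $Z_1Z_2Z_5$.
\item The nontrivial $X$-logicals are $X_1X_5X_6$ and $X_2X_3X_4X_5X_6$. Hence $d_X=3$, but the only weight-$3$ representative sits on the traced qubit.
\item The fused $[[10,2]]$ code has stabilizers $X_2X_3X_4X'_2X'_3X'_4$; $Z_2Z_3$, $Z_3Z_4$, $Z_5Z_6$ and their primed copies; and $Z_2Z_5Z'_2Z'_5$.
\item Its $X$-type normalizer is $3$-dimensional. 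Its nontrivial logical elements have weights $4$ (namely $X_5X_6X'_5X'_6$, from matching $X_1X_5X_6$ with itself), $5$ and $10$.
\end{itemize}
So $d''_X=4>\min\{d_X,d'_X\}=3$.

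The paper's justification has the same hole. The proof of Proposition~\ref{prop:gen_fusion} only notes that logicals not flowing through $A,A'$ keep their weight. That yields the upper bound only under an extra hypothesis: some minimum-weight $X$-logical (resp.\ $Z$-logical) of $C$ or $C'$ must be choosable off the traced legs.
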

Since the proof follows the same argument as above except we differentiate the $X$ and $Z$ operator flows, it is left as an exercise for the reader. Note that it suffices to consider $X$ and $Z$ flows separately because $Y$ flows are simply the combination of these two.

\begin{example}
    Consider tracing $8$ legs associated with the $8$ physical qubits on the vertices of the $3$-cube on the $[[15,1,3]]$ Quantum Reed-M\"uller code code. Those $8$ qubits are correctable erasures because the minimal weight logical $Z$ and $X$ are be supported on a single face. Because the maximal overlap of the normalizers on the traced regions are $m_X=4$ for logical-$X$ and $m_Z=2$ for logical-$Z$. The bounds from above has $d''_{C,X}=d''_{C',X}\geq 3$ while $d''_{C,Z}=d''_{C',Z}\geq 1$ whereas $d''_{CC',X} \geq 6$ and $d''_{CC',Z}\geq 2$. In reality, this produces a $[[14,2,7/2]]$ code where $d_X=7, d_Z=2$. One can also arrive at the same code by dualizing one of the physical qubits on the corner of the $[[15,1,3]]$ tetrahedron into a logical qubit. The resulting code have physical qubits that can be arranged on a bilayer where there are three weight-8 $X$-type stabilizers acting on the 3-cubes while the weight-4 $Z$-stabilizers act on the the faces of the 3-cubes (Fig.~\ref{fig:1422_CZ}). It is clear that the lower bound here is not tight because we did not include the support of the stabilizers that do not overlap with the regions that are traced. 
\end{example}

\begin{remark}
    Note that requiring a set of qubits $A$ to be a correctable erasure is a weaker condition than requiring all errors on $A$ to be detectable, i.e., have non-trivial syndrome means no Pauli operators there can live in the normalizer. The latter also means that all such Pauli operators can be cleaned (or pushed) using stabilizer multiplication such that the resulting representation on $A^c$ is non-trivial. Therefore, the tensor when used as a map from $A$ to $A^c$ is an isometry for any fixed bulk/logical input in $[k]$. Such an encoding tensor for $C$ is partially maximally entangled (PME) \footnote{Not to be confused with planar maximally entangled, which refers to having all $\floor*{n/2}$ on adjacent legs be maximally entangled when the tensor and legs are arrange to lay on a plane.}, that is, the qubits that correspond to $A$ in the Choi state is maximally entangled with the rest of the system.
\end{remark}

    The statements we examined so far do not require the encoding tensor to be PME. This can be seen from the above example that traces two QRM codes on 8 legs. If we know that one of the tensors is PME, then one only needs one tensor to be PME to ensure that the total number of logical operators add under fusion. That is, there is no kernel in the encoding map after trace. Such operations have been used widely in holographic code constructions where the tensor contractions are isometric. In this case, a refined version of Theorem 1 in \cite{tnc} can be proved below, which also follows as a Corollary of Prop.~\ref{prop:gen_fusion}.

\begin{theorem}\label{thm:tnc}
    Let $C$ and $C'$ be stabilizer codes with parameters $[[n,k,d]]$ and $[[n',k',d']]$ respectively, and consider a fusion of $A\subset \{1,\dots,n\}$ and $A'\subset\{1,\dots,n'\}$ with $|A|=|A'|$. If all Pauli errors on one of $A$ and $A'$ are detectable, then the resulting code has parameter $[[n+n'-2|A|, k+k']]$. 
\end{theorem}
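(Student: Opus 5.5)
The plan is to count the logical qubits of the fused code by a Choi-state/entanglement argument, with Pauli operator pushing serving as a cross-check. Assume without loss of generality that every Pauli error on $A$ is detectable for $C$. Equivalently, no element of $\mathcal{N}(S)$ other than the identity is supported entirely on $A$. Work with the Choi state $\ket{V_C}$ on $k+n$ legs. In that state, the stabilizers are exactly the lifts of $S$ together with the paired logical operators $\bar{L}^*\otimes L$. By the preceding remark, the detectability hypothesis says that no nontrivial stabilizer of $\ket{V_C}$ is supported on $A\cup[k]$, where $[k]$ denotes the logical legs. The lemma above, which characterizes isometries of stabilizer states, then shows that $\ket{V_C}$ is an isometry from $\mathcal{H}_{A}\otimes\mathcal{H}_{[k]}$ to the remaining legs. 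In other words, $C$ is PME on $A$ jointly with its logical legs.

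Next I would write the fused encoding map as a composition. First apply $V_{C'}$, which is an isometry from $k'$ qubits to $n'$ qubits; its outputs on $A'$ form part of the input. Then feed the qubits on $A'$, together with the $k$ logical legs of $C$, into the tensor of $C$ viewed as a map from $A\cup[k]$ to $[n]\setminus A$. Contracting $A$ with $A'$ via Bell fusion identifies these two interfaces. The composite map sends $k+k'$ logical qubits to $(n-|A|)+(n'-|A'|)=n+n'-2|A|$ physical qubits. It is a composition of two isometries, the first tensored with the identity on $[k]$, so it is itself an isometry. This is exactly the isometric-trace setting of Proposition~\ref{prop:isometrictrace_sym}, and it gives the parameters $[[n+n'-2|A|,k+k']]$. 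The stabilizer character of the result follows because Bell fusion of stabilizer tensors is conjoining.

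As a cross-check via operator pushing, in the spirit of the proof of Proposition~\ref{prop:gen_fusion}, consider any logical Pauli of $C'$ that meets $A'$. Its restriction to $A'$ is matched to an operator on $A$. Detectability lets us push that operator through $C$ to a nontrivial operator on $[n]\setminus A$ using a stabilizer of $\ket{V_C}$ with trivial logical-leg action. Conversely, logicals of $C$ can be cleaned off $A$ entirely, since no nontrivial normalizer element is supported on $A$. Together these give $k+k'$ independent logical pairs with no kernel.

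The step I expect to require the most care is the first one: showing that detectability on $A$ alone, rather than correctability, yields maximal entanglement of $A\cup[k]$ with its complement in the Choi state. In particular, one must rule out stabilizers of $\ket{V_C}$ that mix logical legs with $A$. I would handle this by noting that any such stabilizer restricts on the physical legs to an element of $\mathcal{N}(S)$ supported in $A$, and such an element must be trivial. Its logical part is then the identity as well, because $\bar{L}$ is determined up to $S$. Once this is in place, the rest of the argument is routine composition of isometries.
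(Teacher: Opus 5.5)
Your proposal is correct and follows essentially the same route as the paper. Detectability on one side makes that tensor an isometry from the fused legs together with its logical legs onto its remaining legs, so the fusion is a generalized (partial) concatenation and the parameters $[[n+n'-2|A|,k+k']]$ follow; your explicit check that no stabilizer of $\ket{V_C}$ is supported on $A\cup[k]$, combined with the appendix lemma on stabilizer-state isometries, fills in the joint-isometry step that the paper only asserts.
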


\begin{proof}
    Without loss of generality, suppose all Pauli errors on $A'$ are detectable. In particular, $A'$ contains no nontrivial logical operator and hence is cleanable. That is, any Pauli operator can be pushed from $A'$ to non-identity elements on $A'^c$ and the resulting tensor is an isometry going from $A'$ to the rest of the legs (see also \cite{tnc}). In particular, then fusion reduces to a generalized concatenation where the encoding map $V$ has support $A'\uplus \{1,\dots,k'\}\rightarrow A'^c$, depending on the number of logical inputs $k'$. 
\end{proof}

We can say a bit more about the distance, even though it depends on the cleanability of each code block, thus making a general distance bound less useful.

\begin{proposition}
    Continuing the notation of Theorem~\ref{thm:tnc} and let the distance of $C$, the Choi state associated with the encoding tensor induces an isometry $V$ from $A'\uplus \{1,\dots,k'\}$, which defines a $[[|A'^c|,|A'|+k']]$ stabilizer code. If this code has distance $d_V$, then the minimal weight logical operator that only act non-trivially on the $k$ logical qubits in $C$ has word distance $d_{C}''\geq d-\min\{|A|,d\}+d_V$ and $d_{C'}''\leq d'$.
\end{proposition}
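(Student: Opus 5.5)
We argue by operator pushing through the fused network, as in the proofs of Lemma~\ref{lemma:isodist} and Prop.~\ref{prop:gen_fusion}. Under the hypothesis of Theorem~\ref{thm:tnc}, every Pauli on $A'$ is detectable. Hence the fusion is a generalized concatenation: the legs of $C$ in $A$, together with the $k'$ logical legs of $C'$, feed the isometry $V$ from $A'\uplus\{1,\dots,k'\}$ to $A'^c$. This $V$ is a $[[|A'^c|,|A'|+k',d_V]]$ code. A post-trace logical operator $\bar L$ acting nontrivially only on the $k$ logical qubits of $C$ therefore splits as $\bar L=L_{A^c}\otimes L_A$. Here $L_{A^c}$ sits on the dangling legs of $C$, and $L_A$ sits on the fused legs and is pushed through $V$ with trivial action on the $k'$ logical legs of $C'$.

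For the lower bound on $d''_C$, I would split into two cases according to whether $L_A$ is the identity. If $L_A=I$ for some representative, then $\bar L$ lives entirely on $A^c$. It is a nontrivial logical of $C$, so its weight is at least $d\geq d-\min\{|A|,d\}+d_V$; this case is fine as long as $d_V\le\min\{|A|,d\}$, and otherwise I would note it separately. If $L_A\neq I$ for every representative, then the operator delivered to the input legs of $V$ is a nontrivial Pauli on the $A'$-inputs with identity on the $k'$ inputs. Because those inputs are logical legs of $V$, this is a nontrivial logical operator of the code defined by $V$, so its image on $A'^c$ has weight at least $d_V$. The part on $A^c$ still satisfies $|L_{A^c}|\ge d-|L_A|\ge d-\min\{|A|,d\}$. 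The reason is that $L_{A^c}\otimes L_A$ is a logical of $C$ of weight at least $d$, and $|L_A|\le |A|$. Summing the weights on the disjoint supports $A^c$ and $A'^c$ gives $d''_C\ge d-\min\{|A|,d\}+d_V$.

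For the upper bound $d''_{C'}\le d'$, take a minimum-weight logical of $C'$ acting on its logical legs. Since every operator on $A'$ is detectable, no logical of $C'$ can be supported on $A'$, and by the cleaning argument a representative can be pushed off $A'$, as in Theorem~\ref{thm:tnc}. If such a representative lives on $A'^c$ and has weight $d'$, it directly gives a post-trace logical of weight at most $d'$. The main obstacle is that the cleaned representative need not remain minimum weight. To handle this, I would first choose a minimum-weight representative, and then clean its (detectable) restriction to $A'$ by multiplying with logical-identity elements. I expect this to possibly cost weight, so I would instead argue directly. Any minimum-weight logical of $C'$ that touches $A'$ can be matched through the fusion: its Pauli on $A'$ is pushed through $C$'s legs in $A$ using the fact that $C$'s tensor admits a symmetry acting on $A$ with trivial logical action. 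This is where a detectability or cleaning hypothesis on $C$'s side is needed, and I would state it explicitly if it turns out to be required.
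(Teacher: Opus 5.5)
Your lower-bound argument is the paper's argument. The fused legs carry weight at most $\min\{|A|,d\}$, so at least $d-\min\{|A|,d\}$ of a $C$-logical's weight stays on $A^c$. The nontrivial operator it feeds into the $A'$-inputs of $V$ is a nontrivial logical of the $[[|A'^c|,|A'|+k',d_V]]$ code, so it adds at least $d_V$ on the disjoint set $A'^c$. (Make the case split on the $A$-part of a given post-trace operator, not on whether "some" or "every" representative has $L_A=I$.) The case you set aside, $L_A=I$, cannot be patched: there the stated bound fails once $d_V>\min\{|A|,d\}$.

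A concrete example: fuse two $[[5,1,3]]$ codes along one leg. Every Pauli on $A'$ is detectable, and $V$ is the $[[4,2,2]]$ code, so $d_V=2$ and the claimed bound is $3-1+2=4$. But by the cleaning lemma $C$ has a weight-$3$ logical on three of its four untraced legs. That operator is untouched by the fusion, so $d''_C\le 3$. What your analysis actually proves is $d''_C\ge\min\{d,\,d-\min\{|A|,d\}+d_V\}$. The paper reaches the stated bound only because it tacitly considers logicals forced through $A$ (``any minimal weight logical operator from $C$ pushing through $A$'').

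For $d''_{C'}\le d'$ your proposal reaches no conclusion, and neither route can work in general. The paper uses your first route: detectability plus complementarity gives representatives of $C'$'s logicals on $A'^c$, which survive the fusion. It then just asserts that their weight is at most $d'$, which is exactly the step you rightly doubt. Your second route fails because absorbing $L'|_{A'}$ needs a stabilizer $S$ of $C$ matched on $A$. The resulting weight $d'-|L'_{A'}|+|S_{A^c}|$ exceeds $d'$ unless $|S_{A^c}|\le|L'_{A'}|$.

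A concrete counterexample: let $C'$ be the $[[21,1,7]]$ code obtained from a $[[5,1,3]]$ code by concatenating four of its qubits with $[[5,1,3]]$ and leaving qubit $1$ bare. Take $A'=\{1\}$ and fuse with a $[[5,1,3]]$ code $C$; all errors on $A$ and $A'$ are detectable. Every weight-$7$ logical of $C'$ uses qubit $1$, and logicals avoiding it have weight at least $9$. Routing through $A$ replaces the single Pauli on qubit $1$ by the three-qubit remainder of a weight-$4$ stabilizer of $C$, so $d''_{C'}=9>d'$. The upper bound needs an extra hypothesis, for example that some minimum-weight nontrivial logical of $C'$ can be chosen disjoint from $A'$; that representative then passes through the fusion unchanged.
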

\begin{proof}
    In the proof of the theorem we showed $V:A'\cup [k']\rightarrow A'^c$ was an isometry. Now, note that the support of any minimal weight logical operator from $C$ pushing through $A$ has weight at most $\min\{A,d\}$ because the flow is constricted. By definition, the word distance \cite{Harris} can include operators that act also non-trivially on $[k']$. That support is then mapped to an operator has weight at least $d_V$. Hence the final logical operator acting on $C$ has weight at least $d-\min\{|A|,d\}+d_V$.  For the logical operators on $C'$, since all errors on $A$ are detectable, it follows that none of them are in the normalizer and the logical operators must admit representations that only have support in $A'^c$ by complementarity.  Therefore, logical operators acting non-trivially on $C'$ have distance at most $d'$, which was the untraced distance. For operator flows that go through the sites of fusion, this will depend on the property of $C$. 
\end{proof}

\section{Codes with transversal SH gates}
\label{app:msd}
Since operators on the logical legs in both the line and 2d codes can be pushed to the physical legs without obstruction, the tensor network is an isometry from the logical to physical degrees of freedom. Alternatively, in the line code, we first recall that a perfect tensor with 6 legs is an isometry when any $k\leq 3$ legs are chosen as input and the rest as outputs. The line code can be written as a sequence of contractions where WLOG a tensor on left endpoint of the line is contracted with an isometry with 2 inputs next to it, then followed by another isometry of the same type until we reach the right boundary. As this is a sequence of isometric maps, the final map remains isometric. A similar proof holds for the 2d code, where we first contract the perfect tensors into a line. For each such line tensor network that is not a boundary of the 2d network, it is built by a sequence of isometric maps with 2 inputs on one endpoint and followed by isometries with 3 inputs in the ones that follow. For a size $L\times L$ 2d network, we build up such 1d networks of length $L$, which are now isometries with $2L$ inputs and $2L+2$ outputs. Contracting such 1d networks $L$ times, we generate a 2d network that's also an isometry as it is produced by a sequence of isometric maps. 

For MSD, we need to prove that inputting products of $|T_0\rangle$ (or $T_1\rangle$) can output purer copies of them, i.e. these states are stable fixed points. This corresponds to studying the behaviour of $\Pi |T_x\rangle$ where $\Pi$ is the projection onto the code subspace and $|T_x\rangle = |T_{i_1}\rangle|T_{i_2}\rangle\dots|T_{i_n}\rangle$ where $i_a=0,1$ and $x$ is the bit string generated by $i_a$. One important requirement for Bravyi-Kitaev (BK) type  distillation to work in the 5-qubit code is that in addition to having transversal SH gates, if the Hamming weight $|x|=1,4$, then $\Pi|T_x\rangle=0$ (Fig.~\ref{fig:MSD_line}a). This allows us to suppress the errors with post-selection onto the code subspace. Note that this is different from the detectability condition and the T-gate distillation like in Bravyi-Haah. As a result, we need to show this by contracting the tensors with $T_0\rangle$ or $|T_1\rangle$s on the dangling legs. 

\begin{figure}
    \centering
    \includegraphics[width=\linewidth]{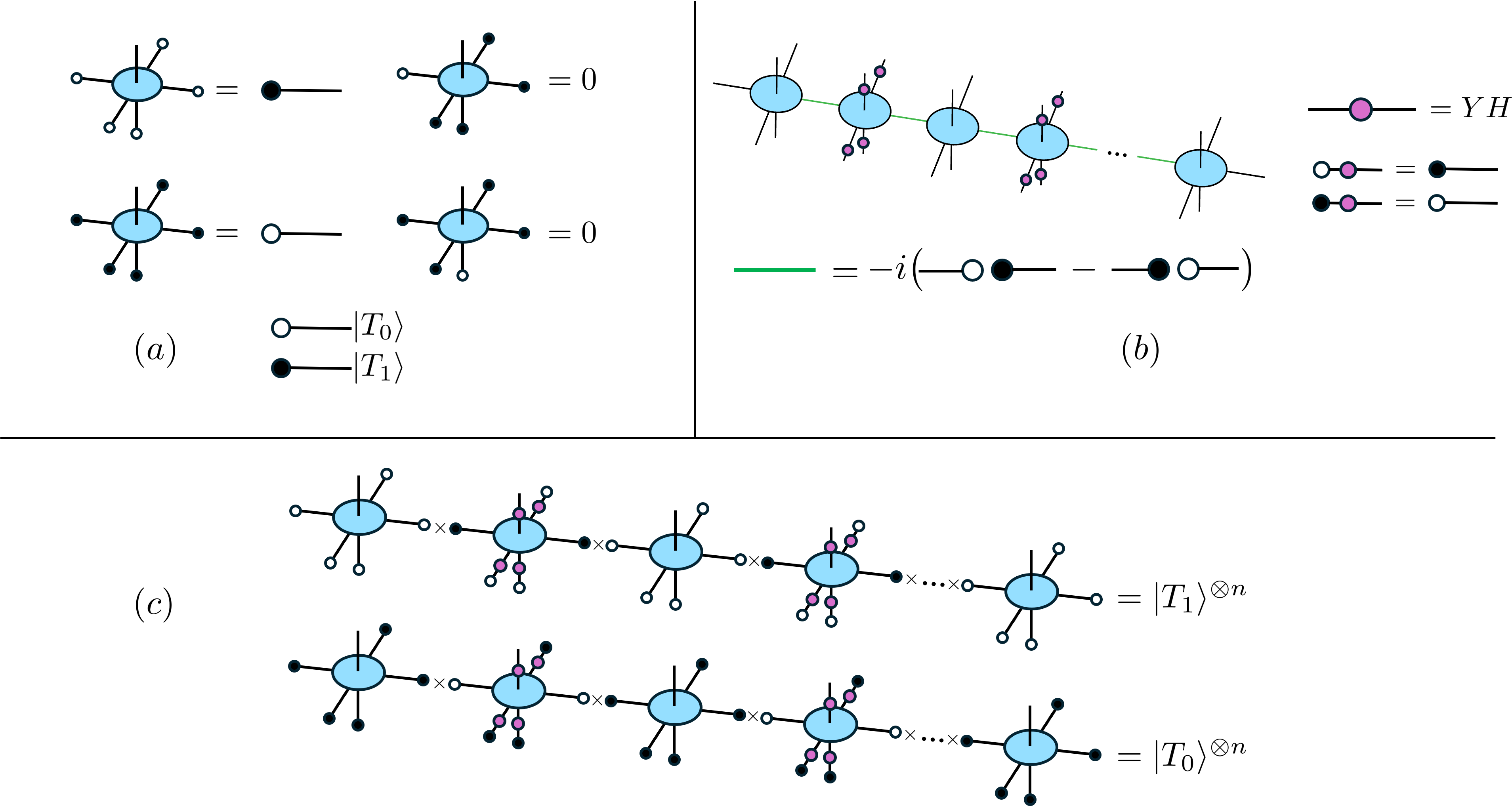}
    \caption{(a) Tensor identity of the 5 qubit code when contracting with $|T_{0,1}\rangle$. (b) $[[3L+2,L,3]]$ line code with local Clifford deformations. (c) Non-trivial contribution when $|x|=0$ or $|x|=3L+2$.}
    \label{fig:MSD_line}
\end{figure}

First note that $|\Phi_Y\rangle$ is stabilized by $-U\otimes U$ where $U|0\rangle=|T_0\rangle$, $U|1\rangle=|T_1\rangle$. Using this, we can rewrite it as $|\Phi_Y\rangle = -i(|T_0\rangle|T_1\rangle - |T_1\rangle|T_0\rangle)$ (Fig.~\ref{fig:MSD_line}b). Although the line code itself can already be used for BK style MSD, we have added local Clifford deformations to ensure that the input and outputs are made uniform. Using this to project the edges on the line code in the figure, it is clear that $|T_0\rangle^{\otimes n}, |T_1\rangle^{\otimes n}$ are stable fixed-points where the only non-trivial contributions to these inputs are given in Fig.~\ref{fig:MSD_line}c. By the identity in (a), the only nonzero contraction of the tensors must have nodes of the same color on the contracted edges in (c). Hence the black and white nodes must alternate throughout the chain by the identity in (b). 

If one of the physical nodes are fed a state with the opposite color in either the top or bottom line in (c), then that tensor must vanish by the identity in (a) and hence $\Pi |T_{|x|=1}\rangle=\Pi |T_{|x|=n-1}\rangle=0$. This implies that all terms in $\rho^{\otimes n}$ that have one $YH$ error will be filtered out after one round of distillation. Note that a tree tensor network with a bulk qubit on every node also satisfy this condition, but does not lead to higher rate. Therefore, if any single error probability is of order $p$, then one round of distillation will suppress it to $O(p^2)$.  For two or more such errors, the code does not provide this suppression. We will leave its prospect for MSD to future work.

Using the same argument, we can see that the 2d code or the max rate HaPPY codes do not satisfy this condition where $\Pi|T_{|x|=1}\rangle=\Pi|T_{|x|=n-1}\rangle=0$. Indeed, it is easy to see that $|T_{0,1}\rangle^{\otimes n}$ is not even a stable fixed point for these codes when two or more legs are contracted in every tensor. Therefore, they cannot simply be used for MSD with a BK-like protocol.

\section{Codes with weakly transversal $CZ$ gates}
\label{app:cz}

\begin{lemma}\label{lemma:weaklytransvCZ}
    Consider a CSS code, and let $A = CZ^{\otimes r}$ act on two code blocks. Then $A$ is a logical operator if and only if for every $X$-type stabilizer $s_X$ there exists a $Z$-type stabilizer with $\mathrm{supp}(A)\cap \mathrm{supp}(s_X) = \mathrm{supp}(s_Z)$.
\end{lemma}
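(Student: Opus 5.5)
The plan is to reduce the statement to the conjugation action of $A$ on the stabilizer group of two copies of the code. Write $S=\langle S_X,S_Z\rangle$ for the CSS stabilizer group and $\Pi$ for the code projector. The two-block code then has stabilizer group $S\otimes S$, generated by $X$- and $Z$-type stabilizers on each block. We interpret ``$A$ is a logical operator'' as $A\,\Pi\otimes\Pi\,A^\dagger=\Pi\otimes\Pi$, i.e.\ $A$ preserves the codespace. Since $A$ is unitary, this is equivalent to the condition that for every stabilizer $g$ of the two-block code, $AgA^\dagger$ fixes every codeword.

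The first step is the elementary commutation rule for $CZ$: $CZ\,(X\otimes I)\,CZ^\dagger = X\otimes Z$, while $CZ$ commutes with $Z\otimes I$ and $I\otimes Z$. Because $A=CZ^{\otimes r}$ pairs qubit $i$ of block $1$ with qubit $i$ of block $2$ for $i\in\mathrm{supp}(A)$, this gives two facts. First, $A$ commutes with every $Z$-type stabilizer on either block. Second, for an $X$-type stabilizer $s_X$ on block $1$,
$$A\,(X(s_X)\otimes I)\,A^\dagger = X(s_X)\otimes Z(\mathrm{supp}(A)\cap\mathrm{supp}(s_X)),$$
and the block-swapped identity holds for $X$-type stabilizers on block $2$. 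It therefore suffices to check the generators $X(s_X)\otimes I$ and $I\otimes X(s_X)$.

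The second step handles both directions at once. $X(s_X)$ already fixes every codeword of block $1$, so the conjugated operator fixes all codewords if and only if $Z(\mathrm{supp}(A)\cap \mathrm{supp}(s_X))$ fixes every codeword of block $2$. We then invoke the standard fact that a Pauli operator fixing every state in a stabilizer codespace lies in the stabilizer group: it must commute with $S$, since otherwise it would map the codespace to an orthogonal syndrome space, and it must act as the logical identity with sign $+1$. A $Z$-type Pauli with these properties in a CSS code is exactly $Z(s_Z)$ for some $Z$-type stabilizer $s_Z$, which gives $\mathrm{supp}(s_Z)=\mathrm{supp}(A)\cap\mathrm{supp}(s_X)$. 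Sign issues do not arise, because every $Z(t)$ has eigenvalue $+1$ on $|0\cdots0\rangle$-type components and CSS $Z$-stabilizers are taken with $+1$ sign.

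For the forward direction, if $A$ preserves the codespace then each conjugate $AgA^\dagger$ fixes every codeword, which yields the required $s_Z$ for each $s_X$. For the converse, the stated condition makes every conjugated generator a stabilizer. Hence $A(S\otimes S)A^\dagger\subseteq S\otimes S$, and so $A$ maps the joint $+1$ eigenspace to itself. The swapped block is identical by the symmetry of $CZ$. The only point needing care is the claim that ``$Z(t)$ acts trivially on the code'' is equivalent to ``$Z(t)\in S_Z$''. This is the step where I would spell out the argument via commutation with $S_X$ together with the absence of nontrivial $Z$-logicals acting trivially on the code.
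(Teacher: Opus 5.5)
Your proposal is correct and follows essentially the same route as the paper. Both arguments conjugate the generators $X(s_X)\otimes I$, $I\otimes X(s_X)$ and the $Z$-type stabilizers by $A$ using the $CZ$ rules, and reduce logicality to $Z(\mathrm{supp}(A)\cap\mathrm{supp}(s_X))$ being a stabilizer. Your extra care about signs, and about a $Z$-type Pauli that acts trivially on the code lying in $S_Z$, just makes explicit what the paper's step $(s_X\otimes I)(s_X\otimes s_Z)=I\otimes s_Z$ takes for granted.
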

\begin{proof}
    The Gottesman rules for $CZ$ are $XI \mapsto XZ$ and $IX\mapsto ZX$, while operators $ZI$ and $IZ$ are left fixed. Consequently, for any $Z$-type stabilizer $s_Z\otimes s_Z'$, we have $A(s_Z\otimes s_Z')A = s_Z\otimes s_Z'$ is again a stabilizer. Now suppose $s_X$ is an $X$-type stabilizer. Then $A(s_X\otimes I)A = s_X \otimes s_Z$ where $s_Z$ is the $Z$-type operator with support $\mathrm{supp}(s_Z) = \mathrm{supp}(A)\cap \mathrm{supp}(s_X)$. If $A$ is logical, then $s_X \otimes s_Z$ is a stabilizer of the product code, and hence so is $(s_X \otimes I)(s_X \otimes s_Z) = I \otimes s_Z$ and therefore $s_Z$ is a $Z$-type stabilizer of the code. Conversely, if given a stabilizer $s_X$ one has a stabilizer $s_Z$ as prescribed, then we have $A(s_X\otimes I)A = s_X \otimes s_Z$ just as above (and similarly $A(I\otimes s_X)A = s_Z \otimes s_X$). But as $s_X\otimes I$, $I\otimes s_X$, and stabilizers of the form $s_Z\otimes s_Z'$ generate the stabilizer group, we see $A$ preserves the stabilizer group and hence is a logical operator.
\end{proof}

It is well-known that the $15$-qubit QRM code has a bitwise transversal logical $T$ gate and transversal CNOT gate in addition to the usual Pauli gates. However, we note here that it also supports weakly transversal $\overline{CZ}=CZ^{\otimes 7}$ where equivalent representations can be chosen such that the support on each code block is on one of the faces from each tetrahedron. We also note that the two-block tensor has a symmetry where $CZ^{\otimes 8}$ connecting the 8 vertices of two subcubes within each code block leaves the tensor invariant. This can be easily checked using the above lemma and by conjugating the appropriate (logical) Pauli operators with logical $CZ$s. 

There are also alternative but equivalent forms of this tensor that can be worth examining with a bit more detail. As a 16-qubit tensor of a stabilizer state $[[16,0,4]]$, this implies that two copies of this stabilizer state is stabilized by $CZ^{\otimes 8}$ acting on across any of the vertices of the 3 dimensional subcubes. (See Fig. 4 right of \cite{raussendorf2012qec}.) 

Another interesting form is the $[[14,2,2]]$ code one obtains by dualizing two legs of the tensor as logical degrees of freedom. This can be accomplished by choosing 2 outer corners of the hypercube as the logical qubits. In the tetrahedral arrangement, we can choose any corner qubit in addition to the original logical qubit (or the qubit in the interior of the tetrahedron). These two choices are equivalent up to shuffling of the qubit locations, but the code can be written almost as two layers of the Steane code (Fig.~\ref{fig:1422_CZ}). In this case, the logical $CZ$ between two encoded qubits is nothing but the pairwise $CZ^{\otimes 7}$ across two layers. Note that it preserves all stabilizers because $CZ$ maps $XX\rightarrow -YY, YY\rightarrow -XX, IZ\rightarrow IZ, ZI\rightarrow ZI$. This preserves all stabilizers acting on the 8-vertex subcubes. It also preserves the Z type stabilizers by default. As for logical operators, let's label the qubits as in the figure where the top layer has qubits $Q_1=\{1,\dots,7\}$ while the bottom layer $Q_2=\{8,\dots, 14\}$. Notice that $\bar{P}_j=\bigotimes_{i\in Q_j} P_i$, where $P=I, X,Y,Z$ and $j=1,2$.

These statements can also be derived by noting that the QRM 15 encoding tensor (up to local Pauli deformation) admits a globally transversal interblock $CS^{\otimes 16}$ symmetry. Using our Lemma~\ref{lemma:phasegate_lemma} and the fact that logical $X$ acts on a face in the 15-qubit code, we can easily recover that $CZ$s are locally transversal.

The Steane code has a bitwise transversal $CZ$. It doesn't have weakly transversal $CZ$ or similar stabilizers like the QRM tensor above when applying Lemma~\ref{lemma:weaklytransvCZ}. However, when dualized to the $[[6,2,2]]$ code, it admits a depth-1 transversal $CZ$ gate which implements the intrablock logical $CZ$ on the two encoded qubits. To see this, pick any of the corner qubit in the $[[7,1,3]]$ code as a logical qubit, then we convert the code into a bilayer code where the first layer has qubits $1-3$ and the second layer has qubits $4-6$. Then $\bar{X}_1=X_1X_2X_3, \bar{Z}_1=Z_1Z_2Z_3$, $\bar{X}_2=X_4X_5X_6, \bar{Z}_2=Z_4Z_5Z_6$. The stabilizer group $S=\langle X_1X_2X_4X_5, X_2X_3X_5X_6, Z_1Z_2Z_4Z_5, Z_2Z_3Z_5Z_6\rangle$ which come from the stabilizer of the two remaining plaquettes on the 2d color code. Then $\overline{CZ}=CZ^{\otimes 3}$ where the physical $CZ$s can act on the pairs $(1,4), (2,5), (3,6)$ though some other configurations are also allowed by the permutation symmetries that preserves the code. We see that it preserves the stabilizers because $XXXX,ZZZZ,YYYY$ are stabilizers on any weight 4 plaquette. Conjugation on the weight 3 logical operators also produce consistent Pauli operators (Fig.~\ref{fig:multicopy_tensor}d). 

\section{Example of single-qubit symmetries in small legos}

The existence of transversal single qubit operations is most easily stated using the encoding tensor of the stabilizer code. As the encoding tensor is simply a stabilizer state, the transversal operation now takes the form of a symmetry of this state. Note that the existence of such a transversal symmetry is LOCC-invariant: given such a symmetry, permuting the labeling of the legs, or conjugation by a local unitary, preserves transversality. So to classify the possible symmetries, we can focus on the LOCC equivalence classes of stabilizer states. For tractability we will instead focus on logically Clifford equivalence classes, namely the graph states \cite{hein2006entanglement}. As weight enumerators are also LOCC-invariant \cite{rains2002quantum, TNenum}, we expect the enumerator of a graph states to play a role in determining its symmetries and indeed this is the case as we will see shortly.

\begin{table}
\begin{tabular}{|c|c|c||c|c|c|}\hline
No. & Example & Enumerator & No. & Example & Enumerator \\\hline
1 &
\begin{tikzpicture}[scale=0.2]
    \draw (0,0) -- (1,0);
    \draw[fill=black] (0,0) circle(0.2);
    \draw[fill=black] (1,0) circle(0.2);
\end{tikzpicture}
& $1+3z^2$ & 
5 & 
\begin{tikzpicture}[scale=0.2]
    \draw (1,0.5) -- (0,0) -- (1,-0.5);
    \draw (-1,0.5) -- (0,0) -- (-1,-0.5); 
    \draw[fill=black] (-1,-0.5) circle(0.2);
    \draw[fill=black] (-1,0.5) circle(0.2);
    \draw[fill=black] (0,0) circle(0.2);
    \draw[fill=black] (1,0.5) circle(0.2);
    \draw[fill=black] (1,-0.5) circle(0.2);
\end{tikzpicture}
& $1 + 10z^2 + 5z^4 + 16z^5$ \\\hline
2 & 
\begin{tikzpicture}[scale=0.2]
    \draw (1,0.5) -- (0,0) -- (1,-0.5);
    \draw[fill=black] (0,0) circle(0.2);
    \draw[fill=black] (1,0.5) circle(0.2);
    \draw[fill=black] (1,-0.5) circle(0.2);
\end{tikzpicture}
& $1 + 3z^2 + 4z^3$ & 
6 & 
\begin{tikzpicture}[scale=0.2]
    \draw (0,0) -- (1,0) -- (2,0);
    \draw (-1,0.5) -- (0,0) -- (-1,-0.5); 
    \draw[fill=black] (-1,-0.5) circle(0.2);
    \draw[fill=black] (-1,0.5) circle(0.2);
    \draw[fill=black] (0,0) circle(0.2);
    \draw[fill=black] (1,0) circle(0.2);
    \draw[fill=black] (2,0) circle(0.2);
\end{tikzpicture}
& $1 + 4z^2 + 6z^3 + 11z^4 + 10z^5$ \\\hline
3 & 
\begin{tikzpicture}[scale=0.2]
    \draw (1,0.5) -- (0,0) -- (1,-0.5);
    \draw (-1,0) -- (0,0); 
    \draw[fill=black] (-1,0) circle(0.2);
    \draw[fill=black] (0,0) circle(0.2);
    \draw[fill=black] (1,0.5) circle(0.2);
    \draw[fill=black] (1,-0.5) circle(0.2);
\end{tikzpicture}
& $1 + 6z^2 + 9z^4$ & 
7 & 
\begin{tikzpicture}[scale=0.2]
    \draw (0,0) -- (1,0.5) -- (2,0.5);
    \draw (0,0) -- (1,-0.5) -- (2,-0.5);
    \draw[fill=black] (0,0) circle(0.2);
    \draw[fill=black] (1,0.5) circle(0.2);
    \draw[fill=black] (2,0.5) circle(0.2);
    \draw[fill=black] (1,-0.5) circle(0.2);
    \draw[fill=black] (2,-0.5) circle(0.2);
\end{tikzpicture}
& $1 + 2z^2 + 8z^3 + 13z^4 + 8z^5$ \\\hline
4 & 
\begin{tikzpicture}[scale=0.2]
    \draw (1,0.5) -- (0,0.5) -- (0,-0.5) -- (1,-0.5);
    \draw[fill=black] (0,-0.5) circle(0.2);
    \draw[fill=black] (0,0.5) circle(0.2);
    \draw[fill=black] (1,0.5) circle(0.2);
    \draw[fill=black] (1,-0.5) circle(0.2);
\end{tikzpicture}
& $1 + 2z^2 + 8z^3 + 5z^4$ & 
8 & 
\begin{tikzpicture}[scale=0.2]
    \draw (0,0) -- (1,0.5) -- (2,0.5);
    \draw (0,0) -- (1,-0.5) -- (2,-0.5) -- (2,0.5);
    \draw[fill=black] (0,0) circle(0.2);
    \draw[fill=black] (1,0.5) circle(0.2);
    \draw[fill=black] (2,0.5) circle(0.2);
    \draw[fill=black] (1,-0.5) circle(0.2);
    \draw[fill=black] (2,-0.5) circle(0.2);
\end{tikzpicture}
&  $1 + 10z^3 + 15z^4 + 6z^5$ \\\hline
\end{tabular}
\caption{LC-classes of 2-qubit through 5-qubit graph states.}\label{table:graph_states}
\end{table}

To illustrate this, consider the simple quantum repetition code in the computational basis; its encoding tensor is the GHZ state $\ket{GHZ} = \frac{1}{\sqrt{2}}(\ket{000} + \ket{111})$. As presented this is the stabilizer state of $\mathcal{S} = \langle XXX, ZZI, ZIZ \rangle$. We convert this into graph state \#2 on Table~\ref{table:graph_states} by conjugating by $I\otimes H\otimes H$ having stabilizer $\mathcal{S}_{\#2} = \langle XZZ, ZXI, ZIX\rangle$. A transversal symmetry of this state is a $U = U_1 \otimes U_2 \otimes U_3$ such that $U^\dagger \Pi_{\#2} U = \Pi_{\#2}$, where $\Pi_{\#2} = \frac{1}{8} \sum_{S \in \mathcal{S}_{\#2}} S$.

Note that for a general $U$ we have $U^\dagger S U$ is a linear combination of Pauli operators. Yet each  Pauli in this sum will have precisely the same support (in the sense of Hamming weight) as $S$. Hence we decompose the projection $\Pi_{\#2}$ according to support sets,
\begin{equation}\label{equation:projection2}
    \Pi_{\#2} = \tfrac{1}{8}\left([I] + [XZI] + [XIX] + [IZX] + [ZXZ + YYZ + ZYY + YXY]\right).
\end{equation}
Each of these bracketed sums of operators must be preserved under conjugation by $U$. Clearly $U^\dagger I U = I$, so turning to the next term in this sum:
\begin{equation}
    U^\dagger (X\otimes Z \otimes I) U = U_1^\dagger X U \otimes U_2^\dagger Z U \otimes I = X\otimes Z \otimes I.
\end{equation}
Hence we must have $U_1^\dagger X U_1 = \lambda X$ and $U_2^\dagger Z U_2 = \mu Z$, where $\lambda\mu = 1$. As $X^2 = Z^2 = I$ we must have $\lambda = \mu = \pm 1$. Therefore $U_1 = R_X(\theta_1) P_1$ and $U_2 = R_Z(\theta_2)P_2$ where $\omega(X,P_1) = \omega(Z,P_2)$. Note that this last equality can be restated as $P_1\otimes P_2$ commutes with $X \otimes Z$.

By exactly the same reasoning, from the next two terms in sum (\ref{equation:projection2}) we discover $U_3 = R_X(\theta_3)P_3$ where $P_1\otimes P_3$ commutes with $X\otimes X$ and $P_2\otimes P_3$ commutes with $Z\otimes X$. That is, we must have
\begin{equation}
    U = (R_X(\theta_1)\otimes R_Z(\theta_2)\otimes R_X(\theta_3))S \text{ where $S\in \mathcal{S}_{\#2}$.}
\end{equation}
Finally we need that $U^\dagger (ZXZ + YYZ + ZYY + YXY) U = ZXZ + YYZ + ZYY + YXY$. This can done by direct computation: we leave it to the reader to verify that this holds precisely when $\sin(\theta_1 + \theta_2 + \theta_3) = 0$. Therefore Figure~\ref{fig:graph_state2_transversal}b captures transversal symmetries of the graph state.

\begin{figure}\centering
    \begin{tikzpicture}
        \draw (0,1.55) node {a)};
        \draw (2,1) -- (0,0) -- (2,-1);
        \draw (0,0) node[circle, draw, fill=white] {I};
        \draw (2,1) node[circle, draw, fill=white] {H};
        \draw (2,-1) node[circle, draw, fill=white] {H};
        \draw (0.2,0.6) node {\small $1$};
        \draw (2.2,0.4) node {\small $2$};
        \draw (1.4,-1) node {\small $3$};
        \draw (0,-2) node {};
    \end{tikzpicture}
    \qquad\qquad
    \begin{tikzpicture}
        \draw (-2,1.4) node {b)};
        \draw (0,0) -- (0,1);
        \draw (0,0) -- (-1,-1);
        \draw (0,0) -- (1,-1);
        \draw (0,0) node[circle, draw, fill=white] {\#2};
        \draw (0.2,0.6) node {\small $1$};
        \draw (-0.6,-0.2) node {\small $2$};
        \draw (0.6,-0.2) node {\small $3$};
        \draw (0,1.3) node {$R_Z(\theta_1)$};
        \draw (-1.2,-1.3) node {$R_X(\theta_2)$};
        \draw (1.2,-1.3) node {$R_X(\theta_3)$};
        \draw (0,-2) node {$\theta_1 + \theta_2 + \theta_3 = 0 \pmod{\pi}$};
    \end{tikzpicture}
    \qquad\qquad
    \begin{tikzpicture}
        \draw (-2,1.4) node {c)};
        \draw (0,0) -- (0,1);
        \draw (0,0) -- (-1,-1);
        \draw (0,0) -- (1,-1);
        \draw (0,0) node[circle, draw, fill=white] {GHZ};
        \draw (0.3,0.7) node {\small $1$};
        \draw (-0.7,-0.3) node {\small $2$};
        \draw (0.7,-0.3) node {\small $3$};
        \draw (0,1.3) node {$R_Z(\theta_1)$};
        \draw (-1.2,-1.3) node {$R_Z(\theta_2)$};
        \draw (1.2,-1.3) node {$R_Z(\theta_3)$};
        \draw (0,-2) node {$\theta_1 + \theta_2 + \theta_3 = 0 \pmod{\pi}$};
    \end{tikzpicture}
    \caption{ (a) We label the nodes of graph state \#2 to illustrate how we transform $\ket{GHZ}$ to this state. (b) Transversal symmetries of the \#2 graph state (modulo stabilizers). (c) Transversal symmetries of the GHZ state obtained from the \#2 graph state (modulo stabilizers).} \label{fig:graph_state2_transversal}
\end{figure}
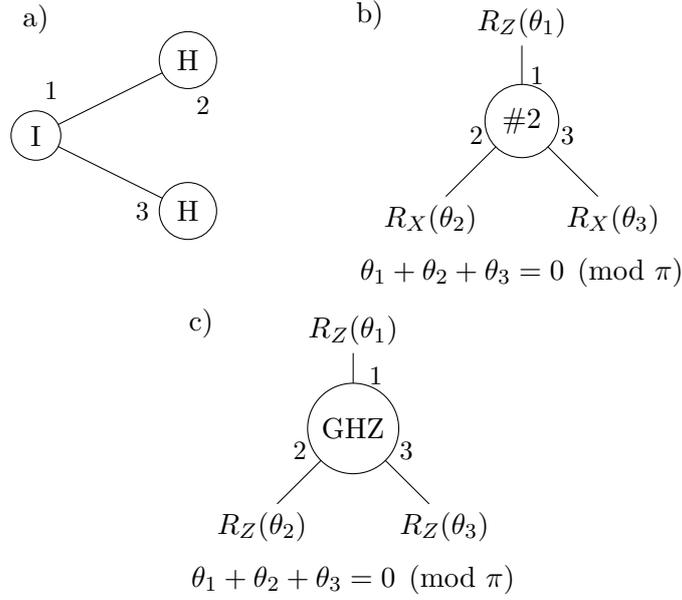

Turning to swap symmetries, it is clearly that swapping qubits 2 and 3 is a trivial symmetry as it preserves that stabilizer as presented. However swapping qubits 1 and 2 does not: it produces the stabilizer $\mathcal{S} = \langle ZXZ, XZI, IZX \rangle$, which is in the same LC equivalence class. To recover the specific graph state \#2, we apply $H \otimes H \otimes I$ which maps $\mathcal{S}$ to $\mathcal{S}_{\#2}$.

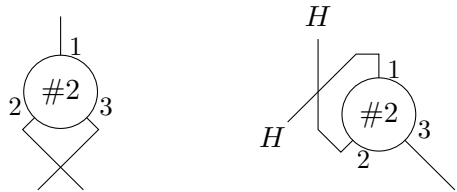
\begin{figure}[b]\centering
    \begin{tikzpicture}
        \draw (0,0) -- (0,1);
        \draw (0,0) -- (-0.5,-0.5) -- (0.3,-1.3);
        \draw (0,0) -- (0.5,-0.5) -- (-0.3,-1.3);
        \draw (0,0) node[circle, draw, fill=white] {\#2};
        \draw (0.2,0.6) node {\small $1$};
        \draw (-0.6,-0.2) node {\small $2$};
        \draw (0.6,-0.2) node {\small $3$};
    \end{tikzpicture}
    \qquad\qquad
    \begin{tikzpicture}
        \draw (0,0) -- (0,0.8) -- (-0.3,0.8) -- (-1.2,-0.1);
        \draw (0,0) -- (-0.5,-0.5) -- (-0.8,-0.2) -- (-0.8,1);
        \draw (0,0) -- (1,-1);
        \draw (0,0) node[circle, draw, fill=white] {\#2};
        \draw (0.2,0.6) node {\small $1$};
        \draw (-0.2,-0.6) node {\small $2$};
        \draw (0.6,-0.2) node {\small $3$};
        \draw (-0.8,1.3) node {$H$};
        \draw (-1.4,-0.3) node {$H$};
    \end{tikzpicture}    
    \caption{Swap symmetries of the graph state \#2.}
    \label{fig:enter-label}
\end{figure}

\section{Marked graph states and graph moves}

In our construction of targeted transversal one-qubit Clifford gates, we rely heavily on the formalism of graph states and their local Clifford transformations. While locally-Clifford equivalent graph states are considered the same, fairly so, for our purposes we would like to track the impact of a local Clifford transformation on our original stabilizer code. Hence we extend our formalism to ``marked'' graph states, which additionally carry information about the local Clifford transformation from our original encoding tensor into the graph state stabilizer. Hence with this additional information we can pull-back local Clifford transformations of the graph states to local Clifford operations on our codes.

By way of example, let us consider the simple 2-qubit quantum repetition code. The stabilizer of the code is $\mathcal{S} = \left\langle ZZ \right\rangle$ with representative logical operations $\bar{X} = XX$ and $\bar{Z} = ZI$. The encoding tensor of this code is a stabilizer state on a three-qubit Hilbert space, with stabilizer $\mathcal{S}_{\text{enc}} = \langle ZZI, XXX, ZIZ \rangle$. In this case we identify the third qubit of our larger Hilbert space with the logical qubit of the code. Every stabilizer state is locally Clifford (LC) equivalent to a graph state, however not uniquely so. Each LC-equivalence class of graph states typically contains states for many different graphs. However it is unique if we specify the LC-transformation that takes our encoding tensor state into a graph state form. For example, consider reordering our stabilizer generators as $\mathcal{S}_{\text{enc}} = \langle XXX, ZZI, ZIZ \rangle$. Then applying Hadamard gates to the second and third qubits gives a graph state stabilizer $\langle XZZ, ZXI, ZIX \rangle$, which corresponds to graph state $\ket{\:\begin{tikzpicture}[scale=0.2]
    \draw (1,0.5) -- (0,0) -- (1,-0.5);
    \draw[fill=black] (0,0) circle(0.2);
    \draw[fill=black] (1,0.5) circle(0.2);
    \draw[fill=black] (1,-0.5) circle(0.2);
\end{tikzpicture}\:}$. Hence our marked graph state consists of a labeling of the nodes of graph $\begin{tikzpicture}[scale=0.2]
    \draw (1,0.5) -- (0,0) -- (1,-0.5);
    \draw[fill=black] (0,0) circle(0.2);
    \draw[fill=black] (1,0.5) circle(0.2);
    \draw[fill=black] (1,-0.5) circle(0.2);
\end{tikzpicture}$
according to which qubits it represents and what Clifford operation we used to reduce our stabilizer state to the graph's normal form. In this case, we see the marked graph state of our quantum repetition code in Figure~\ref{fig:211-overview}c.

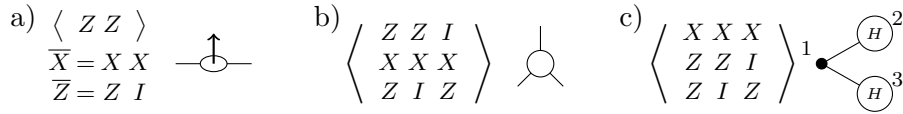
\begin{figure}[h]\centering
    \begin{tikzpicture}
    \draw (-3,0.6) node {a)};
        \draw (-2,0.5) node {\footnotesize $\left\langle\begin{array}{c@{\;}c} Z & Z\end{array}
        \right\rangle$};
        \draw (-2,-0.2) node {\footnotesize $\begin{array}{r@{\:}c@{\:}c@{\:}c} \overline{X} &=& X&X\\ \overline{Z} &=& Z&I\end{array}$};
        \draw (-1,0) -- (0,0);
        \draw[fill=white] (-0.5,0) ellipse (5pt and 3pt);
        \draw[->,thick] (-0.5,0) -- (-0.5,0.4);
    \draw (1,0.6) node {b)};
        \draw (2.2,0) node {\footnotesize $\left\langle\begin{array}{c@{\;}c@{\;}c} Z & Z & I\\ X & X & X\\ Z & I & Z\end{array}\right\rangle$};
        \draw (3.8,0) -- (3.5,-0.3);
        \draw (3.8,0) -- (4.1,-0.3);
        \draw (3.8,0) -- (3.8,0.5);
        \draw[fill=white] (3.8,0) circle (5pt);
    \draw (5,0.6) node {c)};
        \draw (6.2,0) node {\footnotesize $\left\langle\begin{array}{c@{\;}c@{\;}c} X & X & X\\ Z & Z & I\\ Z & I & Z\end{array}\right\rangle$};
        \draw (8.2,0.4) -- (7.5,0) -- (8.2,-0.4);
        \draw[fill=black] (7.5,0) circle (2pt);
        \draw (7.3,0.2) node {\scriptsize $1$};
        \draw (8.2,0.4) node[draw,circle,fill=white,inner sep=2pt] {\tiny $H$};
        \draw (8.5,0.6) node {\scriptsize $2$};
        \draw (8.2,-0.4) node[draw,circle,fill=white,inner sep=2pt] {\tiny $H$};
        \draw (8.5,-0.2) node {\scriptsize $3$};
    \end{tikzpicture}
    \caption{Views of the $[[2,1,1]]$ quantum repetition code. (a) The stabilizer and logical operations as a quantum code. (b) The stabilizer of the code's encoding tensor as a stabilizer state. (c) The same stabilizer (echelonized) as marked graph state.}
    \label{fig:211-overview}
\end{figure}

\bibliography{ref}
\bibliographystyle{unsrt}

\end{document}